\newtheorem{definition}{{Definition}}
\newtheorem{proposition}{\bf{Proposition}}
\newtheorem{theorem}{\bf{Theorem}}
\newtheorem{assumption}{Assumption}
\newtheorem{remark}{Remark}
\pgfplotsset{compat=1.16}
\pgfplotsset{compat=1.8, 
xticklabel style={/pgf/number format/fixed}}
\pgfplotsset{
  tick label style = {font=\Large},
  every axis label = {font=\sansmath\sffamily},
  legend style = {font=\Large},
  label style = {font=\sansmath\sffamily}
}
\begin{document}

\title{Residual-Based Detection of Attacks in Cyber-Physical Inverter-Based Microgrids}

\author{Andres Intriago,~\IEEEmembership{Student~Member,~IEEE}, Francesco Liberati,  {Nikos D. Hatziargyriou},~\IEEEmembership{Life Fellow, IEEE}, Charalambos Konstantinou,~\IEEEmembership{Senior~Member,~IEEE}
\thanks{Andres Intriago and Charalambos Konstantinou are with the Computer, Electrical and Mathematical Sciences and Engineering (CEMSE) Division, King Abdullah University of Science and Technology (KAUST), Thuwal 23955-6900, Saudi Arabia (e-mail: firstname.lastname@kaust.edu.sa).}
\thanks{Francesco Liberati is with the Department of Computer, Control and Management Engineering “Antonio Ruberti” (DIAG), University of Rome “La Sapienza”, Via Ariosto, 25, 00185 Rome, Italy.}
\thanks{Nikos D. Hatziargyriou is with the School of Electrical and Computer Engineering, National Technical University of Athens, 15773 Athens, Greece.}
}

\IEEEaftertitletext{\vspace{-2.5\baselineskip}}

\maketitle
\begin{abstract}

This paper discusses the challenges faced by cyber-physical microgrids (MGs) due to the inclusion of information and communication technologies in their already complex, multi-layered systems. The work identifies a research gap in modeling and analyzing stealthy intermittent integrity attacks in MGs, which are designed to maximize damage and cancel secondary control objectives. To address this, the paper proposes a nonlinear residual-based observer approach to detect and mitigate such attacks. In order to ensure a stable operation of the MG, the formulation then incorporates stability constraints along with the detection observer. The proposed design is validated through case studies on a MG benchmark with four distributed generators, demonstrating its effectiveness in detecting attacks while satisfying network and stability constraints. 
\end{abstract}

\begin{IEEEkeywords}
Cyber-physical microgrids, stealthy attacks, detection, nonlinear observer, stability.
\end{IEEEkeywords}

\IEEEpeerreviewmaketitle

\vspace{-2mm}
\section{Introduction}
\subsection{Background and Motivation}
Electric power systems are increasingly using more distributed energy resources (DERs)  \cite{zografopoulos2021}. These DERs are connected to the main grid through inverters, known as inverter-based resources (IBRs). 
IBRs can operate in grid-forming or grid-following mode. In grid-following mode, they cannot maintain system frequency and voltage when disconnected from the main grid. In contrast, in grid-forming mode, IBRs can maintain frequency and voltage within narrow limits, enabling an autonomous operation of the microgrids (MGs). Grid-forming converters can act as an ideal AC voltage source at the point of common coupling (PCC). They can minimize load interruptions during adverse events and provide black-start services during wide-area blackouts \cite{Jain_Blackstart2}.

In MG hierarchical control \cite{bidram2013, bidram2014}, primary control utilizes droop-based control and virtual impedance control to maintain MG frequency and voltage within acceptable limits, relying on millisecond cycle times. Secondary control compensates for residual deviations in frequency and voltage and operates with response times of milliseconds to seconds. Tertiary control coordinates the MG with the main grid, setting power output setpoints for each distributed generator (DG) based on electricity markets and tariffs. It operates with cycle times of minutes to hours. Automating primary control with local DG data is possible, but secondary and tertiary control require communication infrastructure, adding cost and complexity to the cyber-physical system (CPS). Cyber-attacks can target MG communication controls, making robust and resilient control policies necessary to mitigate disruption of the MG operating state \cite{Vasilakis2020,KURUVILA2021107150}.

\vspace{-2mm}
\subsection{Review of Previous Work}

Over the past ten years, a significant number of research articles have been written on detecting MG attacks, and various innovative initiatives have been carried out to showcase the advantages of IBR MGs in trial runs. After analyzing the landscape in the literature, we have identified three pillars that should be employed when reviewing this prior research. 

The first criterion involves the model utilized for the attack modeling due to the attack nature and based on which previous work can be broadly classified into two categories:

\textit{1.1)} Work focusing on denial-of-service (DoS) \cite{1430652}. Interference with data transmitted by sensors, actuators, and control systems can lead to DoS. Attackers may flood communication channels with random data or jamming them, preventing devices from transmitting and/or receiving information. This has been documented in several studies \cite{9206282, 8964591, 8673781}. 

\textit{1.2)} Work focusing on false data injection (deception/integrity) \cite{7001709}. Malicious adversaries can perform deception attacks by accessing and replacing real measurements in a system with fake information to cause harm. Such attacks are challenging to identify as attackers can quickly access data, but data veracity is hard to confirm \cite{Farraj2017, 10108042}. Intermittent injection of false data is an overlooked problem that can significantly affect the stealthiness of an attack and the energy consumed by the attacker. A self-generated approach in \cite{zhang2020false} generates specific false data to achieve perfect stealthiness. Notably, recent studies have shown that deception attacks are more challenging to identify than DoS attacks \cite{zografopoulos2021}. 

DoS and intermittent attacks can compromise information integrity and result in data loss. Intermittent attacks involve piece-wise signals and optimal energy management achieved through scheduling DoS attack time instants \cite{zhang2015optimal, ZhangK2022, Zhang2023}. Zero-dynamics, replay, jamming, and covert attack models are used to simulate stealthy and intermittent integrity attacks, becoming difficult to detect by anomaly detectors \cite{Keliris2013, 8662680,  7524930, 7172466}. In replay attacks, adversaries record data from compromised sensors during a steady state and replay it while concealing the attack effects \cite{6587520}.

The second criterion involves the approach adopted for attack detection. Based on this criterion, previous work can be broadly classified into two categories, consisting of model-based and data-driven techniques: 

\textit{2.1)} Work focusing on model-based approaches \cite{8579571, chen2017optimal, chen2017secure, 8680640, bernard2019observer, Zemouche2005, pajic2014robustness, moghadam2018resilient, lu2019observer, Rana2018, Aluko2021, Luo2018, Sahoo_1, Rajamani_2016, Rajamani_1998_tac, 9763858, Sahoo_2, corradini2017sliding}, which rely on a mathematical model of a system. Estimation methods are often employed to observe the state of the system and create analytical redundancy for detecting attacks. Common techniques include Kalman filtering methods, such as the Kalman filter (KF) \cite{8579571, chen2017optimal, chen2017secure} and the unscented KF (UKF)\cite{8680640}, as well as observer-based methods \cite{bernard2019observer, Zemouche2005, pajic2014robustness, moghadam2018resilient,  lu2019observer, Rana2018, Aluko2021, Luo2018, Sahoo_1, Rajamani_2016, Rajamani_1998_tac, 9763858} and sliding mode observation \cite{Sahoo_2, corradini2017sliding}.

In terms of KF methods, in \cite{chen2017optimal}, an optimal attack strategy is developed for a linear CPS using linear quadratic Gaussian (LQG) and KF to control and estimate the state. A distributed KF approach is introduced in \cite{chen2017secure} to address CPSs with bandwidth constraints. A resilient state estimator using $L_0$-norm-based state observer is presented in \cite{pajic2014robustness}. Distributed observer-based approach is proposed in \cite{moghadam2018resilient} for the mitigation of cyber-attacks on CPS sensors and actuators. An event-triggered and observer-based control frame for detecting DoS attacks in CPSs is developed in \cite{lu2019observer}. Finally, \cite{corradini2017sliding} introduces an attack isolation method based on sliding-mode technique for linear systems with unknown inputs.

Regarding observer-based detection techniques for MGs, Rana \textit{et al.} \cite{Rana2018} presented a KF-based observer to recover states 
corrupted by attacks or random noise in a MG. In \cite{Aluko2021}, an unknown input observer within the secondary frequency controller is developed to maintain MG stability under cyber-attacks. An observer-based algorithm that uses real-time phasor measurements to detect and mitigate attacks is presented in \cite{Luo2018}. Sahoo \textit{et al.} \cite{Sahoo_1} considered an adaptive and distributed nonlinear observer to detect, reconstruct, and mitigate false data injection attacks in MGs. 
Other studies focused on designing observers for Lipschitz nonlinear systems \cite{Rajamani_2016, Rajamani_1998_tac, 9763858}, or used sliding mode nonlinear observers \cite{Sahoo_2}.

\textit{2.2)} Work focusing on model-free, data-driven detection approaches \cite{al2020ensemble, huda2017defending, nguyen2011efficient, li2019online, hussain2020deep, Zografopoulos2022, forti2017distributed, waghmare2017data, wang2020multi}.
Data-driven methods use historical system data to detect attacks when a mathematical model is unavailable or contains parameter uncertainties. They commonly employ intelligent methods such as neural networks (NNs) \cite{al2020ensemble}, support vector machines (SVMs) \cite{huda2017defending}, and naive Bayesian classifiers \cite{nguyen2011efficient}. However, the accuracy of these methods depends on the quality of historical information, data noise, and various operating and attack conditions. In the following, we survey representative data-based works. 

In the context of intelligent techniques, Li \textit{et al.} proposed an NN-based approach that combines a physical model and a generative adversarial network to detect deviations in measurements \cite{li2019online}. In \cite{hussain2020deep}, a deep learning method using convolutional NNs is suggested to detect DoS in cellular networks caused by flooding, signaling, and silent calls. In \cite{Zografopoulos2022}, a method is proposed for constructing an attack detector in MGs by identifying the stable kernel representation in the absence of attacks. In \cite{forti2017distributed}, the authors introduced a Bayesian-based approach to attack detection that uses the hybrid Bernoulli random set method to jointly estimate states and detect attacks. In \cite{waghmare2017data}, a two-stage approach is presented for false data detection that reduces data dimensionality and detects malicious activities via an SVM classifier. In \cite{wang2020multi}, the authors developed a multi-agent supervised attack detection that employs SVM with a decision tree for each agent, with final decisions made through consensus among all agents.

The third criterion involves the system considerations of the modeling approach and subsequently the realistic operating conditions. Based on this criterion, previous work can be broadly classified into {two} categories:

\textit{3.1)} Work focusing on considering MG network models and corresponding operation constraints. Specifically, optimal power flow (OPF) methods have been widely applied for MG operation.
In \cite{levron2013optimal}, a grid-connected MG OPF method is presented that optimizes energy storage while treating renewable energy sources (RES) and loads as fixed power injections. In \cite{olivares2014centralized}, the authors propose a two-stage optimization method for energy management of islanded MGs involving a mixed integer programming unit commitment problem followed by an unbalanced three-phase OPF problem. Building on this work, \cite{lara2018robust} extends the first-stage unit commitment to a robust optimization problem to address uncertainties in RES. For distributed MGs, \cite{shi2014distributed} introduces a method that relaxes power flow constraints to reform the non-convex OPF model. The authors then transform the centralized optimization into a distributed one using predictor-corrector proximal multipliers. 

\textit{3.2)} Work focusing on stability-constrained network modeling in MGs. Stability is a significant concern in MGs, in addition to steady-state security constraints. This is mainly due to the intermittent nature and fluctuating output of RES, and the reduction of system inertia by DGs, which are interfaced by inverters, leading to lower resistance to disturbances in IBR-interfaced MGs. However, few OPF models in previous works have integrated stability constraints. A single machine equivalent method is used to obtain a bus equivalent rotor angular trajectory as a stability constraint in a trajectory-based transient stability-constrained OPF \cite{5357533}. Conventional power systems have studied voltage stability-constrained OPFs \cite{8279490, 8439024}. In MGs, a nonlinear optimization-based voltage stability-constrained OPF is created in \cite{8960513} 
to enhance multi-MG voltage stability. Furthermore, a small-signal stability-constrained OPF problem has been studied extensively in traditional power systems \cite{5593195}, and recently in MGs \cite{pullaguram2021small}.

\vspace{-3mm}
\subsection{Paper Motivation and Contributions}

The challenges towards the security of cyber-physical MGs have increased with the inclusion of more resources and services to the already complex, multi-layered MGs with information and communication technologies. Based on the review of previous work, we have identified a significant research gap that drives the motivation behind this paper: no previous work has modelled and analyzed stealthy intermittent integrity attacks, adopting an observer-based approach for detection and mitigation, and employing within MG modeling both network and stability operation constraints. In the context of filling this knowledge gap, the following contributions are achieved:

\textit{{(1)}} We analyze and formulate the state space dynamics of a generic DG in a MG  system and consider integrity attacks performed in an intermittent mode, i.e., intermittent integrity attacks. In contrast to the approach used for conventional continuous integrity attacks, intermittent integrity attacks follow a two-step process that involves creating a covert attack model and scheduling the activation and pause times for the attack. The attacker goal is to maximize potential damage in order to cancel secondary control objectives in removing deviations in both MG global frequency and local voltage.

\textit{{(2)}} We design a nonlinear observer to detect such stealthy attacks, and to mitigate their impact by addressing the corrupted signals with their corresponding estimation from the observer. Rather than designing the observer based on nonlinear change of coordinates and the use of ``canonical forms" (or ``normal forms", see, e.g., \cite{bernard2019observer}), the proposed design is based on simpler Luenberger observers, properly extended to nonlinear systems. Finding a globally defined change of coordinates to put the system into a normal form is a particularly challenging task for the system at study since it is a multiple-output system, with more outputs than inputs, and with a high number of state variables (as explained, e.g., in \cite[Chapter 4]{bernard2019observer}). Our work instead is a contribution in the lines of the papers presented in \cite{Rajamani_2016, Rajamani_1998_tac}, and more recently, in \cite{9763858}, on Luenberger-like observers for nonlinear systems. Differently from the above works, in the present paper, a nonlinear gain is considered in the design of the observer, which is a novelty that allows for greater flexibility in the proof of the convergence of the estimate, achieved via Lyapunov arguments. Also, a nonlinear gain allows us to better tailor the design of the observer to the specific dynamics of MG, reducing the number of conservative boundings in the proof (Theorem \ref{theorem_observer}).

\textit{{(3)}} We present the OPF problem for the considered MG and then identify the worst-case cyber-attack in the formulation of a bi-level optimization problem. The adversary adopts intermittent integrity attacks while satisfying the OPF of the system operation. Operators, to ensure a stable operation of the MG, react by incorporating both stability constraints as well as the design of the residual-based detection observer. These conditions ensure not only the stable and optimal operation of the MG but also guarantee the detection of cyber-attacks.

\textit{{(4)}} Case studies on a MG benchmark serve a twofold purpose: \textit{a)} validating that the proposed design successfully encapsulates the stealthy nature of attack design in terms of impact assessment, and \textit{b)}  demonstrating that the proposed observer-based method achieves reasonable detection accuracy with respect to other observer-based approaches while satisfying the network and stability constraints of the system.

\vspace{-3mm}
\subsection{Paper Organization}
The remainder of this paper is structured as follows. Section \ref{s:structure} presents the modeling of an islanded MG. Section \ref{s:attackformulation} presents the attack model formulation, with a discussion on the scheduling interval and stealthiness of the intermittent integrity attack. Section \ref{s:methodology} presents the formulation of the proposed observer-based mitigation method as well as the attack detectability analysis. Section \ref{s:bilevel} presents the stability-constrained mitigation formulation. Section \ref{s:result} presents simulation results, and Section \ref{s:conclusion} concludes the paper.

\vspace{-3mm}
\subsection{Notation}
$I_n$ and $0_n$ denote identity and zero matrices of dimension $n\times n$, respectively. $n_{\boldsymbol{v}}$ denotes the dimension of the generic vector $\boldsymbol{v}$. $[A,B]$ denotes the horizontal concatenation of matrices $A$ and $B$. $\operatorname{blockdiag}(A_1,...,A_n)$ denotes the block diagonal matrix having as diagonal blocks the matrices $A_1,...,A_n$. $A_{i,j}$ denotes the entry at row $i$ and column $j$ of matrix $A$. For a linear map $A: \mathcal{X} \longrightarrow \mathcal{Y}$, we define $\operatorname{ker}(A) \triangleq\{x \in \mathcal{X} \mid A x=0\}$ and $\operatorname{Im}(A) \triangleq\{A x \mid x \in \mathcal{X}\}$. Vector $e_i$ denotes the $i_{th}$ vector of the canonical base. $\operatorname{convh}(x, y)$ denotes the convex hull of the two vectors $\boldsymbol{x}$ and $\boldsymbol{y}$. $\|\boldsymbol{v}\|$ denotes the Euclidean norm ($l^2$) of vector $\boldsymbol{v}$, i.e., $\| \boldsymbol{v} \|  \triangleq  \sqrt{\sum_{i}v_i^2}$. $|a|$ denotes the absolute value of number $a$, or the cardinality of a set, depending on the context. 
\vspace{-2mm}
\section{Cyber-Physical Structure of\\ Islanded Microgrid}\label{s:structure}
\vspace{-1mm}
In this part, the mathematical model of the DG-based MG system, also instrumental for the design of the proposed nonlinear observer, is presented. The single-line diagram in Fig. \ref{fig1a} shows the islanded MG considered in this paper.

\begin{figure}[t]
    \centering
     \includegraphics[width=0.95\linewidth]{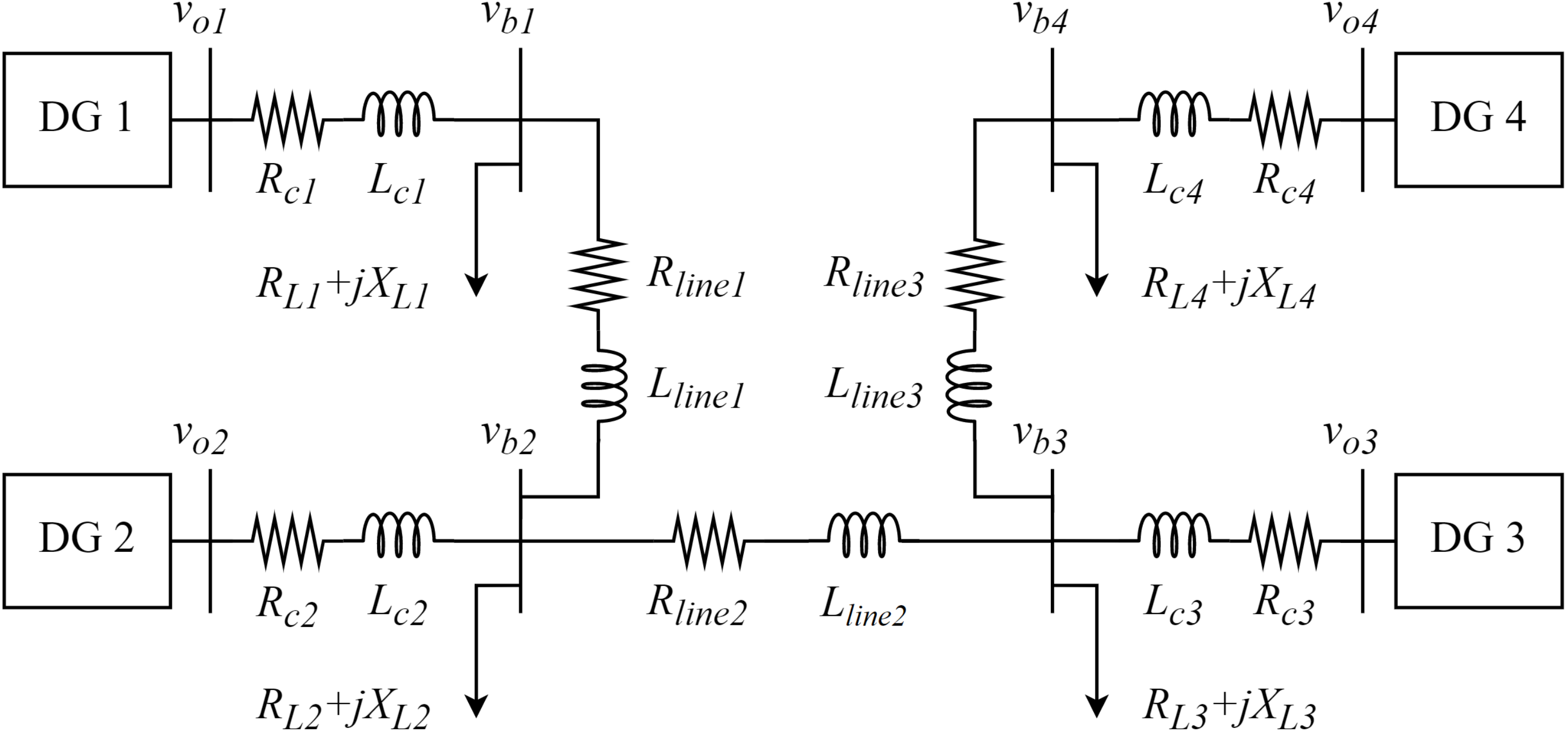}
     \caption{MG test system considered in this study \cite{bidram2013}.}
     \vspace{-3mm}
     \label{fig1a}
\end{figure}

\vspace{-2mm}
\subsection{Cyber Model of Islanded Microgrid}
An islanded or autonomous MG can be considered as a CPS with DGs, loads, and a communication network for monitoring and control. 
A directed graph (digraph) is used to describe the communication network of a multi-agent cooperative system. The digraph is modelled as 
$\mathcal{G}_{r}=(\mathrm{V}_{\mathcal{G}}, \mathrm{E}_{\mathcal{G}}, \mathrm{A}_{\mathcal{G}})$, where $\mathrm{V}_{\mathcal{G}}=\{\mathrm{v}_{1}, \mathrm{v}_{2}, \dots, \mathrm{v}_{N}\}$ is a nonempty finite collection of $N$ nodes, $\mathrm{E}_{\mathcal{G}} \subseteq \{\mathrm{V}_{\mathcal{G}}\times \mathrm{V}_{\mathcal{G}}\}$ is a set of edges, and $\mathrm{A}_{\mathcal{G}}=[\mathrm{a}_{ij}] \in \mathbb{R}^{N\times N}$  is the associated adjacency matrix. In the adopted model, the DGs are the nodes of the digraph, whereas the communication links are the edges of the communication network. For this study, it is assumed that the DGs communicate with each other through the digraph presented in the case study of \cite{bidram2013}, and the term $ \mathrm{A}_{\mathcal{G}}$ remains constant since the digraph is assumed to be invariant in time.

\vspace{-2mm}
\subsection{Physical Model of Islanded Microgrid}
{The AC microgrid has $\mathcal{N}=\{1,2, \ldots, n\}$ set of buses on the power distribution network, $\mathcal{L}=\{1,2, \ldots, l\} \subseteq \mathcal{N} \times \mathcal{N}$ set of distribution lines, and $\mathcal{G}=\left\{1,2, \ldots, n^{\mathrm{g}}\right\}$ set of inverters at every DG.} Each DG is considered to incorporate a primary and a secondary controller. {The droop control for the active ($P_i$) and the reactive power ($Q_i$) are described in \cite{bidram2013}.}

The synchronization goals for the secondary voltage and frequency controllers are to set the control input $V_{ni}$ such that the $d$-axis component of the voltage after the LC filter, $v_{odi}$, reaches the reference value for the secondary controller, $v_{ref}$, and to set the control input $\omega_{ni}$ to ensure that the frequency at the $i_{th}$ inverter, $\omega_{i}$, reaches  the reference frequency for the secondary controller,  $\omega_{ref}$. The voltage magnitude of the $i_{th}$ DG is $\|v_{o}\|= \sqrt{v_{odi}^2 + v_{oqi}^2}$. {The voltage-controlled voltage source inverter (VCVSI) DG is shown in Fig. \ref{fig-vcvsi}. The functionalities of different blocks are explained briefly below.

\begin{figure}[t]
    \centering
     \includegraphics[width=0.95\linewidth]{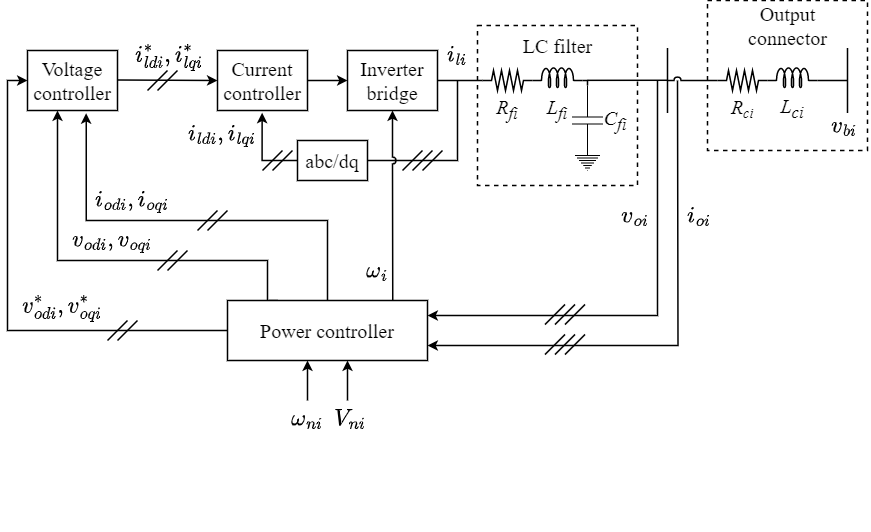}
     \vspace{-9mm}
     \caption{Voltage-controlled voltage source inverter (VCVSI) DG.}
     \vspace{-3mm}
     \label{fig-vcvsi}
\end{figure}

\noindent \underline{\textit{$\delta_{i}$ calculator:}}
{The angle of the $i_{th}$ inverter-based DGs with respect to a common reference, typically the inverter at bus 1 (i.e., $\omega_{com}=\omega_1$ ) is $\delta_{i}$, and satisfies $\dot{\delta}_i=\omega_i-\omega_{com}$. {The purpose of this angle is to transform the periodic and balanced three-phase signal $abc$ to a $dq$ reference frame, while ensuring that $v_{oqi}$ reaches zero.}} 

\noindent \underline{\textit{Power controller:}}
This block includes the droop controller and a low-pass filter to extract the fundamental components of $P_i$ and $Q_i$. The outputs of this block are $v^*_{odi}$, $v^*_{oqi}$, and $\omega_i$ \cite{bidram2013}. 

\noindent \underline{\textit{Voltage and current controller:}}
The dynamics of the voltage and current controllers can be found in \cite{bidram2013}. In both controllers, auxiliary state variables, $\phi_{di}$, $\phi_{qi}$ and $\gamma_{di}$, $\gamma_{qi}$, respectively, are introduced to formulate the complete state-space model, where $\Dot{\phi}_{di}= v^*_{odi} - v_{odi}$, $\Dot{\phi}_{qi}= v^*_{oqi} - v_{oqi}$, $\Dot{\gamma}_{di}= i^*_{ldi} - i_{ldi}$, and $\Dot{\gamma}_{qi}= i^*_{lqi} - i_{lqi}$.

\noindent \underline{\textit{LC filter and output connector:}}
The $d$-axis and $q$-axis equations for the LC filter and output connector are provided in \cite{bidram2013}, and are considered in the state-space model. 

{As for the distributed frequency and voltage controller design, the details can be found in \cite{bidram2014}. Overall, the objective of frequency control for VCVSIs is to achieve synchronization 
with the nominal frequency. In addition, it should distribute the output $P_i$ of VCVSIs based on their $P_i$ ratings. By taking the derivative of the frequency droop characteristic ($\omega_i=\omega_{n i}-m_{P i} P_i$), the rate of change of the primary control frequency reference $\omega_{ni}$ is expressed as $\dot{\omega}_{n i}=\dot{\omega}_i+m_{P i} \dot{P}_i=v_{f i}$, where $v_{fi}$ represents the auxiliary control input defined in (30) of \cite{bidram2014}. $\dot{\omega}_{n i}$ is utilized to calculate the control input $\omega_{ni}$ from $v_{fi}$, which involves integration of $v_{fi}$ over time. Similarly, the distributed voltage control mechanism selects appropriate control inputs $V_{ni}$ in the reactive droop ($|v_{o}|^*=V_{n i}-n_{Q i} Q_i$) to ensure synchronization of the voltage magnitudes $|v_{o}|$ of VCVSIs with the reference voltage $v_{{ref}}$. Additionally, it distributes the $Q_i$ of VCVSIs based on their $Q_i$ ratings. Synchronization of the voltage magnitudes $|v_{o}|$ is equivalent to synchronization of the direct term of output voltages $v_{odi}$. By differentiating the voltage droop characteristic, the rate of change of $V_{ni}$ is expressed as $\dot{V}_{n i}=\dot{v}_{o d i}+n_{Q i} \dot{Q}_i=v_{v i}$, where $v_{vi}$ denotes the auxiliary control input defined in (48) of \cite{bidram2014}. $\dot{V}_{n i}$ is utilized to calculate the control input $V_{ni}$ from $v_{vi}$ by integration over time. In our work, $\omega_{ni}$ and $V_{ni}$ will be considered as the outputs of the secondary controller, to follow the same structure as previous works, with $m_{P i}$  and $n_{Q i}$ being the active and reactive power droop coefficients that are selected based on
the $P_i$ and $Q_i$ ratings of each VCVSI.}

\vspace{-3mm}
\subsection{Full State Space Model}\label{ss:fullstatemodel}
The state space dynamics of the generic $i_{th}$ DG 
{over time $t$} are obtained by gathering and manipulating {Eqs. (1) - (20) of \cite{bidram2013}.}  The complete model can be written in matrix form as:
{
\begin{equation}\label{complete}
 \ W : \left\{
\begin{array}{ll}
       \dot{\boldsymbol{x}} = A\boldsymbol{x} + \boldsymbol{f(x)} + B\boldsymbol{u}\\
        \boldsymbol{y} = C\boldsymbol{x}
\end{array} 
\right. \
\end{equation}
}{where $\boldsymbol{x} \in \mathbb{R}^{n_x}$ is the state vector, $\boldsymbol{u} \in \mathbb{R}^{n_u}$ is the input vector generated by the power controller. The vector function $\boldsymbol{f(x)}$ $\in$ $\mathbb{R}^{n_x}$ captures the nonlinearity of the system 
and $\boldsymbol{y}$ $ \in \mathbb{R}^{n_y}$ is the vector of sensor measurements at the DG.}
The state vector {$\boldsymbol{x}$ with $n_x = 15$  in \eqref{complete}  is defined as}:
\begin{equation}\label{state_variables}
\begin{aligned}
\boldsymbol{x} = &[ P_{i}, Q_{i}, \phi_{di}, \phi_{qi}, \gamma_{di}, \gamma_{qi}, i_{ldi}, i_{lqi}, v_{odi}, v_{oqi},  \\
& i_{odi}, i_{oqi}, \delta_i , \omega_{ni} , V_{ni}]^{\top}.
\end{aligned}
\end{equation}

The input vector 
{$\boldsymbol{u}$, with $n_u=9$, in \eqref{complete} where $u_2$ and $u_3$ describes the voltage at the $i_{th}$ bus DG measured after the output connector (Fig. \ref{fig-vcvsi}) in terms of $d$ and $q$ axes components, respectively. It also includes in $u_5$ and $u_8$ the setpoints for the secondary controller $\omega_{ref}$ and $v_{ref}$, while the remaining elements $u_4$, $u_6$, $u_7$, and $u_9$ represent the data collected from the set of neighbouring/adjacent DGs in the MG $j \neq i$, i.e., $\omega_j$, $P_j$, $v_{odj}$, and $Q_j$, respectively, where $c_{fi}$ and $c_{vi}$ are control gains, and $g_i$ the pinning gain. It is defined as}:
\begin{equation}\label{u_vector}
\begin{aligned}
\boldsymbol{u} = &[ \omega_{com}, v_{bdi}, v_{bqi}, 
c_{fi}\sum_{j\in N_{j \neq i}}\mathrm{a}_{ij}\omega_j, 
c_{fi}g_i\omega_{ref}, \\
& c_{fi}\sum_{j\in N_{j \neq i}}\mathrm{a}_{ij}m_{P_j}P_j, 
c_{vi}\sum_{j \in N_{j \neq i}}\mathrm{a}_{ij}v_{odj}, \\
& c_{vi}g_iv_{ref}, c_{vi}g_iv_{ref}, 
c_{vi}\sum_{j \in N_{j \neq i}}\mathrm{a}_{ij}n_{Q_j}Q_j
]^{\top}.
\end{aligned}
\end{equation}
%

%
\vspace{-2mm}
{
\begin{remark}[Measured Variables]\label{RemarkMeasures}The DG measured variables are assumed to be the generated active and reactive powers, i.e., the state variables $x_1$ and $x_2$ (as mentioned, e.g., in \cite[see after Fig. 1]{bidram2013}), the inverter bridge current $i_l$, the output voltage $v_{oi}$, and the output current $i_{oi}$, and their direct and quadrature components, i.e., respectively, the state variables $x_{7}$, $x_{8}$, $x_{9}$, $x_{10}$, $x_{11}$, and $x_{12}$ (see, e.g., \cite{bidram2013} and \cite{Sahoo_1}); notice in fact from Fig. \ref{fig-vcvsi} that the measurement of those 
variables are available and fed-back to the current, 
voltage, and 
power controllers. The secondary control inputs $\omega_{ni}$ and $V_{ni}$ are known, since they are the inputs to the power controller (they are computed according to, respectively, (20) and (46) in \cite{bidram2013}). Finally, the DG operating frequency, $\omega_{i}$ is also measured (since it has to be communicated to the neighbour DGs for distributed secondary control purpose - see, e.g., \cite[eq. (30)]{bidram2014}. In conclusion, the state variables assumed measurable are: $x_{1}$, $x_{2}$, $x_{7}$, $x_{8}$, $x_{9}$, $x_{10}$, $x_{11}$, $x_{12}$, $x_{14}$, and $x_{15}$.
\end{remark}
}

The fifteen differential equations in \eqref{complete}, and the resulting matrices/vectors $A$, $B$, $C$, and $\boldsymbol{f(x)}$ are detailed in the following subsection \ref{ss:detailed-ss-model}. The model of \eqref{complete} is used for the design of the observer in Section \eqref{s:methodology}.

\vspace{-3mm}
\subsection{Detailed DG State Space Model} \label{ss:detailed-ss-model}
The state vector is defined in \eqref{state_variables}. The complete state space model of the generic $i_{th}$ DG is given in \eqref{complete} in matrix form. In scalar form, the fifteen differential state equations can be obtained by manipulating {(1) -- (20) of \cite{bidram2013}} as follows:
\begin{equation}\label{dotx1}
\small
\begin{aligned}
    \dot{x}_{1} = -\omega_{ci}x_1 + \omega_{ci}x_9x_{11} + \omega_{ci}x_{10}x_{12}
\end{aligned}
\end{equation}
\begin{equation}
\small
\begin{aligned}
    \dot{x}_{2} = -\omega_{ci}x_2 - \omega_{ci}x_9x_{12} + \omega_{ci}x_{10}x_{11}
\end{aligned}
\end{equation}
\begin{equation}
\small
\begin{aligned}
    \dot{x}_{3} = - n_{Qi}x_2 - x_9 + x_{15}
\end{aligned}
\end{equation}
\begin{equation}
\small
\begin{aligned}
    \dot{x}_{4} = -x_{10}
\end{aligned}
\end{equation}
\begin{equation}
\small
\begin{aligned}
    \dot{x}_{5} =& - K_{PVi}n_{Qi}x_2 + K_{IVi}x_3 - x_7 - K_{PVi}x_9 + \\ 
    & - \omega_bC_{fi}x_{10} + F_ix_{11} + K_{PVi}x_{15}
\end{aligned}
\end{equation}
\begin{equation}
\small
\begin{aligned}
    \dot{x}_{6} = K_{IVi}x_4 - x_8 + \omega_bC_{fi}x_{9} -  K_{PVi}x_{10} + F_ix_{12} 
\end{aligned}
\end{equation}
\begin{equation}
\small
\begin{aligned}
    \dot{x}_{7} = &\!-\frac{K_{PCi}K_{PVi}}{L_{fi}}n_{Qi}x_2 \!+\! \frac{K_{PCi}K_{IVi}}{L_{fi}}x_3 \!+\!\\
    & \!+\!\frac{K_{ICi}}{L_{fi}}x_5 \!-\! \Big( \frac{R_{fi}}{L_{fi}} \!+\! \frac{K_{PCi}}{L_{fi}} \Big)x_{7}  \!-\! \omega_bx_8 +\\
    &\!-\! \Big(\frac{1}{L_{fi}} \!+\! \frac{K_{PCi}K_{PVi}}{L_{fi}}\Big)x_{9} \! -\! \omega_b\frac{K_{PCi}}{L_{fi}}C_{fi}x_{10}\!+\\
    &+\! \frac{K_{PC_i}}{L_{fi}}F_ix_{11} \!+\! \frac{K_{PCi}K_{PVi}}{L_{fi}}x_{15} +\\
    & - m_{P_i}x_1x_8 + x_8x_{14} 
\end{aligned}
\end{equation}
\begin{equation}
\small
\begin{aligned}
    \dot{x}_{8} = &\frac{K_{PCi}}{L_{fi}}K_{IVi}x_4 + \frac{K_{ICi}}{L_{fi}}x_6 +  \omega_bx_7  +\\
    &- \Big( \frac{R_{fi}}{L_{fi}} + \frac{K_{PCi}}{L_{fi}} \Big)x_{8} + \omega_b\frac{K_{PCi}}{L_{fi}}C_{fi}x_9 +\\
    &- \Big(\frac{K_{PCi}K_{PVi}}{L_{fi}} + \frac{1}{L_{fi}}\Big)x_{10} + \frac{K_{PCi}}{L_{fi}}F_ix_{12} + \\
    & + m_{P_i}x_1x_7 - x_7x_{14} 
\end{aligned}
\end{equation}
\begin{equation}
\small
\begin{aligned}
    \dot{x}_{9} =  \frac{1}{C_{fi}}x_{7} - \frac{1}{C_{fi}}x_{11} - m_{P_i}x_1x_{10} + x_{10}x_{14} 
\end{aligned}
\end{equation}
\begin{equation}
\small
\begin{aligned}
    \dot{x}_{10} = \frac{1}{C_{fi}}x_{8} - \frac{1}{C_{fi}}x_{12} + m_{P_i}x_1x_{9} - x_{9}x_{14} 
\end{aligned}
\end{equation}
\begin{equation}
\small
\begin{aligned}
    \dot{x}_{11} = \frac{1}{L_{ci}}x_9 - \frac{R_{ci}}{L_{ci}}x_{11} - \frac{1}{L_{ci}}v_{bdi} - m_{P_i}x_1x_{12} + x_{12}x_{14} 
\end{aligned}
\end{equation}
\begin{equation}
\small
\begin{aligned}
    \dot{x}_{12} = \frac{1}{L_{ci}}x_{10} - \frac{R_{ci}}{L_{ci}}x_{12} - \frac{1}{L_{ci}}v_{bqi} + m_{P_i}x_1x_{11} - x_{11}x_{14} 
\end{aligned}
\end{equation}
\begin{equation}\label{dotx_13}
\small
\begin{aligned}
    \dot{x}_{13} = -m_{P_i}x_1 + x_{14} -\omega_{com}
\end{aligned}
\end{equation}
\begin{equation}
\small
\begin{aligned}
    \dot{x}_{14} = & \Big[m_{P_i}c_{fi}\Big(\sum_{j\in N_i}\mathrm{a}_{ij} + g_i\Big) -c_{fi}\Big(\sum_{j\in N_i}\mathrm{a}_{ij}m_{P_i}\Big)\Big]x_1 +\\
    &- c_{fi}\Big(\sum_{j\in N_i}\mathrm{a}_{ij} + g_i\Big)x_{14} +\\
    & + c_{fi}\sum_{j\in N_i}\mathrm{a}_{ij}\omega_j + c_{fi}g_i\omega_{ref} +\\
    & + c_{fi}\sum_{j\in N_i}\mathrm{a}_{ij}m_{P_j}P_j
\end{aligned}
\end{equation}
\begin{equation}\label{dotx15}
\small
\begin{aligned}
    \dot{x}_{15} =  & - c_{vi}n_{Q_i}\Big(\sum_{j\in N_i}\mathrm{a}_{ij}\Big)x_2 - c_{vi}\Big(\sum_j\mathrm{a}_{ij} + g_i\Big)x_{9} +\\
    & + c_{vi}\sum_j\mathrm{a}_{ij}v_{odj} + c_{vi}g_iv_{ref} +\\
    & + c_{vi}\sum_j\mathrm{a}_{ij}n_{Q_j}Q_j
\end{aligned}
\end{equation}

\normalsize

In the above equations, other parameters involve the nominal angular frequency of the MG, $\omega_b$, the cut-off frequency of the low-pass filter in the power controller, $\omega_{ci}$, gains of voltage and frequency control, $c_{vi}$ and $c_{fi}$, the current compensator $F_i$, and impedance characteristics of the LC filter and the output connector. The equations are compactly written in \eqref{complete} in matrix form, in which $A$, $\boldsymbol{f(x)}$, $B$, and $C$ can be inferred from inspection of the equations and are as follows.

The elements of matrix $A$ are: 
$A_{1,1} = A_{2,2} = -w_{ci}$, 
$A_{13,1} = -m_{P_i}$,
$A_{3,2}=-n_{Qi}$,
$A_{5,2} = - K_{PVi}n_{Qi}$,
$A_{5,3} = A_{6,4} = K_{IVi}$,
$A_{7,5} = A_{8,6} = \frac{K_{ICi}}{L_{fi}}$,
$A_{5,7} = A_{6,8} = A_{3,9} = A_{4,10} = -A_{13,14} = -A_{3,15} = -1$,
$A_{8,7} = -A_{7,8} = \omega_b$,
$A_{9,7} = A_{10,8} = -A_{9,11} = -A_{10,12} = \frac{1}{C_{fi}}$,
$A_{5,9} = A_{6,10} = -A_{5,15} = - K_{PVi}$,
$A_{6,9} = \omega_bC_{fi}$,
$A_{11,9} = A_{12,10} = \frac{1}{L_{ci}}$,
$A_{5,11} = A_{6,12} = F_i$,
$A_{7,11} = A_{8,12} = \frac{K_{PC_i}}{L_{fi}}F_i$,
$A_{11,11} = A_{12,12} = - \frac{R_{ci}}{L_{ci}}$,
$A_{14,1} = \Big[m_{P_i}c_{fi}\Big(\sum_{j\in N_i}\mathrm{a}_{ij} + g_i\Big) -c_{fi}\Big(\sum_{j\in N_i}\mathrm{a}_{ij}m_{P_i}\Big)\Big]$, 
$A_{7,7} = A_{8,8} = -( \frac{R_{fi}}{L_{fi}} \!+\! \frac{K_{PCi}}{L_{fi}} )$,
$A_{15,2} = - c_{vi}n_{Q_i}\Big(\sum_{j\in N_i}\mathrm{a}_{ij}\Big)$,
$A_{7,9} = A_{8,10} = - (\frac{1}{L_{fi}} + \frac{K_{PCi}K_{PVi}}{L_{fi}})$,
$A_{15,9} = - c_{vi}\Big(\sum_j\mathrm{a}_{ij} + g_i\Big)$,
$A_{7,2} = \frac{K_{PCi}K_{PVi}}{L_{fi}}n_{Qi}$,
$A_{7,3} = \frac{K_{PCi}K_{IVi}}{L_{fi}}$,
$A_{14,14} = - c_{fi}\Big(\sum_{j\in N_i}\mathrm{a}_{ij} + g_i\Big)$,
$A_{7,15} = \frac{K_{PCi}K_{PVi}}{L_{fi}}$,
$A_{7,10} = A_{8,9} = -\omega_b\frac{K_{PCi}}{L_{fi}}C_{fi}$, and all the other elements are equal to zero.

The elements of matrix $B$ are: $B_{11,2} = B_{12,3} = - \frac{1}{L_{ci}}$, $B_{13,1}= -1$, $B_{14,4}=B_{14,5}=B_{14,6}=B_{15,7}=B_{15,8}=B_{15,9}=1 $, and all the other elements are equal to zero.

{
In view of Remark \ref{RemarkMeasures}, matrix $C$ is as follows:\begin{align}\label{C}
    C~=~\operatorname{blockdiag}(I_2,[0_{6,4},I_6],[0_{2,1},I_2])
\end{align}  
}
\vspace{-1mm}
The elements of $\boldsymbol{f(x)}$ are: $f_{1} = \omega_{ci}x_9x_{11} + \omega_{ci}x_{10}x_{12}$, 
$f_2 = - \omega_{ci}x_9x_{12} + \omega_{ci}x_{10}x_{11}$, 
$f_7 = - m_{P_i}x_1x_8 + x_8x_{14}$ 
$f_8 = m_{P_i}x_1x_7 - x_7x_{14}$,
$f_9 = - m_{P_i}x_1x_{10} + x_{10}x_{14}$,
$f_{10} = m_{P_i}x_1x_{9} - x_{9}x_{14}$,
$f_{11} = - m_{P_i}x_1x_{12} + x_{12}x_{14}$,
$f_{12} = m_{P_i}x_1x_{11} - x_{11}x_{14}$, where the matrix $\boldsymbol{M}^{\mathrm{p}}$ is defined in  Appendix, and where each $m_{P_i}$ indicates the droop constant of the $i_{th}$ inverter, and all the other elements are zero.

\vspace{-2mm}
\section{Attack Model Formulation}\label{s:attackformulation}

The communication network connecting the participating DGs forming the MG can potentially expose the cyber-physical MG system to cyber-attacks \cite{Zografopoulos2022}. In literature, several procedures rely on the adversary's system knowledge and the skills to corrupt measurements from remote terminal units (RTUs). For example, in \cite{Farraj2017}, the consequences of false data on RTU measurements are examined at the distributed transient stability control schemes of a MG. Cyber-physical MGs include also smart inverters for renewable-based DERs grid-connection. Recent work has demonstrated that such firmware electronic devices can be tampered leading to system disruption and instabilities \cite{KURUVILA2021107150}.

\begin{figure}[t]
\centering
    \subfloat[]{
        \includegraphics[width=0.45\textwidth]{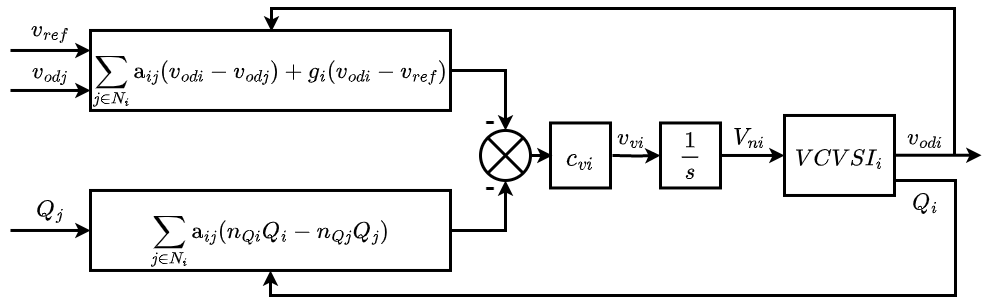}
        \label{fig:Attack_Voltage_Controller}
    } \\
    \subfloat[]{
        \includegraphics[width=0.45\textwidth]{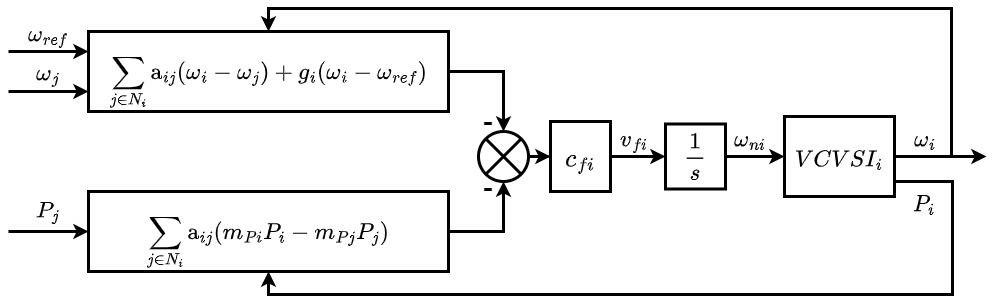}
        \label{fig:Attack_Frequency_Controller}
    } 
\vspace{-1mm}   
\caption[CR]{Distributed voltage (\subref{fig:Attack_Voltage_Controller}) and frequency (\subref{fig:Attack_Frequency_Controller}) controllers. } 
\vspace{-5mm}
\label{fig:Distributed_Voltage_Frequency_Disruption}
\end{figure}

In this paper, the threat model considers an adversary aiming to corrupt the operation of grid converters. The attack entry point could be direct, i.e., at the DG, or indirect, e.g., by modifying the firmware of the inverters. First, the attacker intends to undermine the communication links of the MG by infiltrating compromised data that can harm substantionally the voltage and frequency of all the inverters. The act of introducing malicious data to the inputs of each DG will affect the dynamics of the grid and hence, create instabilities in the MG. \emph{The aim of the attack is to maximize the effect of the disruption at the input control signals of the MG secondary controller, shown in Fig. \ref{fig:Distributed_Voltage_Frequency_Disruption}, and thus the states of the MG,  while remaining stealthy}.  
Specifically, the attacker would aim to modify control actions ${act}_{i} \in \mathcal{A}_i, \forall i \in  \mathcal{G}$, of inverter-based DG control under normal conditions  to actions ${act}^{'}_{i} \in \mathcal{A^\prime}_i, {act}^{'}_{i}\neq {act}_{i}, \forall i \in  \mathcal{G}$, that would harm DGs and/or MG/distribution and transmission systems. Following this modification, the production and consumption at every node are changed. Using this mechanism, the attacker will aim to coordinate actions of  compromised signals in the MG, i.e.,  $  \big\{ {act}^{'}_{it}: \prod_{i \in \mathcal{G}} \mathcal{S}_i \rightarrow \prod_{i \in \mathcal{G}} \mathcal{A^\prime}_i\big\}$, defined over a discrete  time horizon ($\mathcal{T} =\big[ 1, 2, \ldots,t, \ldots, N_T \big]$), i.e., $\forall t \in \mathcal{T}$, to maximize potential damage in order to cancel secondary control objective in removing deviations in both MG global frequency and local voltage, with the action space $\mathcal{A}$ representing  control actions  (e.g., secondary control signal disruption, etc.) that influence state space  $\mathcal{S}$ and regulate interactions of DGs with the MG system (e.g., withdraw/inject power from/to the system or idle). The attack model assumes that the attacker possesses some, but not necessarily exact, knowledge of the system design (e.g., topology, parameters of lines, buses, generators, etc), which can be obtained from publicly available data sets (e.g., ISO New England \cite{7039273} and WECC-240 \cite{6039476}).

Throughout this work, stealthiness is considered with respect to an anomaly detector $\mathcal{D}$, characterized by the residual, and a threshold $\eta$ for the identification of an anomaly. Current detection approaches within the communication of local controllers (LCs) (primary) and MG {centralized} controller (MGCC) ({secondary}) involve using residual-based methods \cite{Zografopoulos2022}. The residual is often determined from the difference between measurements ${y}$ and their estimated values $\hat{{y}}$. As an example, the  $\chi^2$-distribution with $n-m$ degrees of freedom and confidence interval can determine the threshold $\eta$ equal to $\sigma\surd{\chi_{n-m, \alpha}^{2}}$. In the event that $\| {y}-\hat{{y}} \|_{2} > \eta$, an alarm will be raised.

\begin{proposition}\label{Proposition-equivalent-system}
Let us consider the attacked system of Eq. \eqref{complete} becomes as follows based on the aforementioned attack model at which an attack starts at an unknown time $t=T_0$ (i.e., and hence $a(t) = 0$  for $0 \leq t<T_0$) targeting a subset of the input vector $u(t)$ from the neighboring DGs.
\begin{equation}{
\label{attack_subsystem-u}
\ W_{a} : \left\{
\begin{array}{ll}
       \boldsymbol{\dot{x}} = A\boldsymbol{x} + \boldsymbol{f(x)} + B\boldsymbol{u} + B\Gamma_{u}\boldsymbol{a_{u}} \\
       \boldsymbol{y}= C\boldsymbol{x}
      
\end{array} 
\right. \
}
\end{equation}
\noindent where $\Gamma_u \in \mathbb{B}^{n_u \times\left|K_u\right|}$, $\mathbb{B} \triangleq\{0,1\}$, is the binary incidence matrix mapping the attack signal along with the mapping of the disturbance vector to the respective channels. The attack signal, incorporating also the disturbance, is $\boldsymbol{a_{u}}=\left[a_{u, 1}(t), \ldots, a_{u,\left|K_u\right|}(t)\right]^T \in \mathbb{R}^{\left|K_u\right|}$, where $K_u \subseteq\left\{1, \ldots, n_u\right\}$ represents the disruption resources available to the attacker able to corrupt the neighboring inputs of the system. For each $i \in\left\{1, \ldots,\left|K_u\right|\right\}$, $a_{u, i}(t)=0$ for $t \in \mathbb{R}_{+}$ if no attack occurs on the $i$-th transmission channel of $\mathcal{A}_{a_{u_{i}}}$. In the specific model of our MG, $n_u = 9$ and {$\left|K_u\right| = 2$}.
\end{proposition}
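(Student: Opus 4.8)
\emph{Proof plan.} Since Proposition~\ref{Proposition-equivalent-system} is a modeling statement rather than an analytic one, the plan is to derive \eqref{attack_subsystem-u} from the nominal dynamics \eqref{complete} by a single additive substitution that encodes the integrity (deception) attack on the communication channels, and then to read off $|K_u|=2$ directly from the structure of $B$ and of the input vector \eqref{u_vector}. No Lyapunov or stability machinery is needed here: the content is entirely in (i) the affine dependence of \eqref{complete} on $\boldsymbol{u}$ and (ii) the image of the columns of $B$ associated with the attacked channels.

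First I would model the attack. Under the threat model the adversary does not alter the plant but corrupts the data that each DG receives from its neighbours over the digraph $\mathcal{G}_{r}$; an integrity attack on channel $k$ replaces the nominal value $u_k$ by $u_k+a_{u}$. Collecting the attacked channels into $K_u\subseteq\{1,\dots,n_u\}$ and letting $\Gamma_u\in\mathbb{B}^{n_u\times|K_u|}$ be the incidence matrix whose columns are the $\{e_k:k\in K_u\}$, the corrupted input is $\boldsymbol{u}+\Gamma_u\boldsymbol{a_u}$. Because \eqref{complete} is affine in $\boldsymbol{u}$, substituting this corrupted input gives $\dot{\boldsymbol{x}}=A\boldsymbol{x}+\boldsymbol{f(x)}+B(\boldsymbol{u}+\Gamma_u\boldsymbol{a_u})=A\boldsymbol{x}+\boldsymbol{f(x)}+B\boldsymbol{u}+B\Gamma_u\boldsymbol{a_u}$, which is exactly the first line of \eqref{attack_subsystem-u}; the output equation is unchanged since $C$ is not touched. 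The unknown onset is encoded by requiring $\boldsymbol{a_u}(t)=\boldsymbol{0}$ for $0\le t<T_0$, so that $W_a$ coincides with $W$ on $[0,T_0)$.

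It remains to justify $|K_u|=2$ for the MG at hand, which is the only non-routine step. The attacker's stated objective is to cancel the two secondary-control objectives — frequency and voltage synchronization — which live in the states $x_{14}=\omega_{ni}$ and $x_{15}=V_{ni}$; the only inputs carrying neighbour data into these states are, from \eqref{u_vector}, $u_4,u_6$ (frequency consensus) and $u_7,u_9$ (voltage consensus). The key observation is the column structure of $B$: by inspection $B_{14,4}=B_{14,5}=B_{14,6}=1$ and $B_{15,7}=B_{15,8}=B_{15,9}=1$, so columns $4,5,6$ of $B$ all equal $e_{14}$ and columns $7,8,9$ all equal $e_{15}$. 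Hence, whatever subset of the secondary-control channels $\{4,\dots,9\}$ is corrupted, one has $\operatorname{Im}(B\Gamma_u)\subseteq\operatorname{span}\{e_{14},e_{15}\}$, a two-dimensional subspace. Consequently every admissible attack acts on $\dot{\boldsymbol{x}}$ only through the two directions $e_{14}$ and $e_{15}$, and can be represented without loss of generality by two effective channels, e.g. $K_u=\{4,7\}$ with $B\Gamma_u=[e_{14},e_{15}]$, giving $|K_u|=2$. I expect this reduction to be the main obstacle: it is tempting to count the four neighbour-data channels $u_4,u_6,u_7,u_9$ and conclude $|K_u|=4$, but the rank-collapse of the corresponding $B$ columns onto $\operatorname{span}\{e_{14},e_{15}\}$ shows that two channels already span the entire reachable attack subspace, matching the two-fold (frequency/voltage) nature of the secondary layer.
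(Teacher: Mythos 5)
Your derivation is correct, but note that the paper itself offers no proof of Proposition~\ref{Proposition-equivalent-system}: it is presented as a modeling statement and the text immediately moves on, with $\left|K_u\right|=2$ asserted rather than derived (the experimental section later reveals the implicit rationale, namely that the attacker tampers exactly the two neighbour references $\omega_j$ and $v_{odj}$ feeding the frequency and voltage consensus terms). Your first step --- substituting the corrupted input $\boldsymbol{u}+\Gamma_u\boldsymbol{a_u}$ into the affine dynamics \eqref{complete} and using linearity to pull out the term $B\Gamma_u\boldsymbol{a_u}$ --- is exactly the intended reading and is unimpeachable. Your second step goes beyond the paper: you observe that columns $4,5,6$ of $B$ all equal $e_{14}$ and columns $7,8,9$ all equal $e_{15}$, so that $\operatorname{Im}(B\Gamma_u)\subseteq\operatorname{span}\{e_{14},e_{15}\}$ for any subset of the secondary-control channels, and hence two effective channels suffice without loss of generality. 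This is a genuine strengthening: the paper's $\left|K_u\right|=2$ is a scenario choice (two specific channels attacked), whereas you show that even an attacker corrupting all four neighbour-data channels $u_4,u_6,u_7,u_9$ of \eqref{u_vector} is dynamically equivalent to one acting through two channels. One small caveat: as the proposition defines it, $K_u$ counts the channels the attacker actually corrupts, so $\left|K_u\right|$ is strictly a property of the threat scenario rather than something provable; your rank-collapse argument establishes the weaker but more useful claim that $\left|K_u\right|=2$ entails no loss of generality in attack reachability. Stating that distinction explicitly would make the argument airtight.
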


Taking into account Proposition \ref{Proposition-equivalent-system} and the threat model described so far, we consider that the attacker has partial model knowledge ($A, B, \Gamma, C$), available disruption resources ($K$), and {overall some limited but yet sufficient knowledge about the internal system signals as well as a restrict understanding of the network topology} in order to create an adverse disturbance. The goal is to generate a stealthy intermittent integrity attack in which first the attack activation time is identified, and then, the integrity model of the attack is constructed. 

\vspace{-2mm}
\subsection{Scheduling Intervals for Intermittent Integrity Attacks}\label{s:scheduling_attacks}

The time $\forall t \in \mathcal{T}$ in which the attack signal will be activated or not is assumed to be determined by the adversary at the time instants $t_{1}, \dots, t_{Na}$ $\in \mathcal{T} $, where $N_{a}$ $\in$ $\mathbb{Z}^+$. Let us consider a specific time slot $k$ in which the signal is active for a time interval $\tau_{k}$, i.e., $0 < \tau_{k}  \leq t_{k + 1} - t_{k}$. Thus, the attack is inactive when $t \geq t_{k} + \tau_{k}$.
Consequently, the time interval where the attack signal will be activated ($\Theta^{ac}_{k}$) and deactivated ($\Theta^{de}_{k}$) for the $k$-th attack slot is the following:
\begin{equation}
\label{attack slot}
 \left\{
\begin{array}{ll}
       \Theta^{ac}_{k}, ~~~~  t_{k} \leq t < t_{k} + \tau_{k}\\
       \Theta^{de}_{k}, ~~~~  t_{k} + \tau_{k} \leq t < \infty
\end{array} 
\right. \
\end{equation}
\noindent The time interval where the $k$-th attack is implemented can be defined as $\Theta^{}_{k} = \Theta^{ac}_{k} \cup \Theta^{de}_{k}$, and an auxiliary time interval where the attack slot is conducted can be formulated as:
{
\begin{equation}
\label{auxiliary_time}
\left\{
\begin{array}{ll}
       \Theta_{k}, ~~~~  t_{k} \leq t < t_{k + 1}, \forall k \in \{1, \dots, N_{a} - 1\}\\
       \Theta_{N_{a}}, ~~ t_{N_{a}} \leq t < \infty
\end{array} 
\right. \
\end{equation}
}

\subsection{Attack Model for Stealthy Intermittent Integrity Attacks}\label{s:_attacks}

Given the attack interval where the signal is activated and deactivated $\Theta^{}_{k}$, the complete attack model for the $k$-th slot can be described as:
\begin{subequations}\label{attack model 1}
\begin{align}
    & \dot{\zeta_{k}}(t) = (A + B\Gamma_{u} Q_{k})\zeta_{k}(t) + B\Gamma_{u}L_{a}l(t) \\
    & \zeta_{k}(t_{k})= -\Delta z_{k} \label{dzk-initial}
\end{align}
\end{subequations}
\begin{equation}
\label{attack model 2}
\ a_{k}(t) = \left\{
\begin{array}{ll}
       \ Q_{k} \zeta_{k}(t) + L_{a}l(t), ~ \forall t \in \Theta^{ac}_{k} \\
       \ 0, ~~~~~~~~~   \forall t \in \Theta^{de}_{k}
\end{array} 
\right. \
\end{equation}
\begin{equation}
\label{attack model 3}
\ a(t) = \sum_{i=1}^{k} a_{i}(t), ~~   \forall t \in \Theta_{k}
\end{equation}

\noindent where the matrix $Q_{k}$ $\in \mathbb{R}^{(n_u + n_y)\times n_x}$ is generated using Theorem \ref{T-stealth}. The dimensions of the matrix $L_{a}$ depend on the orthonormal basis for the inverse map of Im($V_{a}$) in B$\Gamma_{u}$, and can be computed with Theorem \ref{T-stealth}. The vector $l(t)$ can be any signal with proper dimensions. $\Delta z_{k} \in \mathbb{R}^{n_x}$ shows the difference between the real values of the system state and the states known by the adversary. The initial conditions for $\Delta z_{k}$ are usually bounded, where the upper and lower bounds are not specific requirements that the adversary needs to compute to execute the disturbance in the system. Accordingly, the initial condition from \eqref{dzk-initial}, $\Delta z_{k}$, is not zero. This strategy provides realism to the model since the attacker is not required to know the exact value of each state. Furthermore, we consider time slots when the attack signal is deactivated, $t \geq t_{k} + \tau_{k} $, because we want to guarantee continuity in the output signal $y(t)$. The convergence of $a(t)$ is also guaranteed if the attack model is built as \eqref{attack slot}, while the design of $a_{k}(t)$ ensures that the generated attacks can pass through some statistical anomaly detectors  $\mathcal{D}$ \cite{Zografopoulos2022,  basseville1993detection}. It is clear that the stealthiness of $a(t)$ is based on $\Delta z_{k}$, which is required to be sufficiently small not to be perceived. $\Delta z_{k}$ is valid under Assumption \ref{Dzk-assumption}.

\begin{assumption}\label{Dzk-assumption}
There exists two scalars $c_{1}$ \textgreater \ 0 and $c_{2}$ \textgreater \ 0 such that the initial condition, described in \eqref{dzk-initial}, is bounded: 
{
\begin{equation}
\label{initial condition}
c_{1} \leq \|\Delta z_{k}\| \leq c_{2}, \forall k \in (1, \dots, N_{a})\\
\end{equation}
}where $c_{1}$ and $c_{2}$ are sufficiently small and not required to be known by the adversary.
\end{assumption}

\begin{remark}
The inferior bound for $\|\Delta z_{k}\|$ shows that the attacker is not supposed to know the exact value of the states vector because the measurements are usually affected by the noise that comes to the system. The superior bound, on the other hand, is chosen in such a way that it will not be distinguished by the detector $\mathcal{D}$. 
\end{remark}

In \cite{Zhang2023}, the input signal $u(t)$ and $\Delta z_{k}$ are taken into consideration to evaluate the stealthiness of the attacks. These two parameters are important to establish the convergence of the output $y(t)$ when the disruption starts. 
Following the structure in \cite{ZhangK2022}, we split the system $W$ in \eqref{complete} 
during the time interval $\Theta_{k}$ where the attack holds:
\begin{equation}
\label{Incremental system1}
\ W_{1} : \left\{
\begin{array}{ll}
       \dot{x}_{1}(t) = Ax_{1}(t), ~ \forall t \in \Theta_{k}\\
        y_{1}(t) = Cx_{1}(t), ~ \forall t \in \Theta_{k}
\end{array} 
\right. \
\end{equation}
\begin{equation}
\label{Incremental system2}
\ W_{2} : \left\{
\begin{array}{ll}
       \dot{x}_{2}(t) = Ax_{2}(t) + f(t,x_{2}) + Bu(t) ,  \forall t \in \Theta_{k}\\
        y_{2}(t) = Cx_{2}(t),  \forall t \in \Theta_{k}
\end{array} 
\right. \
\end{equation}
\noindent where $W_{1}$ and $W_{2}$ correspond to the case where the system is working under nominal conditions. The initial conditions are ${x}_{1}(t_{k}) = -\Delta z_{k}$ and ${x}_{2}(t_{k})= {x}(t_{k})$. The way the split is done satisfies the condition of ${x}(t)= {x}_{1}(t) + {x}_{2}(t)$. In the same way, ${y}(t)= {y}_{1}(t) + {y}_{2}(t)$. Following the same technique, we can split our system $W_{a}$ in \eqref{attack_subsystem-u}, when the attacker injects a malicious signal $a(t)$ into the communication 
of the DGs:
\begin{equation}
\label{Attack 1}
\ W_{1_{a}} : \left\{
\begin{array}{ll}
       \dot{x}_{1_{a}}(t) = Ax_{1_{a}}(t) + B\Gamma_{u}a(t)\\
        y_{1_{a}}(t) = Cx_{1_{a}}(t) + D'a(t)
\end{array} 
\right. \
\end{equation}
\begin{equation}
\label{Attack 2}
\ W_{2_{a}} : \left\{
\begin{array}{ll}
       \dot{x}_{2_{a}}(t) = Ax_{2_{a}}(t) + f(t,x) - f(t,x_{n}) + Bu(t) \\
        y_{2_{a}}(t) = Cx_{2_{a}}(t)
\end{array} 
\right. \
\end{equation}
where $x_{n}$ represents the state vector in nominal conditions and $D' = \left[0_{n_y \times\left|K_u\right|}, \Gamma_y\right]$, where $\Gamma_y \in \mathbb{B}^{n_y \times\left|K_y\right|}$.  $a(t) = \beta_{i}(t- T_{o}) a_{u}(t) $, where $\beta_{i}(t- T_{o})$ describes the attack function dynamics  and is given by:
\begin{equation}
\label{attack interval 2}
\ \beta_{i}(t- T_{o}) \equiv \left\{
\begin{array}{ll}
       0, ~~~~~~~~~~~~~~~~ \forall t \in \Theta^{de}_{k} \\
       1 - e^{-b_{i}(t-T_{o})},~\forall t \in \Theta^{ac}_{k},
\end{array} 
\right. \
\end{equation}
with parameter $b_{i} \in \mathbb{R}$ representing the attack evolution rate. The initial conditions are ${x}_{1}(t_{k})= -\Delta z_{k}$ and ${x}_{2}(t_{k})= {x}(t_{k}) -\Delta z_{k}$. The splitting only works if the summation of $W_{a_{1}}$ and $W_{a_{2}}$ yields $W_{a}$. The computation of the matrix parameter $Q_{k}$ and the generation of several subspaces are discussed below.

In order to calculate the largest controlled invariant subspace of $W_{1_{a}}$ under an attack signal $a(t)$, the nonlinear function $f(t,x)$ of $W_{2_{a}}$ in \eqref{Attack 2} needs to satisfy:
\begin{equation} 
\label{incremental_nonlinear}
f(t,x) = f(t,x_{n} + x_{2_{a}}) 
\end{equation}Using \eqref{incremental_nonlinear} and the mean value theorem in \cite{Zemouche2005,ZhangK2022}, the difference $\Delta f = f(t,x) - f(t,x_{n} + x_{2_{a}})$ can be described as $\Delta f = F(t, \rho)x_{1_{a}}(t)$, where $\rho = \phi(x, x_{n} + x_{2_{a}}) \triangleq [\rho_{1}, \dots, \rho_{n_{x}}$] $\in$ $\mathbb{R}^{n_x\times n_x}$ with $\rho_{i} \in \operatorname{convh}(x, x_{n} + x_{2_{a}}$) for $i= 1, \dots, n_x$, and ${F(t, \rho)}$ as:

\hspace{1cm}

${F(t, \rho)} = \begin{bmatrix}
\frac{\mathrm{\partial f_{1}}}{\mathrm{\partial}x_{1}}(t, \rho_{1}) & \dots  & \frac{\mathrm{\partial f_{1}}}{\mathrm{\partial}x_{n_{x}}}(t, \rho_{1})\\
\vdots & \ddots & \vdots\\
\frac{\mathrm{\partial f_{n_{x}}}}{\mathrm{\partial}x_{1}}(t, \rho_{n_{x}}) & \dots  & \frac{\mathrm{\partial f_{n_{x}}}}{\mathrm{\partial}x_{n_{x}}}(t, \rho_{n_{x}}) 
\end{bmatrix}$

\hspace{1cm}

\noindent The theory behind the decoupling of the largest controlled invariant subspace and $F(t, \rho)$ can be found in \cite{ZhangK2022}.

\begin{theorem}[Stealthiness] \label{T-stealth}
Let us consider there is a weakly unobservable subspace generated by the incremental system  $W_{a_{1}}$ in \eqref{Attack 1}, where this subspace can be represented as ${V}(W^{a}_{1}$), the unobservable subspace created by the pair $(C, A)$, denoted by $H \subset \mathbb{R}^{n_x}$, and the largest controlled invariant subspace of $W^{a}_{1}$ in $H$, denoted by $V(H)$.
\begin{equation}
\label{Va}
V_{a}= V(W_{a_{1}}) \cap V(H)\\
\end{equation}
Then, there is a matrix $Q_{k}$, such that:
\begin{equation}
\label{matrixQ}
(A + B\Gamma_{u}Q_{k})V_{a} \subset V_{a}\\
\end{equation}
\begin{equation} 
\label{zk}
\Delta z_{k} \in V_{a}, \forall k \in (1, \dots, N_{a})\\
\end{equation}
Finally, there is a matrix $L_{a}$ that represents the attack generator gain as follows:
\begin{equation}
\label{matrixLa}
\operatorname{Im}(L_a) = \operatorname{ker}(D') \cap (B\Gamma_{u})^{-1}V_{a}\\
\end{equation}
\end{theorem}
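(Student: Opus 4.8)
The plan is to prove the three claims in order using the geometric (Basile--Marro/Wonham) theory of controlled invariant subspaces, applied to the attacked incremental system $W_{a_1}$ of \eqref{Attack 1} with feedthrough $D'$. Everything rests on one structural fact: that $V_a$ in \eqref{Va} is an \emph{output-nulling controlled invariant} subspace for the pair $(A,B\Gamma_u)$ with output map $(C,D')$, i.e. that there exists a feedback (a ``friend'') $Q_k$ with $(A+B\Gamma_u Q_k)V_a\subseteq V_a$ and $(C+D'Q_k)V_a=0$ simultaneously.

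First I would recall that the weakly unobservable subspace $V(W_{a_1})$ is, by its definition as the largest set of initial states from which an input keeps $y_{1_a}(\cdot)\equiv 0$, the largest output-nulling controlled invariant subspace, and that $V(H)$, being the largest controlled invariant subspace inside the $(C,A)$-unobservable subspace $H$, is controlled invariant. The delicate point is that the intersection $V_a=V(W_{a_1})\cap V(H)$ need not inherit controlled invariance automatically. I would handle this by verifying the inclusion $\begin{bmatrix}A\\C\end{bmatrix}V_a\subseteq(V_a\times\{0\})+\operatorname{Im}\begin{bmatrix}B\Gamma_u\\D'\end{bmatrix}$ directly from the defining properties of the two factors, or equivalently by recasting $V_a$ as the largest output-nulling controlled invariant subspace contained in the single subspace $H$; either route avoids appealing to closure of the lattice under intersection. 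From that inclusion the standard existence lemma produces a friend $Q_k\in\mathbb{R}^{(n_u+n_y)\times n_x}$ establishing \eqref{matrixQ}, the codomain dimension reflecting the stacked input/output attack channels carried by $B\Gamma_u$ and $D'$.

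Next, \eqref{zk} is a design constraint rather than a derived identity. Since this MG has more measured outputs than attacked inputs, the attacked incremental dynamics possess a nontrivial zero-dynamics subspace, so $V_a\neq\{0\}$ and I can place the mismatch $\Delta z_k$ inside $V_a$ while respecting the smallness bounds of Assumption \ref{Dzk-assumption}. Initializing $\zeta_k(t_k)=-\Delta z_k\in V_a$ then keeps the whole attack trajectory in $V_a$ by the invariance just established. For the gain $L_a$ I would impose the two requirements that render the forcing term $B\Gamma_u L_a l(t)$ compatible with stealthiness: it must preserve $V_a$, i.e. $\operatorname{Im}(L_a)\subseteq(B\Gamma_u)^{-1}V_a$, and it must be invisible at the output, i.e. $D'L_a=0$, i.e. $\operatorname{Im}(L_a)\subseteq\operatorname{ker}(D')$. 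Choosing the columns of $L_a$ as an orthonormal basis of the intersection of these two subspaces yields exactly \eqref{matrixLa}.

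Finally I would close the argument by verifying stealthiness directly: for any $\zeta_k\in V_a$ and $a=Q_k\zeta_k+L_a l$, the output of $W_{a_1}$ equals $(C+D'Q_k)\zeta_k+D'L_a l$, whose first term vanishes because $V_a\subseteq\operatorname{ker}(C+D'Q_k)$ by the friend property and whose second term vanishes by construction of $L_a$; hence the residual never crosses $\eta$ and the attack evades $\mathcal{D}$. The main obstacle I anticipate is the step flagged above, namely showing that the intersection $V_a$ is itself an output-nulling controlled invariant subspace admitting a \emph{single} friend $Q_k$ that both keeps $V_a$ invariant and nulls the output on it; because the controlled-invariant lattice is not closed under intersection, this must be argued from the stacked-inclusion condition together with the mean-value reformulation $\Delta f=F(t,\rho)x_{1_a}$, which ensures the nonlinearity does not perturb the subspace along trajectories.
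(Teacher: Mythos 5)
The paper does not actually prove this theorem: its ``proof'' is a one-line citation to \cite{Zhang2023, ZhangK2022}, so you are not deviating from the paper's route --- you are reconstructing the geometric (output-nulling controlled invariant subspace) argument that those references use. Your outline of that machinery is the right one: identifying $V(W_{a_1})$ with the largest output-nulling controlled invariant subspace, obtaining the friend $Q_k$ from the stacked inclusion $\bigl[\begin{smallmatrix}A\\ C\end{smallmatrix}\bigr]V_a\subseteq (V_a\times\{0\})+\operatorname{Im}\bigl[\begin{smallmatrix}B\Gamma_u\\ D'\end{smallmatrix}\bigr]$, and reading \eqref{matrixLa} off the two requirements $B\Gamma_u\operatorname{Im}(L_a)\subseteq V_a$ and $D'L_a=0$ are all correct, and your closing verification that the output of $W_{1_a}$ vanishes on $V_a$ is exactly the stealthiness certificate the theorem is meant to deliver.

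There are, however, two genuine gaps. First, the step you yourself flag --- that the intersection $V_a=V(W_{a_1})\cap V(H)$ is output-nulling controlled invariant with a \emph{single} friend --- is the entire mathematical content of \eqref{matrixQ}, and you leave it as ``I would verify the inclusion directly.'' The lattice of controlled invariant subspaces is not closed under intersection, and your alternative recasting (``the largest output-nulling controlled invariant subspace contained in $H$'') defines a possibly different subspace from the one in \eqref{Va}; without either the explicit verification or a proof that the two definitions coincide here, the theorem is not proved. Second, your justification for $V_a\neq\{0\}$ (needed for \eqref{zk} to be satisfiable under Assumption \ref{Dzk-assumption}) is backwards: a surplus of measured outputs over attacked inputs makes the system ``tall'' and \emph{generically annihilates} the weakly unobservable subspace rather than guaranteeing a nontrivial one. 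Nontriviality of $V_a$ must instead be argued from the unobservable subspace $H$ of $(C,A)$ (here $C$ measures only $10$ of the $15$ states) and from the specific structure of $B\Gamma_u$ and $D'$. Finally, the appeal to the mean-value reformulation $\Delta f=F(t,\rho)x_{1_a}$ is extraneous for this theorem: $W_{1_a}$ in \eqref{Attack 1} is linear, the nonlinearity having been pushed entirely into $W_{2_a}$, so no argument about the nonlinearity ``not perturbing the subspace'' is required --- or available, since $V_a$ need not be invariant under the time-varying $A+F(t,\rho)$.
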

\begin{proof}
The proof for Theorem 1 can be found in \cite{Zhang2023, ZhangK2022}.
\end{proof}

\vspace{-3mm}
\subsection{{Numerical Simulation Example}}

This section provides an illustration of a numerical simulation using a linear time-invariant state space representation of \eqref{attack_subsystem-u}. The matrices of the system are specified as follows: 
\begin{center}
$\boldsymbol{A} = \begin{bmatrix}
 0 & 1\\
 -0.2 & -0.1
\end{bmatrix}, \ 
\boldsymbol{B} = \begin{bmatrix}
 0\\
 1
\end{bmatrix},  \
\boldsymbol{C} = \begin{bmatrix}
1 & 0\\
\end{bmatrix}$
\end{center}where the pair ($A, C$) is observable. 

\begin{figure}[t]
    \centering
        \includegraphics[width=0.3\textwidth]{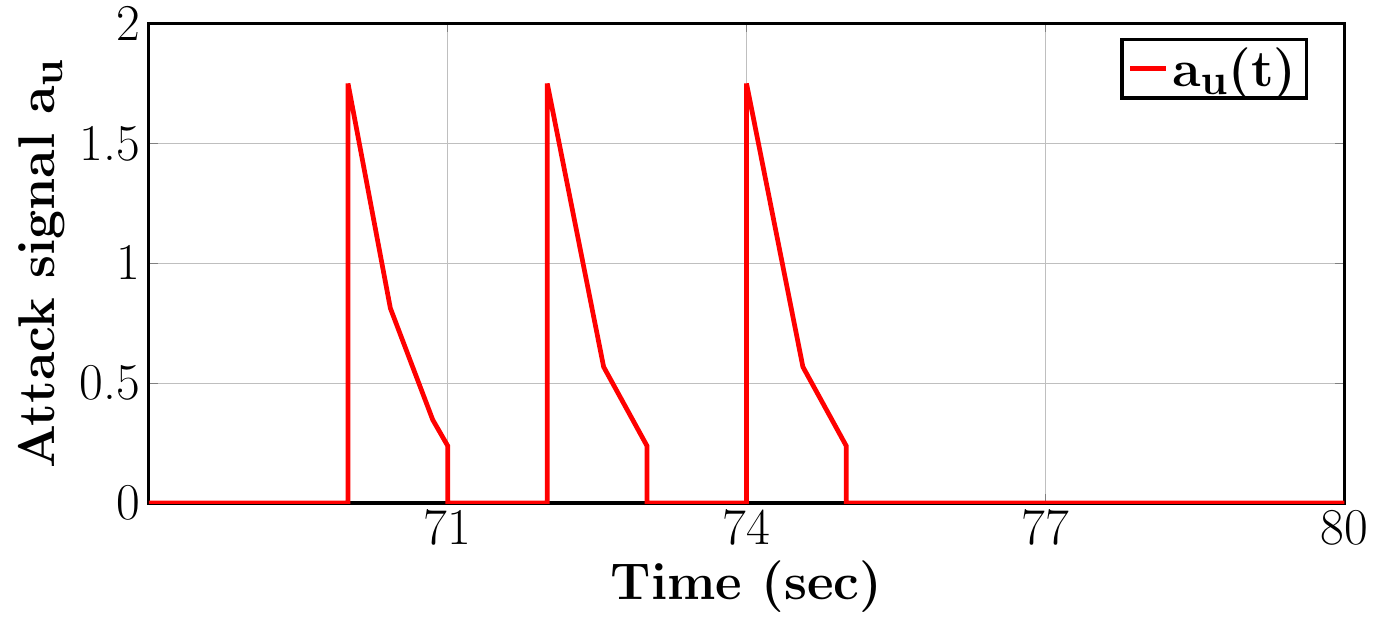}
        \vspace{-1mm}
    \caption{Time responses of the attack signal $a_{u}$(t).}
    \vspace{1mm}
    \label{input_attack_signal_car}
        \includegraphics[width=0.3\textwidth]{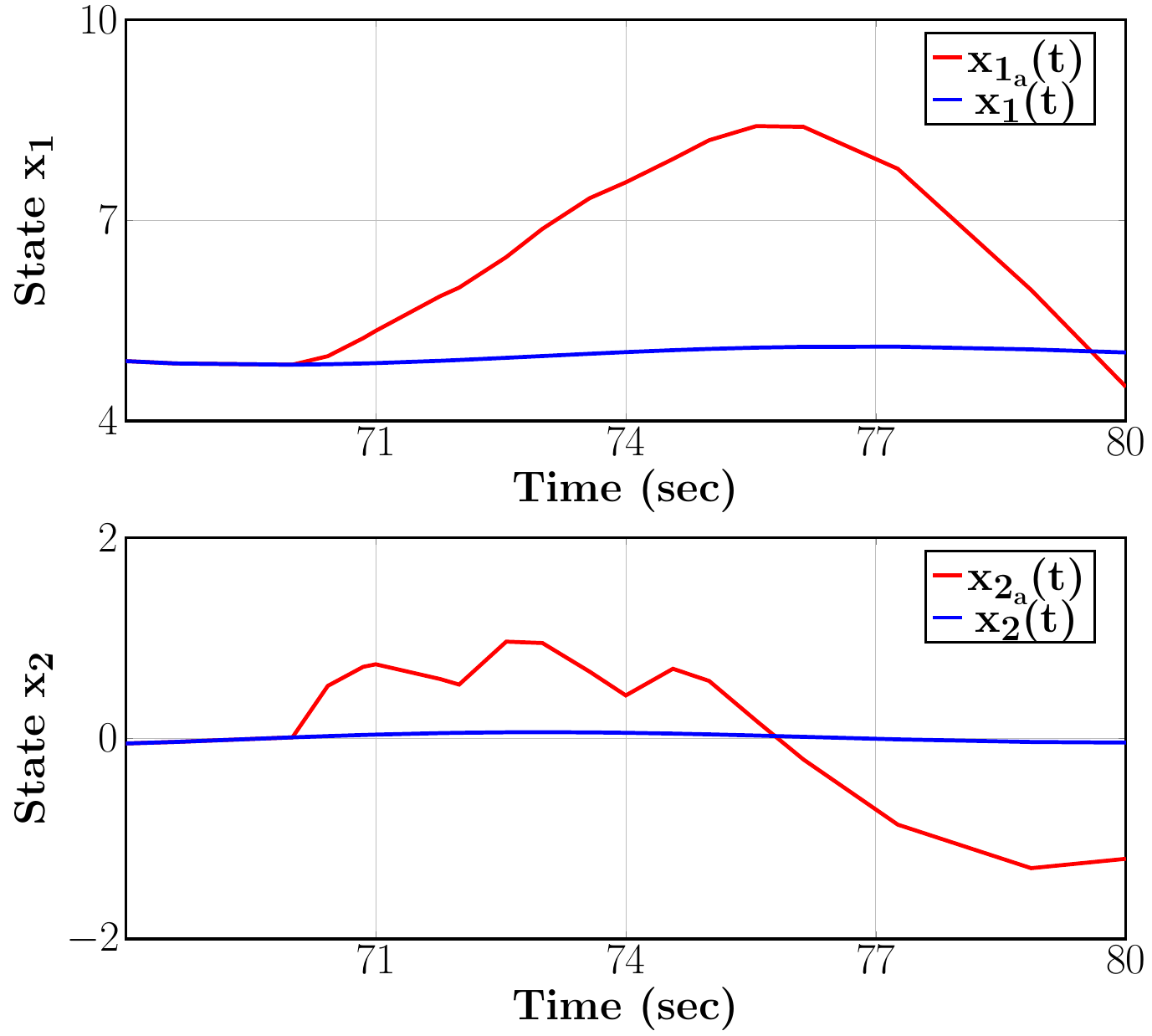}
        \vspace{-1mm}
    \caption{Time responses of the state vector $x$(t) under nominal $x_{n}(t)$ and attack $x_{n_{a}}(t)$ scenarios.}
    \label{states_attack_car}
    \vspace{-5mm}
\end{figure}

In order to demonstrate the nature of the stealthy intermittent integrity attack, we consider an adversary able to corrupt all the input-related disruption resources, i.e., $\Gamma_{u} = 1$. We also assume that the attack activating time instants $t_{k}$ are given first as follows: $t_{1} = 70 s$, $t_{2} = 72 s$, $t_{3} = 74 s$, with the same interval time being used when the attack signal is inactive, i.e., $\tau_{k} = 1 s$,  $\forall k$ $\in$ \{1, 2, 3\}. Following Theorem \ref{T-stealth} and utilizing the geometric approach toolbox \cite{geo-toolbox}, we can obtain the weakly unobservable subspace $\boldsymbol{V_{a}}$ and the matrix $\boldsymbol{Q_{k}}$ as follows:
\begin{center}
$\boldsymbol{V_{a}} = \begin{bmatrix}
-0.4472\\
0.8944
\end{bmatrix}, \ 
\boldsymbol{Q_{k}} = \begin{bmatrix}
 0.8 & -1.6\\
 0.4 & -0.8
\end{bmatrix}$
\end{center}The initial conditions $\Delta_{z_{k}}$ to generate the stealthy attack while satisfying \eqref{zk} are chosen as:
\begin{center}
$\boldsymbol{\Delta_{z_{1}}} = \begin{bmatrix}
-0.2236\\
0.4472
\end{bmatrix}, \ 
\boldsymbol{\Delta_{z_{2}}} = \begin{bmatrix}
 -0.1118\\
 0.2236 
\end{bmatrix}, \
\boldsymbol{\Delta_{z_{3}}}= \begin{bmatrix}
 -0.0559\\
 0.1118
\end{bmatrix}$
\end{center}The effects of the attack signal on the system are shown in Figs. \ref{input_attack_signal_car} and \ref{states_attack_car}. The attack signal $a_{u}$ introduced into the system input is intermittent and activated during specific time intervals. By comparing the system states at nominal and during attack scenarios, we can observe that, in this numerical example, the stealthy attack compromises the stability of the state vector. It can be observed from Fig. \ref{states_attack_car} that the system state $x(t)$ and its variation do not exhibit any abrupt changes during the attack pausing time instants.  Detecting this attack can be challenging, especially when measurement noises and disturbances are present in the system.

\vspace{-5mm}
\section{Residual-Based Detection Observer}\label{s:methodology}
{In this section, a framework is proposed to detect and mitigate the attacks previously introduced. The proposed detection architecture is illustrated in Fig. \ref{fig_det_arch}. An observer is deployed at each DG. On the left of the figure, the measurements transmitted from the connected DGs are represented. The detection method is residual-based: it compares the value of the measured variables, which is possibly affected by an attack, {with the estimation of the value that the variable would have in the absence of the attack}, as obtained through the use of the proposed observer. {In nominal conditions, the two values coincide. When an attack occurs, the error between the two increases, making it possible to detect the presence of the attack.}
The observer's inputs include the measured variables, $\boldsymbol{y}$, and the inputs of the DG, $\boldsymbol{u}$, {possibly subject to attacks.}}

Inspired by the works in  \cite{Rajamani_2016}, \cite{Rajamani_1998_tac}, 
which address also observers for systems having a structure like \eqref{complete}, this work proposes to use the following Luenberger-like observer: 
\begin{equation}\label{observer}
    \begin{cases}
          \dot{\hat{\boldsymbol{x}}} = A\hat{\boldsymbol{x}} + \boldsymbol{f}(\hat{\boldsymbol{x}},\boldsymbol{y}) + B\boldsymbol{u} + L(\hat{\boldsymbol{x}})(\boldsymbol{y} - \hat{\boldsymbol{y}})\\
      \hat{\boldsymbol{y}} = {C}\hat{\boldsymbol{x}},
    \end{cases}       
\end{equation}where $\boldsymbol{u}$ and $\boldsymbol{y}$ are, respectively, the inputs and the outputs of the DG, which are both input signals for the observer. Later on, in \ref{ss:detectability-subsection} we will present the nomenclature to distinguish the attack case from the normal operation. Notice that the observer is essentially a copy of the system, plus a correction term. Notice also that $f$ and $L$ in the observer may depend in general on the output of the system and on the state of the observer. Different observers arise depending on the choice of $f$ and $L$. 
Differently from the above-mentioned works \cite{Rajamani_2016}, \cite{Rajamani_1998_tac}, in this paper we consider a nonlinear gain $L(\hat{\boldsymbol{x}})$ in the correction term in \eqref{observer}, which will allow reducing the number of boundings done in the convergence proof for the observer, leading to less conservative results.
It is proved in the following that, with a particular choice of $L(\hat{\boldsymbol{x}})$, in the absence of attacks on the state and the input measurements, the state of the observer, $\hat{\boldsymbol{x}}$, tracks the state of the system, $\boldsymbol{x}$. This is used then to reveal the presence of attacks, by analyzing the residual vector $\boldsymbol{r} \triangleq \boldsymbol{y} - \hat{\boldsymbol{y}}$.

\begin{figure}[t]
     \includegraphics[width=0.95\linewidth]{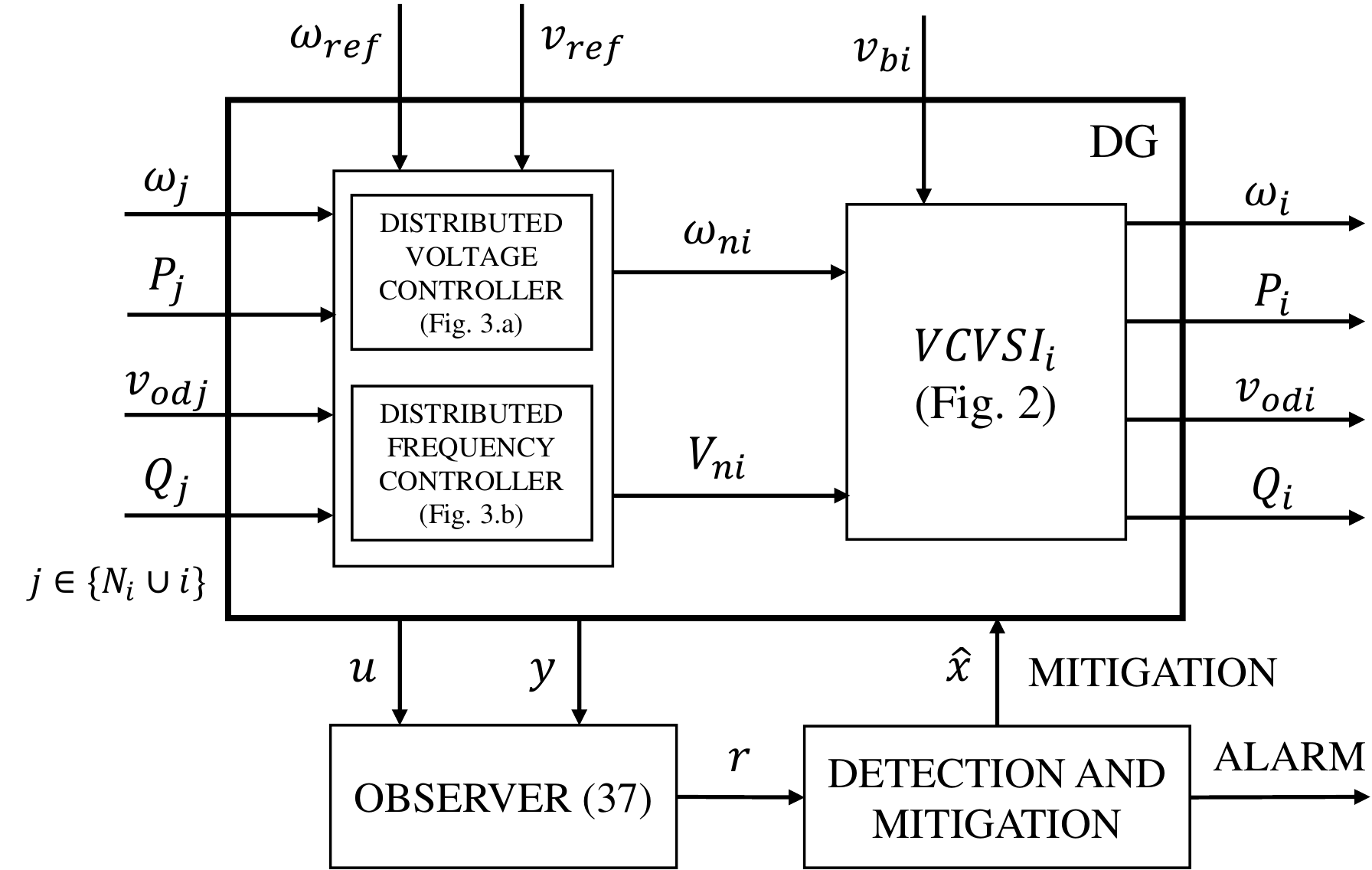}
     \vspace{-1mm}
     \caption{Proposed observer-based attack detection and mitigation architecture for the generic DG.}
     \label{fig_det_arch}
     \vspace{-5mm}
\end{figure}

\begin{theorem}[Convergence of the observer estimate]\label{theorem_observer}
There exists a choice of $L(\hat{\boldsymbol{x}})$ in \eqref{observer} such that the state $\hat{\boldsymbol{x}}$ of the observer \eqref{observer} globally asymptotically tracks the state $\boldsymbol{x}$ of system \eqref{complete} in absence of attacks.
\end{theorem}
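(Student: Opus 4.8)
The plan is to study the estimation error $\boldsymbol{e} \triangleq \boldsymbol{x} - \hat{\boldsymbol{x}}$ and to establish its global asymptotic convergence to the origin via a quadratic Lyapunov function, after exploiting the nonlinear gain to render the error dynamics linear. First I would subtract \eqref{observer} from \eqref{complete} and use $\boldsymbol{y}=C\boldsymbol{x}$ (no attack), obtaining
\begin{equation}\label{err-dyn}
\dot{\boldsymbol{e}} = \big(A - L(\hat{\boldsymbol{x}})C\big)\boldsymbol{e} + \big[\boldsymbol{f}(\boldsymbol{x}) - \boldsymbol{f}(\hat{\boldsymbol{x}},\boldsymbol{y})\big].
\end{equation}
The whole difficulty is to dispose of the nonlinear mismatch term without the conservative Lipschitz bounding used in \cite{Rajamani_2016,Rajamani_1998_tac}.

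The key structural observation is that, by the measurement choice recorded in Remark \ref{RemarkMeasures}, every nonzero component of $\boldsymbol{f}$ (namely $f_1,f_2,f_7,\dots,f_{12}$) is a product of \emph{measured} state variables only; none of the unmeasured states $x_3,x_4,x_5,x_6,x_{13}$ enters $\boldsymbol{f}$. Applying the mean value theorem exactly as in the attack analysis (cf. the matrix $F(t,\rho)$), I would write $\boldsymbol{f}(\boldsymbol{x}) - \boldsymbol{f}(\hat{\boldsymbol{x}},\boldsymbol{y}) = F(\boldsymbol{y})\boldsymbol{e}$, where $F(\boldsymbol{y})$ depends only on measured quantities (hence is known on-line) and, because $\partial \boldsymbol{f}/\partial x_j \equiv 0$ for every unmeasured index $j$, has nonzero columns only at the measured positions. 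Consequently $F(\boldsymbol{y})\boldsymbol{e}$ depends on $\boldsymbol{e}$ solely through $C\boldsymbol{e}$, i.e. $F(\boldsymbol{y})\boldsymbol{e} = \tilde{F}(\boldsymbol{y})\,C\boldsymbol{e}$ for a suitable $\tilde{F}(\boldsymbol{y})$. This is precisely what the nonlinear gain is meant to exploit: choosing $L(\hat{\boldsymbol{x}}) = L_0 + \tilde{F}(\boldsymbol{y})$ cancels the mismatch \emph{exactly}, reducing \eqref{err-dyn} to the linear time-invariant system $\dot{\boldsymbol{e}} = (A - L_0 C)\boldsymbol{e}$ and thereby removing the conservative boundings of the cited works.

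It then remains to pick a constant $L_0$ rendering $A - L_0 C$ Hurwitz; with $P \succ 0$ solving $(A-L_0C)^{\top}P + P(A-L_0C) = -Q$ for some $Q \succ 0$, the function $V(\boldsymbol{e}) = \boldsymbol{e}^{\top}P\boldsymbol{e}$ gives $\dot{V} = -\boldsymbol{e}^{\top}Q\boldsymbol{e} < 0$, yielding global exponential, hence asymptotic, convergence $\hat{\boldsymbol{x}} \to \boldsymbol{x}$. The main obstacle is guaranteeing that such an $L_0$ exists, i.e. detectability of $(A,C)$. Inspecting the dynamics, the states $x_3,x_4,x_5,x_6$ each feed a measured state ($x_7$ or $x_8$) and are therefore observable; the delicate mode is the angle state $x_{13}$, whose column in $A$ vanishes, making it an unobservable, marginally stable pure-integrator mode so that $(A,C)$ is not detectable in the strict sense. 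I would handle $x_{13}$ outside the Lyapunov argument, exploiting that $\dot{x}_{13}=-m_{P_i}x_1+x_{14}-\omega_{com}$ is a known function of measured states and inputs; its error is then forced only by the exponentially vanishing measured-state errors $e_1,e_{14}$, and establishing (or appropriately qualifying) the convergence of $e_{13}$ by direct integration, rather than by the Lyapunov inequality above, is the principal technical hurdle on which the global statement ultimately rests.
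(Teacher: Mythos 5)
Your proof is essentially sound, but it takes a genuinely different route from the paper's. The paper keeps $\boldsymbol{f}(\hat{\boldsymbol{x}})$ in the observer and restricts the gain to depend on $\hat{\boldsymbol{x}}$ only, writing $L=L'+L''(\hat{\boldsymbol{x}})$; using the exact identity $x_ix_j-\hat{x}_i\hat{x}_j=\xi_ix_j+\hat{x}_i\xi_j$ (no mean value theorem), it cancels with $L''$ only the $\hat{x}$-weighted halves of the cross terms and the $m_{P_i}$ couples, leaving residual terms of the form $\omega_{ci}\xi_1\xi_9x_{11}$ that it bounds via Young's inequality under an implicit compact-state assumption $x_i\le\bar{x}_i$, absorbing them into a diagonal matrix $D$ and choosing $L'$ so that $(A+D-L'C)$ is Hurwitz. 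You instead cancel the \emph{entire} mismatch exactly, reducing the error dynamics to an LTI system; this buys you a cleaner argument with no compact-set assumption (hence a genuinely global bound), whereas the paper's partial cancellation buys a gain that never touches $\boldsymbol{y}$. Your structural observation that every state entering $\boldsymbol{f}$ is measured is correct and is precisely the content of the paper's Remark~\ref{remark_obs1}.

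Two caveats. First, your gain $L_0+\tilde{F}$ necessarily depends on $\boldsymbol{y}$ (and on $\hat{\boldsymbol{x}}$, since the exact factorization carries coefficients such as $\hat{x}_9$), so it is not an $L(\hat{\boldsymbol{x}})$ as the theorem literally asserts; more importantly, in the attack-free case your observer's trajectories coincide with those of the output-injection observer of Remark~\ref{remark_obs1}, because $\boldsymbol{f}(\hat{\boldsymbol{x}})+\tilde{F}(\boldsymbol{y}-C\hat{\boldsymbol{x}})=\boldsymbol{f}(\boldsymbol{x})$. The paper deliberately avoids that design: its detectability analysis (Eq.~\eqref{dotra} versus Eq.~\eqref{dotraTheorem2}) shows that with linear residual dynamics any attack with $a_u\in\operatorname{ker}(CB_a)$ is perfectly undetectable, and the nonlinear, $\hat{\boldsymbol{x}}$-only gain is what keeps the residual coupled to the attacked state. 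So your proof establishes convergence, but for an observer the paper argues is inferior for the downstream detection task. Second, on $x_{13}$ you are actually more careful than the paper, which simply declares detectability ``after removing \eqref{dotx_13}''; note, though, that with row~13 of the gain set to zero one gets $\dot{e}_{13}=-m_{P_i}e_1+e_{14}$, so $e_{13}$ converges by integration to a constant that need not be zero, and the ``global asymptotic tracking'' claim strictly holds only for the remaining fourteen components. That loose end is shared by both proofs and should be stated as a qualification rather than left as a hurdle.
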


\begin{proof}
The result is proved by showing that there exists a choice of $L(\hat{\boldsymbol{x}})$ such that the dynamics of the {estimation} error $\boldsymbol{\xi}\triangleq \boldsymbol{x} - \hat{\boldsymbol{x}}$ is globally asymptotically stable (i.e., $\boldsymbol{\xi}$ converges to $\boldsymbol{0}$, for any initial condition).

The derivative of $\boldsymbol{\xi}$ is:\begin{equation}\label{dotxi}
 \dot{\boldsymbol{\xi}} = \dot{\boldsymbol{x}} - \dot{\hat{\boldsymbol{x}}} = A\boldsymbol{\xi} + \boldsymbol{f}(\boldsymbol{x}) - \boldsymbol{f}(\hat{\boldsymbol{x}}) -  L(\hat{\boldsymbol{x}}){C}\boldsymbol{\xi}.  
\end{equation}Without loss of generality, one can write the nonlinear gain term $L(\hat{\boldsymbol{x}})$ as $L(\hat{\boldsymbol{x}}) \triangleq L' + L''(\hat{\boldsymbol{x}})$, so that \eqref{dotxi} becomes
%
\begin{equation}
 \dot{\boldsymbol{\xi}} = (A-L'{C})\boldsymbol{\xi} + \boldsymbol{f}(\boldsymbol{x}) - \boldsymbol{f}(\hat{\boldsymbol{x}}) -  L''(\hat{\boldsymbol{x}}){C}\boldsymbol{\xi}.
\end{equation}
The stability of the dynamics of $\boldsymbol{\xi}$ can proved via classical Lyapunov arguments (see, e.g., \cite{khalil2002nonlinear} [Theorem 4.2]). {The proof is constructive, i.e., a particular choice for $L'$ and $L''$ will be derived, which results in a stable observer}. Take the positive definite {and radially unbounded} function $V(\boldsymbol{\xi})=\frac{1}{2}\boldsymbol{\xi}^T\boldsymbol{\xi}$. 
Its time derivative is:
\small
\begin{equation}\label{dotV}
\begin{aligned}
    \dot{V} = &\boldsymbol{\xi}^T\dot{\boldsymbol{\xi}} = \boldsymbol{\xi}^T(A-L'{C})\boldsymbol{\xi} + \boldsymbol{\xi}^T\big(\boldsymbol{f}(\boldsymbol{x}) - \boldsymbol{f}(\hat{\boldsymbol{x}})\big) - \boldsymbol{\xi}^T L''(\hat{\boldsymbol{x}}){C} \boldsymbol{\xi}.
\end{aligned}
\end{equation}
\normalsize
The result is proved if it can be shown that there exists a choice of {$L'$ and} $L''$ for which $\dot{V}$ is negative definite. 

{First, it can be verified that, after removing \eqref{dotx_13}, the couple $(A,C)$ is detectable, hence} the term $\boldsymbol{\xi}^T(A-L'{C})\boldsymbol{\xi}$ can be made negative-definite by a proper selection of $L'$ {(i.e., any which makes $(A-L'{C})$ Hurwitz)}. {This part is addressed later in the proof}. 

Focusing on the term ${N(\boldsymbol{\xi},\boldsymbol{x},\boldsymbol{\hat{x}}) \triangleq }\boldsymbol{\xi}^T\big(\boldsymbol{f}(\boldsymbol{x}) - \boldsymbol{f}(\hat{\boldsymbol{x}})\big) - \boldsymbol{\xi}^T L''{C}(\hat{\boldsymbol{x}}) \boldsymbol{\xi}$, the strategy of the proof will be first to select $L''$ so that the resulting terms in $-\boldsymbol{\xi}^T L''{C}(\hat{\boldsymbol{x}}) \boldsymbol{\xi}$ cancel as many terms as possible of $\boldsymbol{\xi}^T\big(\boldsymbol{f}(\boldsymbol{x}) - \boldsymbol{f}(\hat{\boldsymbol{x}})\big)$, and then to bound the contribution to $\dot{V}$ of any residual term in ${N(\boldsymbol{\xi},\boldsymbol{x},\boldsymbol{\hat{x}})}$ with a proper choice of $L'$.
{First of all,}
the non-zero components of $\boldsymbol{f}$ are (see Section \ref{ss:fullstatemodel}):
\begin{itemize}
    \item $f_1(x) = \omega_{ci}x_9x_{11} + \omega_{ci}x_{10}x_{12}$
    \item $f_2(x) = - \omega_{ci}x_9x_{12} + \omega_{ci}x_{10}x_{11}$
    \item $f_7(x) = - m_{P_i}x_1x_8 + x_8x_{14}$
    \item $f_8(x) = m_{P_i}x_1x_7 - x_7x_{14}$
    \item $f_9(x) = - m_{P_i}x_1x_{10} + x_{10}x_{14}$
    \item $f_{10}(x) = m_{P_i}x_1x_{9} - x_{9}x_{14}$
    \item $f_{11}(x) = - m_{P_i}x_1x_{12} + x_{12}x_{14}$
    \item $f_{12}(x) = m_{P_i}x_1x_{11} - x_{11}x_{14}$
\end{itemize}
By noticing that $x_ix_j - \hat{x}_i\hat{x}_j = \xi_i\xi_j +\xi_i\hat{x}_j + \hat{x}_i\xi_j = 
\xi_ix_j + \hat{x}_i\xi_j$, the term $\boldsymbol{\xi}^T\big(\boldsymbol{f}(\boldsymbol{x}) - \boldsymbol{f}(\hat{\boldsymbol{x}})\big)$ in \eqref{dotV} can be written, after simplifications, as:
\small
\begin{equation}\label{xiTf_exact_v1}
\begin{aligned}
&\boldsymbol{\xi}^T\big(\boldsymbol{f}(\boldsymbol{x}) - \boldsymbol{f}(\hat{\boldsymbol{x}})\big) =  \\
& \omega_{ci}\xi_1\xi_9x_{11}  + \omega_{ci}\xi_1\hat{x}_9\xi_{11}
+\omega_{ci}\xi_1\xi_{10}x_{12}  + \omega_{ci}\xi_1\hat{x}_{10}\xi_{12} + \\
&-\omega_{ci}\xi_2\xi_9x_{12} - \omega_{ci}\xi_2\hat{x}_9\xi_{12} + \omega_{ci}\xi_2\xi_{10}x_{11} + \omega_{ci}\xi_2\hat{x}_{10}\xi_{11}\\
& -m_{P_i}\xi_7\xi_1x_{8} + \xi_7\hat{x}_{8}\xi_{14} + \\
& + m_{P_i}\xi_8\xi_1x_{7} - \xi_8\hat{x}_{7}\xi_{14} + \\
& - m_{P_i}\xi_9\xi_1x_{10} + \xi_9\hat{x}_{10}\xi_{14} + \\
& + m_{P_i}\xi_{10}\xi_1x_{9} - \xi_{10}\hat{x}_{9}\xi_{14} + \\
& - m_{P_i}\xi_{11}\xi_1x_{12} + \xi_{11}\hat{x}_{12}\xi_{14} + \\
& + m_{P_i}\xi_{12}\xi_1x_{11} - \xi_{12}\hat{x}_{11}\xi_{14}.
\end{aligned}
\end{equation}
\normalsize
The choice of $L''$ to cancel as many terms in \eqref{xiTf_exact_v1} as possible can now be done in two steps. First, {a}  proper selection of elements in $L''$ allows to cancel all the entries in \eqref{xiTf_exact_v1} that explicitly depend on $\hat{\boldsymbol{x}}$ (i.e., $\omega_{ci}\xi_1\hat{x}_9\xi_{11}$, $\omega_{ci}\xi_1\hat{x}_{10}\xi_{12}$, etc.). 
{In fact, first of all, given the structure \eqref{C} of $C$ it is easy to see that the generic $i_{th}$ row of $L''C$ is given by: 
\small{\begin{equation}
[L''_{i1}, L''_{i2}, 0,0,0,0, L''_{i3}, L''_{i4}, L''_{i5}, L''_{i6}, L''_{i7}, L''_{i8},0, L''_{i9}, L''_{i10}]
\end{equation}
}}
{Hence}, the generic term $c\xi_i\hat{x}_j\xi_k$ in \eqref{xiTf_exact_v1} (with $c$ a constant and $i,j,k\in\{1,...,15\}$), can be cancelled by choosing the $(i,k)$ entry of $L''{C}$  as $[L''{C}]_{i,k} = -c\hat{x}_j$. Hence, one can take: 
$L''_{1,7} = -\omega_{ci}\hat{x}_9$,
$L''_{1,8} = -\omega_{ci}\hat{x}_{10}$,
$L''_{2,8} = \omega_{ci}\hat{x}_9$,
$L''_{2,7} = -\omega_{ci}\hat{x}_{10}$,
$L''_{7,9} = -\hat{x}_8$,
$L''_{8,9} = \hat{x}_7$,
$L''_{9,9} = -\hat{x}_{10}$,
$L''_{10,9} = \hat{x}_9$,
$L''_{11,9} = -\hat{x}_{12}$,
$L''_{12,9} = \hat{x}_{11}$.

Then, by a proper selection of {the remaining} $L''$ {entries}, it is also possible to cancel the six terms in \eqref{xiTf_exact_v1} which depend on parameter $m_{P_i}$. This is done by noticing that they can be divided into three couples with a similar structure: $- m_{P_i}\xi_7\xi_1x_{8} + m_{P_i}\xi_8\xi_1x_{7}$,
$- m_{P_i}\xi_9\xi_1x_{10} + m_{P_i}\xi_{10}\xi_1x_{9}$, and $- m_{P_i}\xi_{11}\xi_1x_{12} + m_{P_i}\xi_{12}\xi_1x_{11}$. The first couple can be eliminated by selecting $L''_{1,3} = m_{P_i}\hat{x}_{8}$ and $L''_{1,4} = -m_{P_i}\hat{x}_{7}$, the second couple by selecting 
$L''_{1,5} = m_{P_i}\hat{x}_{10}$ and $L''_{1,16} = -m_{P_i}\hat{x}_{9}$, 
and the third one by selecting 
$L''_{11,1} = m_{P_i}\hat{x}_{12}$ and $L''_{12,1} = -m_{P_i}\hat{x}_{11}$. 

With the above choices, the term $\boldsymbol{\xi}^T\big(\boldsymbol{f}(\boldsymbol{x}) - \boldsymbol{f}(\hat{\boldsymbol{x}})\big) + \boldsymbol{\xi}^T L''(\hat{\boldsymbol{x}}){C} \boldsymbol{\xi}$ in \eqref{dotV} reduces to:
\small{\begin{equation}\label{xiTf_exact1}
\begin{aligned}
&\boldsymbol{\xi}^T\big(\boldsymbol{f}(\boldsymbol{x}) - \boldsymbol{f}(\hat{\boldsymbol{x}})\big) + \boldsymbol{\xi}^T L''(\hat{\boldsymbol{x}}){C} \boldsymbol{\xi} =  \\
& \omega_{ci}\xi_1\xi_9x_{11} +\omega_{ci}\xi_1\xi_{10}x_{12}  -\omega_{ci}\xi_2\xi_9x_{12}  + \omega_{ci}\xi_2\xi_{10}x_{11}
\end{aligned}
\end{equation}}
\normalsize which consists of the sum of terms of the kind $cx_i\xi_j\xi_k$, with $c$ being a coefficient. The generic state component can be upper bounded as $x_i\leq \bar{x}_i$, since the state space is contained in a compact set. From the Young's inequality, it is $c\bar{x}_i\xi_j\xi_k\leq |c||\bar{x}_i|(\xi_j^2 + \xi_k^2)$. 
Hence, {with simple calculations,} \eqref{xiTf_exact1} can be upper-bounded as:
\begin{equation}\label{xiTf_bound}
\begin{aligned}
&\boldsymbol{\xi}^T\big(\boldsymbol{f}(\boldsymbol{x}) - \boldsymbol{f}(\hat{\boldsymbol{x}})\big) + \boldsymbol{\xi}^T L''(\hat{\boldsymbol{x}}){C} \boldsymbol{\xi} \leq \\
&\leq \omega_{ci}(\bar{x}_{11} + \bar{x}_{12})(\xi_1^2 + \xi_2^2 + \xi_9^2 + \xi_{10}^2) = \boldsymbol{\xi}^T D \boldsymbol{\xi},
\end{aligned}
\end{equation}
with $D$ a diagonal, positive semi-definite matrix, such that, $D_{i,i}=\omega_{ci}(\bar{x}_{11} + \bar{x}_{12})$ for $i=1,2,9,10$, and zero otherwise.
In conclusion, by plugging \eqref{xiTf_bound} in \eqref{dotV}, $\dot{V}$ can be bounded as:
\begin{equation}\label{dotV_bound}
    \dot{V} \leq  \boldsymbol{\xi}^T(A + D - L'{C}s)\boldsymbol{\xi}.
\end{equation}
All that is left is to select $L'$ in \eqref{dotV_bound} so that matrix $(A + D - L'C)$ is negative definite. Since $(A+D,C)$ is detectable (after removing \eqref{dotx_13},), this can be done by selecting (via a standard eigenvalue assignment problem) any $L'$ such that $(A + D - L'C)$ is Hurwitz. The eigenvalues of $(A + D - L'C)$ can be chosen to control the speed of convergence to zero of the estimation error.

{In conclusion, it has been proven that the observer \eqref{observer} with $L(\hat{\boldsymbol{x}}) \triangleq  L' + L''(\hat{\boldsymbol{x}})$, and any $L'$, $L''(\hat{\boldsymbol{x}})$ chosen as:
\begin{itemize}
\item $L''_{1,7} = -\omega_{ci}\hat{x}_9$,
$L''_{1,8} = -\omega_{ci}\hat{x}_{10}$,
$L''_{2,8} = \omega_{ci}\hat{x}_9$,
$L''_{2,7} = -\omega_{ci}\hat{x}_{10}$,
$L''_{7,9} = -\hat{x}_8$,
$L''_{8,9} = \hat{x}_7$,
$L''_{9,9} = -\hat{x}_{10}$,
$L''_{10,9} = \hat{x}_9$,
$L''_{11,9} = -\hat{x}_{12}$,
$L''_{12,9} = \hat{x}_{11}$,
and all the other entries of $L''(\hat{\boldsymbol{x}})$ equal to zero,
\item $L'$  such that $(A + D - L'C)$ is Hurwitz, with $D$ a diagonal matrix with entries $D_{i,i}=\omega_{ci}(\bar{x}_{11} + \bar{x}_{12})$ for $i=1,2,9,10$, and zero otherwise,
\end{itemize}
}
globally asymptotically reconstructs the state $x$.
\end{proof}
\noindent {The key steps in the proof are summarized in Fig. \ref{proof_steps}}.
\begin{figure}[t]
\centering
\includegraphics[width=0.5\textwidth]{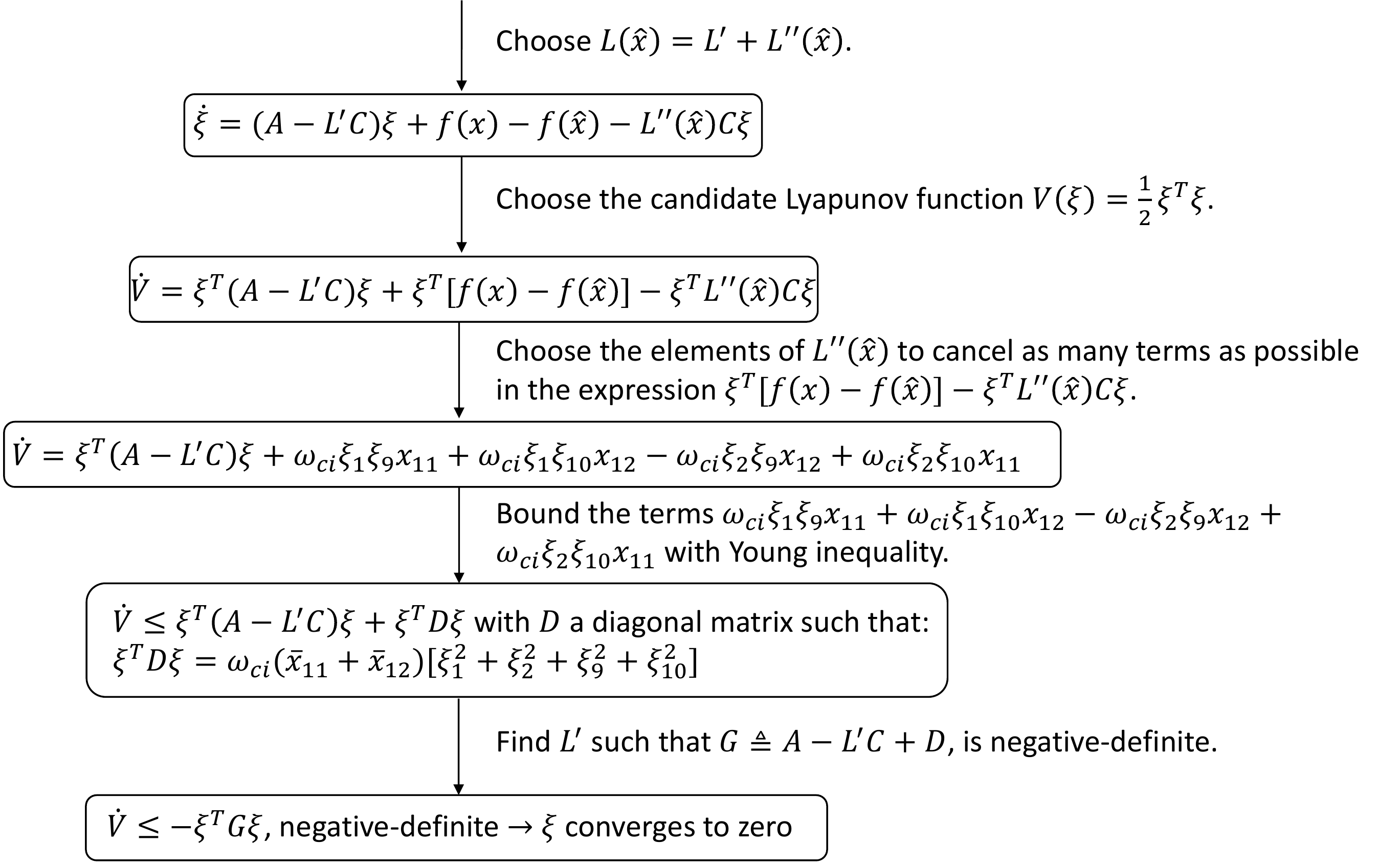}
\vspace{-2mm}
\caption{Main steps of the proof of Theorem \ref{theorem_observer}.}
\vspace{-5mm}
\label{proof_steps}
\end{figure}

The detection and mitigation strategy towards resilient MG operation is based on the analysis of the residual vector $\boldsymbol{r} \triangleq \boldsymbol{y} - \hat{\boldsymbol{y}}$, which represents the difference between the measurements, $\boldsymbol{y}$, affected by the attacks, and the estimated measurements, built on the basis of the estimated state, $\hat{\boldsymbol{x}}$, which tracks $\boldsymbol{x}$ when there are no attacks. A threshold-based detection scheme is adopted, in which an alarm is raised when $\|\boldsymbol{r}\|$ exceeds a pre-specified threshold $\eta$. 
When this occurs, the impacted state variables are replaced with their estimates from the observer. The same detection and mitigation strategy can be used on any other signal that is a function of the state variables. 
Finally, the following remark presents an alternative observer based on output injection \cite{bernard2019observer}. 

{
\begin{remark}[Output Injection Observer \cite{bernard2019observer}]\label{remark_obs1}
A first simpler and alternative observer design based on ``output injection'' can be made by noticing that all the state variables appearing in  $\boldsymbol{f}$ (see Section \ref{ss:fullstatemodel}) are actually measurable for our particular system (see Remark \ref{RemarkMeasures}), and thus  available to the observer from the output of the system. Hence, by following the simple design suggested, e.g., in \cite[Section 3.1.2]{bernard2019observer}, we can take in \eqref{observer} $\boldsymbol{f}(\hat{\boldsymbol{x}},\boldsymbol{y}) = \boldsymbol{f}(\boldsymbol{x})$, and $L(\hat{\boldsymbol{x}},\boldsymbol{y}) = L$ any constant matrix which makes $(A-LC) $ Hurwitz. As proved in \cite{bernard2019observer}, this simple design choice results in an observer which correctly estimates the DG state in normal operation (as it can be seen also from a simple adaptation of the proof of Theorem \ref{theorem_observer}).  
\end{remark}
}

\vspace{-4mm}
\subsection{Attack Detectability Analysis}\label{ss:detectability-subsection}

In this part, we present the attack detectability analysis for the two observers introduced previously and then we analyze the conditions for the aforementioned detection scheme. 
We start by analyzing the impact that the attack vector has on the dynamics of the states and on the residual variable. Let $x_n(t)$ denote the state of the system at time $t$ in normal conditions (i.e., when there is no attack according to \eqref{complete}), $x(t)$ denote the state of the system under attack (according to \eqref{attack_subsystem-u}), and $B\Gamma_{u}$ = $B_{a}$. We want to characterize the deviation in the state caused by the attack, i.e., the dynamics of the incremental state variables $x_a(t)\triangleq x(t)-x_n(t)$: 
\begin{equation}
\dot{x}_a = A{x}_a + f(x) - f(x_n)  + 
B_a{a_{u}}
\end{equation}

Moving now to the analysis of the impact on the residual, let $r_n(t)$ denote the residual at time $t$, in normal conditions, and denote with $r(t)$ the residual when the system is under attack. Define the incremental residual, i.e., the impact of the attack on the residual, as $r_a(t) \triangleq r(t)-r_n(t)=C(x(t) - \hat{x}(t)) - C(x_n(t) - \hat{x}_n(t))=C(\xi(t) - \xi_n(t)) \triangleq C\xi_a$, where $\xi$ and $\xi_n$ are, respectively, the estimation error in the attack and in the normal scenario (i.e., the deviation between the state of the system and the observer estimate).

Consider first the output injection observer of Remark \ref{remark_obs1}. In this case, simple calculations show that the increment to the detector state caused by the attack is governed by the following:
\begin{equation}
\dot{\hat{x}}_a = (A-LC)\hat{x}_a + f(x)-f(x_n) + LC(x-x_n).
\end{equation}
Also, the dynamics of $\xi_n(t)$ is:
\begin{equation}
\dot{\xi}_n \triangleq \dot{x}_n - \dot{\hat{x}}_n = (A - LC)\xi_n,
\end{equation}
and similarly the dynamics of $\xi(t)$ is:
\begin{equation}
\dot{\xi} \triangleq \dot{x} - \dot{\hat{x}}  = (A-LC)\xi  + B_a{a_{u}}
\end{equation}
From the above calculations, we can compute the dynamics of variable $r_a$, which characterizes the deviation that the attack causes  to the residual, fundamental to evaluate the detectability of the attacks.
\begin{equation}\label{dotra}
\dot{r}_a = C( \dot{\xi} - \dot{\xi}_n ) \triangleq C\dot{\xi}_a = C(A-LC)\xi_a  + 
CB_a{a_{u}}
\end{equation}
\begin{remark}[Undetectability condition for observer in Remark \ref{remark_obs1}]
From \eqref{dotra} it is seen that the attack is perfectly undetectable if $a_u\in \operatorname{ker}(CB_a)$, since then the residual does not depend on the attack. In particular, given the structure of \eqref{C}, it follows that any attack targeting only the state variables which are not measured is undetectable under the observer of Remark \ref{remark_obs1}. Hence, the use of this observer is not recommended in practice. 
Furthermore, even attacks such that $CB_a\neq 0$ could be easily made undetectable. As a matter of fact, dynamics \eqref{dotra} are linear, and the solutions to the equation $\dot{\xi}_a = (A-LC)\xi_a  + B_a{a_u}$ are also solutions to \eqref{dotra}. They are given by:
\begin{equation}\label{sol_xia}
\xi_a(t) = e^{(A-LC)t}\xi_a(0)  + \int_{0}^t e^{[A-LC](t-\tau)}B_a a_u(\tau)d\tau.
\end{equation}
Considering now that $\xi_a(0)=0$ (at the initial time of the attack the effect on the state is zero), and that $\dot{r}_a = C( \dot{\xi} - \dot{\xi}_n ) \triangleq C\dot{\xi}_a$, from \eqref{sol_xia} we find that if: 
\begin{equation}
\|C\int_{0}^t e^{[A-LC](t-\tau)}B_a a_u(\tau)d\tau\| \leq \tau
\end{equation}
then, the attack is not detected.
\end{remark}

Finally, if instead the proposed observer is used (Theorem \ref{theorem_observer}), the dynamics of the variation of the residual due to the attack are given by:
\begin{equation}\label{dotraTheorem2}
\begin{aligned}
\dot{r}_a := C\dot{\xi}_a & = C\{A\xi_a + [f(x)-f(x_n)] +\\
& -[f(\hat{x})-f(\hat{x}_n)] - L(\hat{x})C\xi_a + B_a a_u\}.
\end{aligned}
\end{equation}
Notice that 
\eqref{dotraTheorem2} is fundamentally different from \eqref{dotra}, since they are nonlinear and, beyond $\xi_a$, they include also $x$, $x_n$, $\hat{x}$, and $\hat{x}_n$. Thus, even when the attack satisfies $CB_a a_u = 0$ \cite{7001709}, the residual is affected by the impact of the attack on $x$. 

\subsubsection{Conditions for Attack Detection}

The residual scheme is designed and analyzed based on the derivation of a suitable observer which globally asymptotically tracks the state $\boldsymbol{x}$ of the system in the absence of attacks. An important related question is determining the class of attacks that can be detected. This part focuses on deriving the conditions for the aforementioned detection scheme. The analysis provides a theoretical result that characterizes quantitatively and implicitly the class of attacks detectable by the proposed scheme.

From \eqref{attack_subsystem-u}, and incorporating the attack function dynamics according \eqref{Attack 2} and \eqref{attack interval 2}, the state space model of each inverter can be described as: 
\begin{equation}
\small
\label{attack_subsystem-new}
\ \Tilde{W_{a}} : \left\{
\begin{array}{ll}
       \dot{x}(t) = k(x, u) + f(x(t)) + \zeta(t) \\
       + B_{a}\beta_{i}(t - To)a_{u}(t) \\
       y(t)= Cx(t)  
\end{array} 
\right. \
\end{equation} \normalsize where $k(x, u)$ represents the nominal function dynamics of each DG considering both the state and input vactors, and $\zeta(t) = E w(t)$ describes the known disturbances of the system according to \eqref{complete} and \eqref{attack_subsystem-u}. 
From \eqref{observer}, the state of the observer can be re-written as: 
\begin{equation} \label{obs}
\small
\begin{split}
\dot{\hat{x}}= k(\hat{x},u) + f(\hat{x}) + \Phi(\hat{x}, x)
\end{split}
\end{equation}\normalsize where $\Phi(\hat{x}, x)$ incorporates the nonlinear gain L($\hat{x}$) term from the Luenberger-like observer.

Following similar approaches in literature in which they use a filtering scheme to address unknown disturbances in nonlinear systems \cite{Keliris2013}, by filtering the output signal $y(t)$ of each DG, we can compute the filtered output $z(t)$:
\begin{equation} \label{output_signal_filtered}
\small
z(t) = H(s)[y(t)] = sH_{p}(s)[x(t)]
\end{equation} \normalsize where $H(s)$ and $H_{p}$ are asymptotically stable and, hence, BIBO stable. The conditions for selecting such filters can be found in \cite{Keliris2013}.
Now, by using $s[x(t)]= \dot{x}(t) + x(0)$, 
we obtain: 
\begin{equation} \label{output_signal_filter}
\small
\begin{split}
z(t) & = H_{p}(s)\left[\dot{x}(t)\right] + H_{p}(s)\left[x(0)\delta(t)\right] \\
 & = H_{p}(s)\left[k(x, u) + f(x) + B\Gamma_{u}a(t)\right] \\
 & + H_{p}(s)[\zeta(t)] + h_{p}(t)x(0)
\end{split}
\end{equation} \normalsize The initial conditions $x(0) = \hat{x}$(0) are considered to be known and the term $ h_{p}(t)$ describes a exponential decay \cite{Keliris2013}, converging to zero, eventually. Filtering the output of the observer, $\hat{z}$ can be described as:
\begin{equation} \label{observer_signal_filtered}
\small
\begin{split}
\hat{z}(t) & = H(s)[\hat{x}(t)] \\
 & = H_{p}(s)[k(\hat{x},u) + f(\hat{x}) + \Phi{\hat{x}, x}] + h_{p}(t)\hat{x}(0)\\
\end{split}
\end{equation} \normalsize



The residual error {$r(t)$} to detect the attack in each DG can be calculated from \eqref{output_signal_filtered} and \eqref{observer_signal_filtered}, and defined as: 
\begin{equation} \label{residual_z}
\small
\begin{split}
r(t) = z(t) - \hat{z}(t)
\end{split}
\end{equation} \normalsize The logic for attack detection is defined as:
\begin{equation}
\small 
\label{logic_threshold}
\left\{
\begin{array}{ll}
      |r(t)|\le \eta \rightarrow No \ attack \\
      |r(t)| > \eta \rightarrow Attack \ detected
\end{array} 
\right. 
\end{equation}
\normalsize Prior to the attack ($t <T_{o}$), the residual signal can be written using \eqref{output_signal_filtered} and \eqref{output_signal_filter}  as:
\begin{equation} \label{residual_general}
\small
\begin{split}
r(t) = H_{p}(s)[\chi(t)] + H_{p}(s)[\zeta(t)] = H_{p}(s)[\chi(t)] + \zeta(t)]
\end{split}
\end{equation}\normalsize where the term $\chi(t)$ incorporates the characteristics of the system and the observer, can be defined as:
\begin{equation} \label{system_observer}
\small
\begin{split}
\chi(t)= [k(x,u) - k(\hat{x}, u)] + [f(x) - f(\hat{x})] - \Phi{(\hat{x}, x)}
\end{split}
\end{equation}\normalsize Now, by taking bounds on the residual error, we obtain:
\begin{equation} \label{bound_residual}
\small
\begin{split}
r(t) & = |H_{p}(s)[\chi(t)] + H_{p}(s)[\zeta(t)]| \\
 & \le |H_{p}(s)[\chi(t)]| + |H_{p}(s)[\zeta(t)]| \\
 & = \bigg | \int_{0}^{t} h_{p} (t - \tau) \chi (\tau) \, d\tau \bigg | + \bigg | \int_{0}^{t} h_{p} (t - \tau) \zeta (\tau) \, d\tau \bigg | \\
 & \le \int_{0}^{t} | h_{p} (t - \tau) | | \chi (\tau) | \ d\tau + \int_{0}^{t} | h_{p} (t - \tau) | | \zeta (\tau) | \ d\tau \\
 & \le \int_{0}^{t} | h_{p} (t - \tau) | \chi (\tau) d\tau +  \int_{0}^{t} | h_{p} (t - \tau) | \zeta (\tau) d\tau
\end{split}
\end{equation} \normalsize A suitable threshold $\eta(t)$ is given as:
\begin{equation} \label{threshold_definition}
\small
\begin{split}
\eta(t) = \int_{0}^{t} \bar{h}_{p} (t - \tau)  \bar{\chi} (\tau) d\tau + \int_{0}^{t} \bar{h}_{p} (t - \tau)  \bar{\zeta} (\tau) d\tau
\end{split}
\end{equation} \normalsize Finally, using the approach in \cite{Keliris2013}, the {threshold can be implemented as}:
\begin{equation} \label{threshold}
\small
\begin{split}
\eta(t) = \bar{H}_{p}(s) \bar{\chi} (t) + \bar{H}_{p}(s) \bar{\zeta} (t) = \bar{H}_{p}(s) [\bar{\chi} (t) + \bar{\zeta} (t) ]
\end{split}
\end{equation}  \normalsize
\begin{theorem}[Attack Detectability] \label{T-detect}
Consider the nonlinear system described in \eqref{attack_subsystem-new}, with the residual-based observer detection scheme in \eqref{obs} with the conditions \eqref{output_signal_filtered}--\eqref{residual_z} and \eqref{threshold}. An attack introduced at $t= T_{o}$ is detectable if the attack function $a_{u}$(t) satisfies the following inequality:
\begin{equation} \label{detectability_condition}
\small
\begin{split}
\bigg | \int_{T_{o}}^{t} h_{p} (t - \tau) (1 - e^{-b_{i}(t-T_{o})})a_{u}(t) d\tau \bigg | > 2\eta(t)
\end{split}
\end{equation}
\normalsize 
\begin{proof} In the presence of an attack at $t= T_{o}$, \eqref{residual_general} becomes:
\begin{equation} \label{residual_proof1}
\small
\begin{split}
r(t) = H_{p}(s)[\chi (t) + \beta_{i}(t - To)a_{u}(t)] + H_{p}(s)[\zeta(t)]
\end{split}
\end{equation} \normalsize By using the triangle inequality, for $t >$ $T_{o}$, $r(t)$ is computed:
\begin{equation} \label{residual_detectability_bound}
\small
\begin{split}
r(t) & \geq - | H_{p}(s)[\chi(t)] | - |H_{p}(s)[\zeta(t)]| \\
 & + | H_{p}(s)[\beta_{i}(t - To)a_{u}(t)]| \\
 & \geq - \int_{0}^{t} | h_{p} (t - \tau) | | \chi (\tau) | \, d\tau  - \int_{0}^{t} | h_{p} (t - \tau) | | \zeta (\tau) | \\
 & + |H_{p}(s)[\beta_{i}(t - To)a_{u}(t)]| \\
 & \geq - \int_{0}^{t} \bar{h}_{p} (t - \tau) \bar{\chi} (\tau) d\tau - \int_{0}^{t} \bar{h}_{p} (t - \tau) \bar{\zeta} (\tau) d\tau \\
 & + |H_{p}(s)[\beta_{i}(t - To)a_{u}(t)]| \\
 & \geq - \eta(t) + |H_{p}(s)[\beta_{i}(t - To)a_{u}(t)]|
\end{split}
\end{equation}\normalsize The final attack detectability condition, can be computed as:
\begin{equation} \label{residual_proof2}
\small
\begin{split}
|H_{p}(s)[\beta_{i}(t - To)a_{u}(t)]| > 2\eta(t)
\end{split}
\end{equation}
\end{proof}

\end{theorem}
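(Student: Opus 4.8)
The plan is to track how the injected attack perturbs the filtered residual relative to its nominal (attack-free) value, and then to show that a large-enough attack contribution necessarily pushes the residual past the detection threshold. Before the attack ($t<T_o$), the residual is governed by \eqref{residual_general}, i.e., it depends only on the modeling/observer mismatch term $\chi(t)$ and the disturbance $\zeta(t)$; the threshold $\eta(t)$ in \eqref{threshold} is designed precisely to upper-bound the magnitude of this nominal residual through the bounding chain in \eqref{bound_residual}--\eqref{threshold_definition}. Hence the first step is to write the residual \emph{in the presence} of an attack starting at $t=T_o$. Since the attack enters as $\beta_i(t-T_o)a_u(t)$ and passes through the same stable filter $H_p(s)$, the filtered residual acquires the extra term $H_p(s)[\beta_i(t-T_o)a_u(t)]$, giving the expression \eqref{residual_proof1}.

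Next I would apply the reverse triangle inequality to separate the attack contribution (kept positive) from the nominal terms (appearing with a negative sign), obtaining the lower bound $r(t)\geq -|H_p(s)[\chi(t)]|-|H_p(s)[\zeta(t)]| + |H_p(s)[\beta_i(t-T_o)a_u(t)]|$. The two nominal terms are then dominated, one by one, using exactly the majorants $\bar h_p$, $\bar\chi$, $\bar\zeta$ already introduced, so that their sum is bounded above by $\eta(t)$. This reduces the lower bound to $r(t)\geq -\eta(t) + |H_p(s)[\beta_i(t-T_o)a_u(t)]|$, which is the crux of the argument and collapses the two nominal contributions into a single $-\eta(t)$ offset.

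Finally, to conclude detection, I would use the logic in \eqref{logic_threshold}: it suffices to guarantee $r(t)>\eta(t)$, which certainly implies $|r(t)|>\eta(t)$. From the lower bound above, $r(t)>\eta(t)$ holds whenever the attack contribution exceeds \emph{twice} the threshold, i.e., $|H_p(s)[\beta_i(t-T_o)a_u(t)]|>2\eta(t)$, since then $r(t) > -\eta(t) + 2\eta(t) = \eta(t)$. Expanding $\beta_i(t-T_o)=1-e^{-b_i(t-T_o)}$ on the active interval and noting that the attack contributes only for $\tau\geq T_o$ (it is zero beforehand), this is exactly the detectability condition \eqref{detectability_condition}.

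The main obstacle, and the place where care is required, is the factor of $2$ and the sign bookkeeping in the reverse triangle inequality. The factor of $2$ is unavoidable because the attack term must not only clear the threshold but also absorb the worst-case negative offset $-\eta(t)$ coming from the nominal mismatch and disturbance; getting this right hinges on verifying that $\eta(t)$ in \eqref{threshold} is a genuine majorant of the nominal filtered residual, which in turn relies on the BIBO stability of $H_p(s)$ and on $\bar\chi$, $\bar\zeta$, $\bar h_p$ being valid upper bounds. It is also worth emphasizing that the resulting inequality is a \emph{sufficient} (and deliberately conservative) condition for detection rather than a necessary one, so the argument establishes implication in one direction only.
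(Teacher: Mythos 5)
Your proposal is correct and follows essentially the same route as the paper's proof: the same residual expression under attack, the same reverse triangle inequality to isolate the attack contribution, the same majorization of the nominal terms by $\eta(t)$, and the same factor-of-$2$ conclusion. Your explicit justification of why the bound must be $2\eta(t)$ (clearing the threshold while absorbing the worst-case $-\eta(t)$ offset) is a welcome clarification of a step the paper leaves implicit.
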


\normalsize 

\vspace{-5mm}
\section{Bi-Level Stability-Constrained Attack-Mitigation Formulation}\label{s:bilevel}
In this section, we first present the OPF problem for AC MGs and then identify the worst-case cyber-attack in a formulation of a bi-level optimization problem. The attacker follows the threat model of Section \ref{s:attackformulation} while satisfying the OPF of the system operation.  Operators, in order to ensure a stable operation of the MG, react by incorporating both stability constraints as well as the design of the detection observer. These conditions ensure not only the MG stable and  optimal operation but also guarantee the detection of attacks. Fig. \ref{fig:bi-level-fig} presents the concept of the formulation.

\begin{figure}[t]
\centering
\includegraphics[width=2.5in]{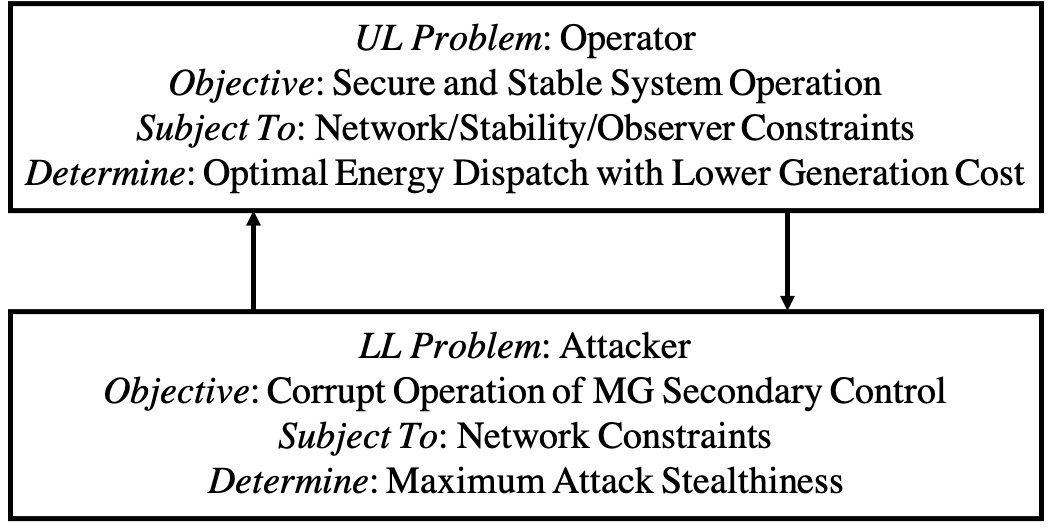}
\caption{Bi-level formulation that considers attacker and operator objectives.}
\vspace{-5mm}
\label{fig:bi-level-fig}
\end{figure}

\vspace{-3mm}
\subsection{Optimal Power Flow in AC Microgrids}

The injected active and reactive power at every inverter-based DG are $\boldsymbol{p}^g \in$ $\mathbb{R}^{n^{\mathrm{g}} \times 1}$ and $\boldsymbol{q}^{\mathrm{g}} \in \mathbb{R}^{n^{\mathrm{g}} \times 1}$. The total load demand is characterized by $\boldsymbol{d} \in \mathbb{C}^{n^{\mathrm{d}} \times 1}$ in the form of constant complex impedances $\boldsymbol{d} = [R_1 + X_1, \cdots, R_{n^{\mathrm{d}}} + X_{n^{\mathrm{d}}} ]$. 
The MG admittance matrix is $\boldsymbol{Y} \in \mathbb{C}^{n \times n}$ and the load and inverter incidence matrices are given by $\hat{\boldsymbol{D}} \in\{0,1\}^{n^{\mathrm{d}} \times n}$ and $\boldsymbol{G} \in\{0,1\}^{n^{\mathrm{g}} \times n}$, respectively. Thus, the from and to admittance matrices are represented as $\overrightarrow{\boldsymbol{Y}}, \overleftarrow{Y} \in \mathbb{C}^{n \times n}$, and their respective branch-incidence matrices as $\overrightarrow{\boldsymbol{L}}, \overleftarrow{\boldsymbol{L}} \in\{0,1\}^{l \times n}$. Finally, $\boldsymbol{v}_{o}$ is the terminal voltages at inverter/DG output terminals, and $\boldsymbol{v}=\left[\boldsymbol{v}^{\mathrm{g}}, \boldsymbol{v}^{\mathrm{b}}\right] \in \mathbb{C}^{n \times 1}$ represents all the voltages, where $\boldsymbol{v}^{\mathrm{g}} \in \mathbb{C}^{n^{\mathrm{g}} \times 1}$ is the vector of bus voltages at point of coupling, and $\boldsymbol{v}^{\mathrm{b}} \in \mathbb{C}^{n-n^{\mathrm{g}} \times 1}$ is the vector of all remaining MG buses. The OPF problem is then formulated as: 
\begin{equation}\label{min-pg}
\min~\mathrm{h}\left(\boldsymbol{p}^{\mathrm{g}}\right) = \min _{P_i} \sum_{i=1}^{n^{\mathrm{g}}} C_i\left(P_i\right)
\end{equation}
\begin{equation}\label{nodal-balance}
\textit{s.t.:}~~ \boldsymbol{G}^{\top}\left(\boldsymbol{p}^{\mathrm{g}}+\mathrm{j} \boldsymbol{q}^{\mathrm{g}}\right)=\hat{\boldsymbol{D}}^{\top} \boldsymbol{d}+\operatorname{diag}\left\{\boldsymbol{v} \boldsymbol{v}^* \boldsymbol{Y}^*\right\}
\end{equation}
\begin{equation}\label{p-inverters}
\boldsymbol{p}^{\min } \leq \boldsymbol{p}^g \leq \boldsymbol{p}^{\max }
\end{equation}
\begin{equation}\label{q-inverters}
\boldsymbol{q}^{\min } \leq \boldsymbol{q}^g \leq \boldsymbol{q}^{\max }
\end{equation}
\begin{equation}\label{line-flows}
\operatorname{diag}\left\{\overrightarrow{\overleftarrow{\boldsymbol{L}}} \boldsymbol{v} \boldsymbol{v}^* \overrightarrow{\overleftarrow{\boldsymbol{Y}}}^*\right\} \leq \boldsymbol{l}^{\max }
\end{equation}
\begin{equation}\label{voltage-constr}
\left(\boldsymbol{v}^{\min }\right)^2 \leq|\boldsymbol{v}|^2 \leq\left(\boldsymbol{v}^{\max }\right)^2
\end{equation}

\noindent Here, in order to minimize the total cost of DG output while satisfying system-wide requirements/constraints, we need to schedule each generator's active power output. The generation cost for the $i_{th}$ generator is typically approximated by quadratic cost functions\footnote{Quadratic cost functions are widely used in OPF problems even for inverter-based systems \cite{wu2016distributed}. The problem formulation, however, can be solved for any convex formulation of the cost function.} $ C_i\left(P_i\right)=\alpha_i P_i^2+\beta_i P_i+\gamma_i$, where $\alpha_i, \beta_i$, and $\gamma_i$ are the cost parameters. The system constraints include the enforcement of the nodal power balance in Eq.  \eqref{nodal-balance}, the bounding conditions of active and reactive power of the individual DG outputs in Eqs. \eqref{p-inverters} and \eqref{q-inverters}, respectively, the line flows in either direction using Eq. \eqref{line-flows}, and the voltage magnitude limits  within $\left[\boldsymbol{v}^{\min }, \boldsymbol{v}^{\max }\right]$ according to Eq. \eqref{voltage-constr}.

\vspace{-3mm}
\subsection{Lower Level Problem: Identification of Worst-Case Attack}

The aim of the attack is to maximize the effect of the disruption at the output signals of the MG secondary controller while remaining stealthy as explained in Section \ref{s:attackformulation}. The formalization of the worst-case attack model is achieved by co-optimizing it with the MG operations as given in \eqref{min-pg}--\eqref{voltage-constr} and adopting a common cybersecurity practice, which is to conservatively assume a strong (omniscient) and stealthy attacker with full knowledge of the network requirements including all nodal and line parameters, operating limits, and priorities. These assumptions lead to assessing the worst-case impact of the attack. The attacker with a generic objective, denoted as $ O^{A} ({act}'_{it})$,   can be modeled as in \eqref{eq:max-attack}, where  \eqref{eq:att_dso} presents the MG OPF model conditioned by the modified actions ${act}'_{it}$ that satisfy the stealthiness properties of \eqref{stealthiness-constraint}. 
\begin{equation}\label{eq:max-attack}
\max  O^{A} (\cap_{\forall i \in \mathcal G}{act}'_{it})
\end{equation}
\begin{equation}\label{eq:att_dso}
\textit{s.t.:}~\big\{\eqref{min-pg}-\eqref{voltage-constr}{\text{ with }}  {act}_{it}={act}'_{it}, \forall i \in \mathcal{G}, t \in \mathcal{T} \big\} 
\end{equation}
\begin{equation}\label{stealthiness-constraint}
{act}'_{it}~\textit{s.t.:}~\big\{\eqref{attack model 1}-\eqref{initial condition}~\text{and}~\eqref{Va}-\eqref{zk}  \big\}
\end{equation}

\vspace{-2mm}
\subsection{Upper Level Problem: Stability-Constrained Operation}

In this part, the MG entity operates at the top of the hierarchy and aims to  enforce the optimal operating setpoints  to the inverters while satisfying OPF conditions and  ensuring the small-signal stability of the inverter-based MG, since OPF cannot guarantee MG stability \cite{pullaguram2021small}, let alone the existence of a stealthy attack, and assuming the existence of a residual-based observer able to detect and mitigate malicious signal manipulations according \eqref{eq:max-attack}--\eqref{stealthiness-constraint}. To establish the stability restriction, the MG is initially represented as a collection of differential-algebraic equations: 
\begin{IEEEeqnarray}{lCr}\label{eq:MG-modeling-stability1}
\dot{\boldsymbol{x}} & =\boldsymbol{\Tilde{f}}(\boldsymbol{x}, \boldsymbol{z})\\
0 & =\boldsymbol{\Tilde{g}}(\boldsymbol{x}, \boldsymbol{z}) \label{eq:MG-modeling-stability2}
\end{IEEEeqnarray}
\noindent In the above equations, $\boldsymbol{\Tilde{f}}$ and $\boldsymbol{\Tilde{g}}$ denote the nonlinear differential and algebraic equation vectors of a MG, respectively. $\boldsymbol{x}$ and $\boldsymbol{z}$ are the vectors of state and algebraic variables with sizes $n_{\mathrm{x}}$ and $n_z$, respectively. A third-order inverter model, as stated in \cite{luo2014spatiotemporal}, is employed to minimize the computational burden. Fig. \ref{fig-vcvsi} depicts voltage and current controllers that are equipped with an $L C$ filter, which possess a significantly higher closed-loop bandwidth compared to the power controller module. It can be assumed that these control loops achieve quasi-steady-state quickly. Therefore, the vector of differential equations, denoted by $\boldsymbol{\Tilde{f}}$, includes state variables represented by $\boldsymbol{x}=[\boldsymbol{p}^{\top}, \boldsymbol{q}^{\top}, \boldsymbol{\delta}^{\top}]^{\top} \in \mathbb{R}^{n_{\mathrm{x}} \times 1}$.

\begin{equation}\label{eq:f(x)-mg-p}
 \dot{\boldsymbol{x}}^{\top}=\left[\begin{array}{c}-\boldsymbol{\omega}_c \cdot \boldsymbol{p}+\boldsymbol{\omega}_c \cdot \operatorname{Re}\left\{\boldsymbol{v}_{o} \cdot\left(\boldsymbol{i}_{o}\right)^*\right\} \\ -\boldsymbol{\omega}_c \cdot \boldsymbol{q}+\boldsymbol{\omega}_c \cdot \operatorname{Im}\left\{\boldsymbol{v}_{o} \cdot\left(\boldsymbol{i}_{o}\right)^*\right\} \\ \left(\boldsymbol{\omega}-\omega_{com}\right) {\omega}_{b} \end{array}\right]   
\end{equation}
\noindent where $\boldsymbol{\omega}_c$, $\boldsymbol{p}$ and $\boldsymbol{q}$ $\in \mathbb{R}^{n^{\mathrm{g}} \times 1}$,  $\boldsymbol{v}_{o} \in \mathbb{C}^{n^{\mathrm{g}} \times 1}=$ $v_{\mathrm{od}}+\mathrm{j} v_{\mathrm{oq}}$, and $i^{\mathrm{o}} \in \mathbb{C}^{n^{\mathrm{g}} \times 1}=i_{\mathrm{od}}+\mathrm{j} i_{\mathrm{oq}}$. 
The operating inverter frequency $\boldsymbol{\omega}$ is obtained using:
$\boldsymbol{\omega}=\omega_{b}-\boldsymbol{m}_{{P}} \cdot \boldsymbol{p}+\boldsymbol{m}_{{P}} \cdot \boldsymbol{p}_{\mathrm{opf}}$, 
where $\boldsymbol{p}_{\mathrm{opf}}$ is the active power setpoint provided by the OPF. 

\begin{table}[t]
\captionsetup{font=small}
\centering
\caption{{Parameters of the test MG.}}
\vspace{-2mm}
\begin{tabular}{||c|c||c|c||} 
\hline \hline
\multicolumn{2}{||c||}{DGs 1 \& 2~}     & \multicolumn{2}{c||}{DGs 3 \& 4}      \\ 
\hline \hline
Parameter & Value & Parameter & Value \\
\hline \hline
$m_{P}$  & $9.4 \times 10^{-5}$ & $m_{P}$  & $12.5 \times 10^{-5}$  \\ 
\hline
$n_{Q}$  & $1.3 \times 10^{-3}$  & $n_{Q}$  & $1.5 \times 10^{-3}$   \\ 
\hline
$R_{c}$  & $0.03$ $\Omega$                         & $R_{c}$  & $0.03$ $\Omega$                         \\ 
\hline
$L_{c}$  & $0.35$ mH                       & $L_{c}$  & $0.35$ mH                      \\ 
\hline
$R_{f}$  & $0.1$ $\Omega$                           & $R_{f}$  & $0.1$ $\Omega$                          \\ 
\hline
$L_{f}$  & $1.35$ mH                       & $L_{f}$  & $1.35$ mH                      \\ 
\hline
$C_{f}$  & $50$ $\mu$F                         & $C_{f}$  & $50$ $\mu$F                        \\ 
\hline
$K_{PV}$ & $0.1$                           & $K_{PV}$ & $0.05$                         \\ 
\hline
$K_{IV}$ & $420$                           & $K_{IV}$ & $390$                          \\ 
\hline
$K_{PC}$ & $15$                            & $K_{PC}$ & $10.5$                         \\ 
\hline
$K_{IC}$ & $20000$                         & $K_{IC}$ & $16000$                       \\ 
\hline
$\omega_{b}$  & $314.16$ rad/s                       & $\omega_{b}$  & $314.16$ rad/s                      \\ 
\hline
$F$   & $0.75$                          & $F$   & $0.75$                         \\ 
\hline
$\omega_{c}$  & $31.41$ Hz                        & $\omega_{c}$  & $31.41$ Hz                        \\
\hline
$c_{f}$  & $30$                        & $c_{f}$  & $30$                        \\
\hline
$c_{v}$  & $30$                         & $c_{v}$  & $30$                        \\
\hline \hline
\end{tabular}
\vspace{-2mm}
\label{table_I}
\end{table}
\begin{table}[t]
\captionsetup{font=small}
\centering
\caption{{Parameters of MG lines and loads.}}
\vspace{-2mm}
\begin{tabular}{||c|c|c|c|c|c|c|c||} 
\hline \hline
\multicolumn{4}{||c|}{Lines 1, 3}                             & \multicolumn{4}{c||}{Line 2}                                 \\ 
\hline \hline
\multicolumn{2}{||c|}{$R_{line}$}    & \multicolumn{2}{c|}{$0.23$ $\Omega$}   & \multicolumn{2}{c|}{$R_{line}$}  & \multicolumn{2}{c||}{$0.35$ $\Omega$}     \\ 
\hline
\multicolumn{2}{||c|}{$L_{line}$}    & \multicolumn{2}{c|}{$318$ $\mu$H} & \multicolumn{2}{c|}{$L_{line}$}  & \multicolumn{2}{c||}{$1847$ $\mu$H}  \\ 
\hline \hline
\multicolumn{2}{||c|}{Load 1}   & \multicolumn{2}{c|}{Load 2} & \multicolumn{2}{c|}{Load 3} & \multicolumn{2}{c||}{Load 4}   \\ 
\hline \hline
$R$ & $30$ $\Omega$                         & $R$ & $20$ $\Omega$                      & $R$ & $25$ $\Omega$                      & $R$ & $25$ $\Omega$                        \\ 
\hline
$X$ & $15$ $\Omega$                         & $X$ & $10$ $\Omega$                      & $X$ & $10$ $\Omega$                     & $X$ & $15$ $\Omega$                       \\
\hline \hline
\end{tabular}
\label{table_II}
\vspace{-3mm}
\end{table}

\begin{figure}[t]
    \centering
        \includegraphics[width=0.335\textwidth]{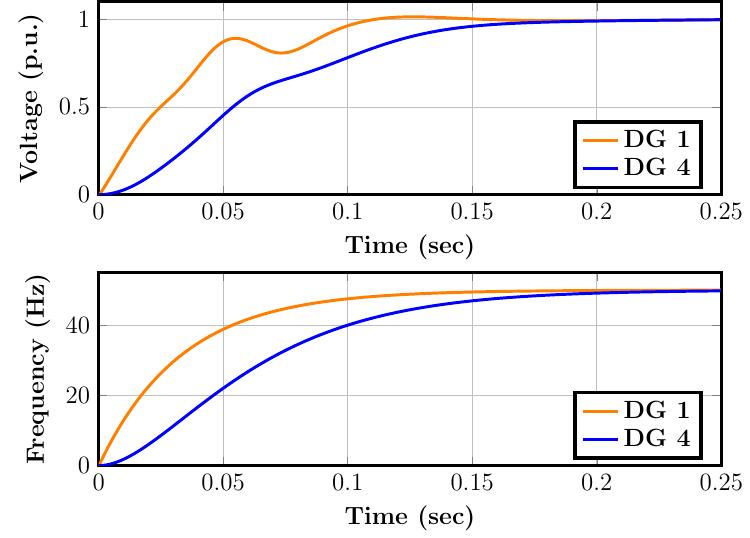}
        \vspace{-2mm}
    \caption{Attack-free scenario.}
    \label{no_attack}
\end{figure}

\begin{figure}[t]
\vspace{-2mm}
\centering
    \subfloat[]{
        \includegraphics[width=0.2315\textwidth]{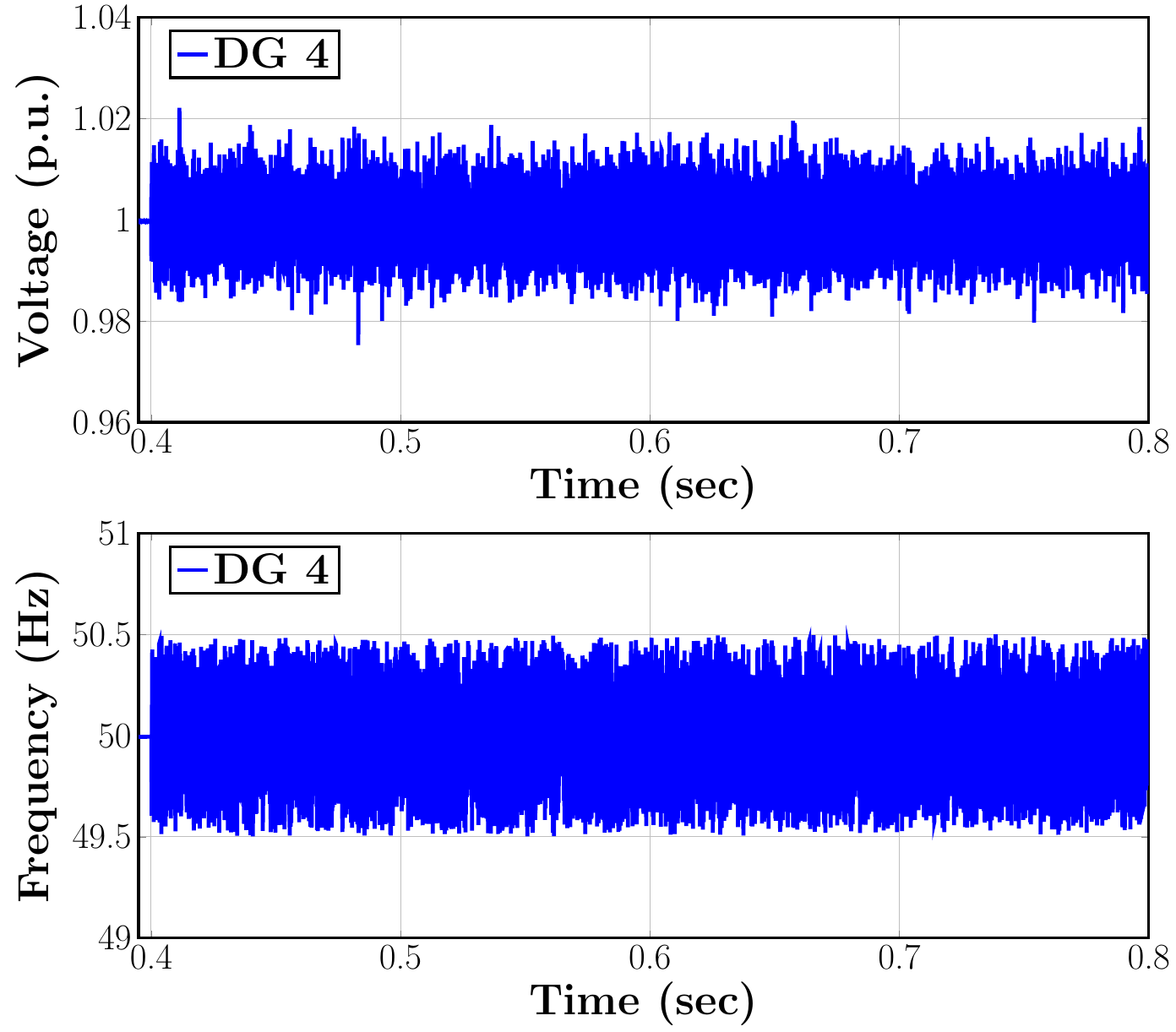}
        \label{fig:FDI_attack_DG4_INPUTS}
    } 
    \subfloat[]{
        \includegraphics[width=0.2315\textwidth]{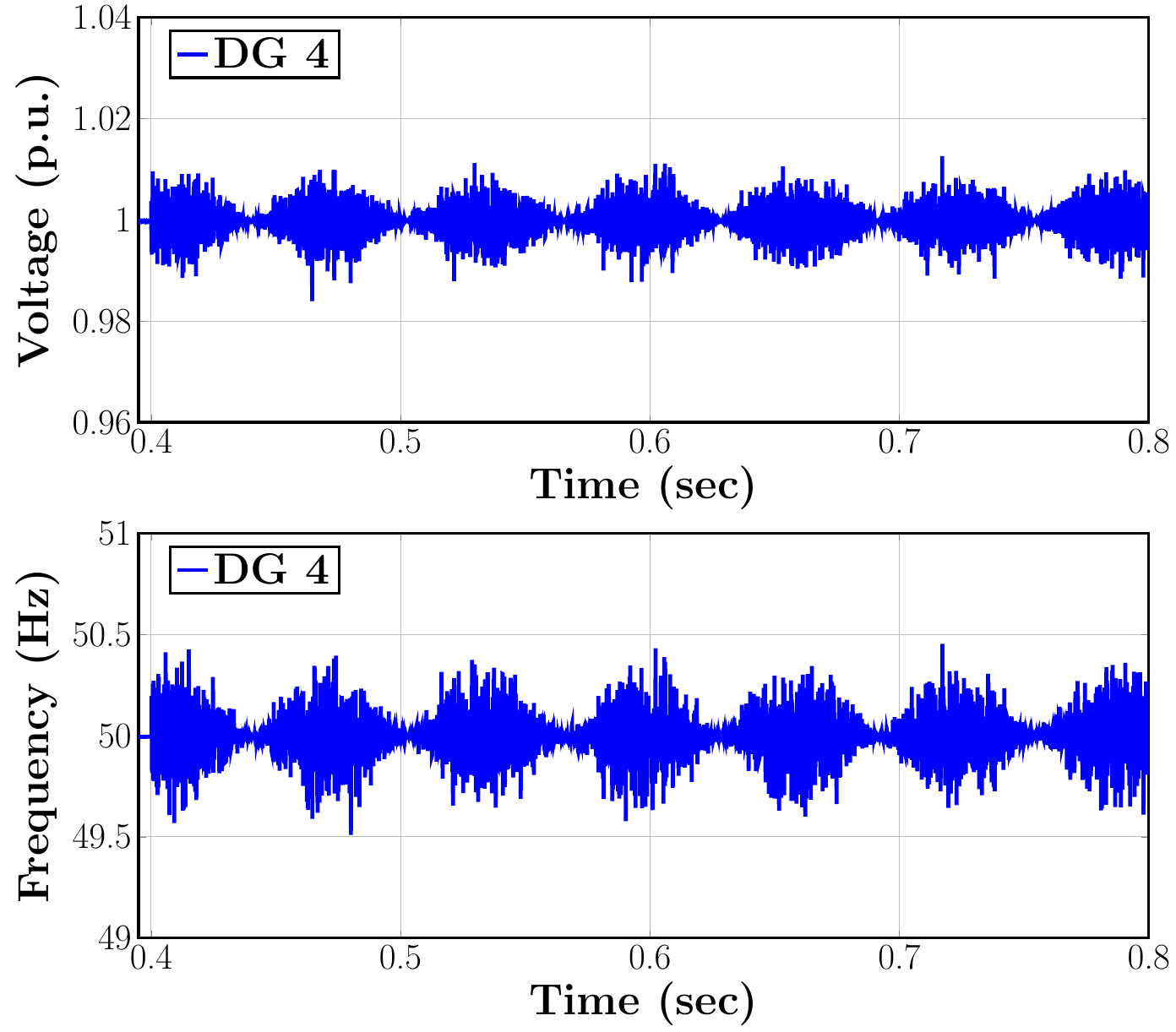}
        \label{fig:SINUSOIDAL_attack_DG4_INPUTS}
    } \\
\vspace{-2mm}
\caption[CR]{Effects (at the output of DG 4) of an arbitrary attack vector following for both voltage and frequency:  (\subref{fig:FDI_attack_DG4_INPUTS}) a uniform distribution with $\mu[-0.01, 0.01]$, and (\subref{fig:SINUSOIDAL_attack_DG4_INPUTS}) a normal distribution with $\mathcal{N}(0, 0.00001)$ multiplied with a sine carrier of reference values ($1$ p.u., $50$ Hz).}
\vspace{-5mm}
\label{fig:attacks_inputs}
\end{figure}

The vector of algebraic equations, denoted by $\boldsymbol{\Tilde{g}}$, includes the algebraic variables represented by $\boldsymbol{z}=[\left(\boldsymbol{i}_{\mathrm{od }}\right)^{\top},\left(\boldsymbol{i}_{\mathrm{oq}}\right)^{\top}]^{\top} \mathbb{R}^{n_{\mathrm{z}} \times 1}$.
\begin{equation}\label{eq:iod}
 {\boldsymbol{z}}^{\top}=\left[\begin{array}{c} \operatorname{Re}\left\{\check{\boldsymbol{Y}}\left(\boldsymbol{v}_{o}-\boldsymbol{i}_{o} \cdot \boldsymbol{z}_c\right)\right\} \\ \operatorname{Im}\left\{\check{\boldsymbol{Y}}\left(\boldsymbol{v}_{o}-\boldsymbol{i}_{o} \cdot \boldsymbol{z}_c\right)\right\}  \end{array}\right]   
\end{equation}
\noindent where $\check{Y}$ is the Kron-reduced admittance matrix of the system \cite{luo2014spatiotemporal} and $\boldsymbol{z}_{c}$ is the line impedance connecting an inverter to the MG. The inverter terminal's voltage is given as: 
\begin{equation}\label{eq:vo}
\boldsymbol{v}_{\mathrm{o}}=\left(\boldsymbol{v}_{\mathrm{opf}}+\boldsymbol{n}_Q \cdot \boldsymbol{q}_{\mathrm{opf}}-\boldsymbol{n}_Q \cdot \boldsymbol{q}\right) \cdot(\cos \boldsymbol{\delta}+\mathrm{j} \sin \boldsymbol{\delta})
\end{equation}
where $\boldsymbol{n}_Q$ is $q-v$ droop constant, and $\boldsymbol{q}_{\mathrm{opf}}$ and $\boldsymbol{v}_{\mathrm{opf}}$ are the OPF resulted values for the reactive power and voltage setpoints, respectively.  

\begin{proposition}
Let us consider a MG described by \eqref{eq:MG-modeling-stability1}--\eqref{eq:MG-modeling-stability2}. The state matrix, denoted by $\hat{\boldsymbol{A}} \in \mathbb{R}^{n_{\mathrm{x}} \times n_{\mathrm{x}}}$, is defined as  $\hat{\boldsymbol{A}}=\boldsymbol{A}\left(\boldsymbol{i}_{\mathrm{od}}, \boldsymbol{i}_{\mathrm{oq}}, \boldsymbol{\delta}, \boldsymbol{q}\right)-\boldsymbol{B}(\boldsymbol{\delta}, \boldsymbol{q})(\boldsymbol{D})^{-1} \boldsymbol{C}(\boldsymbol{\delta}, \boldsymbol{q})
$, where $\boldsymbol{A}\left(\boldsymbol{i}_{\mathrm{od}}, \boldsymbol{i}_{\mathrm{oq}}, \boldsymbol{\delta}, \boldsymbol{q}\right)=\frac{\partial \boldsymbol{\Tilde{f}}}{\partial \boldsymbol{x}}$, $\boldsymbol{B}(\boldsymbol{\delta}, \boldsymbol{q})=\frac{\partial \boldsymbol{\Tilde{f}}}{\partial \boldsymbol{z}}$, $\boldsymbol{C}(\boldsymbol{\delta}, \boldsymbol{q})=\frac{\partial \boldsymbol{\Tilde{g}}}{\partial \boldsymbol{x}}$, and $\boldsymbol{D}=\frac{\partial \boldsymbol{\Tilde{g}}}{\partial \boldsymbol{z}}$. The expressions for these matrices are given in the Appendix by \eqref{eq:df-dx} -- \eqref{eq:dg-dz}, along with the definition of the matrix $\boldsymbol{M}^{\mathrm{p}}$ in \eqref{eq:Mmatrix}. Furthermore, $\check{G}$ and $\check{B}$ represent the real and imaginary parts, respectively, of the Kron-reduced admittance matrix $\check{\boldsymbol{Y}}$. According to \cite{boyd1994linear}, the small-signal stability of the MG can be ensured -- incorporating the Lyapunov stability, with a minimum decay rate (i.e., damping ratio) of $\eta$, if there exists a symmetric positive definite matrix $M$ that satisfies:
\begin{equation}\label{eq:proposition-stability}
\hat{\boldsymbol{A}}^{\top} \boldsymbol{M}+\boldsymbol{M} \hat{\boldsymbol{A}} \preceq-2 \eta \boldsymbol{M} .
\end{equation}
\end{proposition}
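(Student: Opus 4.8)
The plan is to reduce the claim to the classical Lyapunov decay-rate (``$\eta$-stability'') result for linear time-invariant systems, after linearizing the differential--algebraic model \eqref{eq:MG-modeling-stability1}--\eqref{eq:MG-modeling-stability2} about the operating equilibrium fixed by the OPF setpoints. First I would linearize the two equations around an equilibrium $(\boldsymbol{x}^{\star},\boldsymbol{z}^{\star})$, writing the perturbations as $\Delta\boldsymbol{x}$ and $\Delta\boldsymbol{z}$; by the definitions of the Jacobians $\boldsymbol{A}=\partial\boldsymbol{\Tilde{f}}/\partial\boldsymbol{x}$, $\boldsymbol{B}=\partial\boldsymbol{\Tilde{f}}/\partial\boldsymbol{z}$, $\boldsymbol{C}=\partial\boldsymbol{\Tilde{g}}/\partial\boldsymbol{x}$, and $\boldsymbol{D}=\partial\boldsymbol{\Tilde{g}}/\partial\boldsymbol{z}$ given in the statement, this yields the linearized pair $\Delta\dot{\boldsymbol{x}}=\boldsymbol{A}\,\Delta\boldsymbol{x}+\boldsymbol{B}\,\Delta\boldsymbol{z}$ and $0=\boldsymbol{C}\,\Delta\boldsymbol{x}+\boldsymbol{D}\,\Delta\boldsymbol{z}$.

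The second step is the elimination of the algebraic variables. Assuming the model has index one at the equilibrium, i.e., that $\boldsymbol{D}$ is nonsingular, the algebraic constraint can be solved as $\Delta\boldsymbol{z}=-\boldsymbol{D}^{-1}\boldsymbol{C}\,\Delta\boldsymbol{x}$. Substituting back produces the reduced ordinary differential equation $\Delta\dot{\boldsymbol{x}}=(\boldsymbol{A}-\boldsymbol{B}\boldsymbol{D}^{-1}\boldsymbol{C})\,\Delta\boldsymbol{x}=\hat{\boldsymbol{A}}\,\Delta\boldsymbol{x}$, which recovers exactly the state matrix $\hat{\boldsymbol{A}}$ defined in the statement. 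Hence the small-signal (decay-rate) stability of the MG is equivalent to the corresponding property of the linear system governed by $\hat{\boldsymbol{A}}$.

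Third, I would take the quadratic Lyapunov candidate $V(\Delta\boldsymbol{x})=\Delta\boldsymbol{x}^{\top}\boldsymbol{M}\,\Delta\boldsymbol{x}$ with $\boldsymbol{M}\succ 0$. Differentiating along trajectories of $\Delta\dot{\boldsymbol{x}}=\hat{\boldsymbol{A}}\,\Delta\boldsymbol{x}$ gives $\dot{V}=\Delta\boldsymbol{x}^{\top}(\hat{\boldsymbol{A}}^{\top}\boldsymbol{M}+\boldsymbol{M}\hat{\boldsymbol{A}})\,\Delta\boldsymbol{x}$, and the hypothesized LMI \eqref{eq:proposition-stability} forces $\dot{V}\le-2\eta\,\Delta\boldsymbol{x}^{\top}\boldsymbol{M}\,\Delta\boldsymbol{x}=-2\eta V$. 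Integrating this differential inequality (Gr\"onwall) yields $V(t)\le V(0)\,e^{-2\eta t}$, and the two-sided bound $\lambda_{\min}(\boldsymbol{M})\|\Delta\boldsymbol{x}\|^{2}\le V\le\lambda_{\max}(\boldsymbol{M})\|\Delta\boldsymbol{x}\|^{2}$ then gives $\|\Delta\boldsymbol{x}(t)\|\le\sqrt{\lambda_{\max}(\boldsymbol{M})/\lambda_{\min}(\boldsymbol{M})}\;e^{-\eta t}\,\|\Delta\boldsymbol{x}(0)\|$, i.e., asymptotic stability with a guaranteed decay rate $\eta$, as claimed; this is precisely the decay-rate LMI characterization of \cite{boyd1994linear}.

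The main obstacle I anticipate is not the Lyapunov step, which is routine, but the justification of the algebraic elimination: one must argue that $\boldsymbol{D}$ is invertible at the operating point so that $\hat{\boldsymbol{A}}$ is well defined and that the reduced linearization faithfully captures the local dynamics of the DAE. Verifying nonsingularity of $\boldsymbol{D}$ amounts to inspecting the explicit Kron-reduced admittance structure entering \eqref{eq:dg-dz}; once this regularity (index-one) condition is granted, the remainder of the argument is standard.
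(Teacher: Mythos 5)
Your proposal is correct. Note, however, that the paper itself does not prove this proposition at all: it simply defers to the citation \cite{pullaguram2021small}, so there is no in-paper argument to compare against. What you have written is the standard derivation that underlies the cited result --- linearize the DAE about the OPF equilibrium, eliminate $\Delta\boldsymbol{z}$ via $\Delta\boldsymbol{z}=-\boldsymbol{D}^{-1}\boldsymbol{C}\,\Delta\boldsymbol{x}$ to obtain the reduced Jacobian $\hat{\boldsymbol{A}}=\boldsymbol{A}-\boldsymbol{B}\boldsymbol{D}^{-1}\boldsymbol{C}$, and then apply the decay-rate LMI characterization of \cite{boyd1994linear} with the quadratic Lyapunov function $V=\Delta\boldsymbol{x}^{\top}\boldsymbol{M}\Delta\boldsymbol{x}$ and Gr\"onwall's inequality. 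You are also right to single out the invertibility of $\boldsymbol{D}$ (the index-one condition) as the only nontrivial hypothesis: the proposition tacitly assumes it by writing $\boldsymbol{D}^{-1}$, and for the Kron-reduced network structure in \eqref{eq:dg-dz} this holds generically but is worth stating explicitly. One minor caveat: the statement's parenthetical identification of the decay rate $\eta$ with a ``damping ratio'' is loose terminology --- your proof correctly establishes an exponential decay rate, not a modal damping ratio --- but that is an issue with the proposition's wording, not with your argument.
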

\noindent The proof for Proposition 1 can be found in \cite{pullaguram2021small}.

Additional constraints are formulated to connect the inverter's internal variables with the OPF variables:
\begin{IEEEeqnarray}{lCr}\label{eq:21}
& \boldsymbol{v}^{\mathrm{g}}=\left(\boldsymbol{v}_{\mathrm{ref}}-\boldsymbol{n}_Q \cdot \boldsymbol{q}\right) \cdot(\cos \boldsymbol{\delta}+\mathrm{j} \sin \boldsymbol{\delta})-\boldsymbol{i}_{o} \cdot \boldsymbol{z}_c \\
& \boldsymbol{i}_{\mathrm{o}}=\breve{\boldsymbol{Y}} \boldsymbol{v}^{\mathrm{g}} \label{eq:22} \\
& \boldsymbol{p}+\mathrm{i} \boldsymbol{q}=\operatorname{diag}\left\{\boldsymbol{i}_{o}\left(\boldsymbol{i}_{o}\right)^*\left[\boldsymbol{z}_{\mathrm{c}}\right]\right\}+\boldsymbol{p}^g+\mathrm{i} \boldsymbol{q}^g \label{eq:23}
\end{IEEEeqnarray}

The OPF represented by \eqref{min-pg}-\eqref{voltage-constr}, along with the additional constraints in \eqref{eq:proposition-stability}-\eqref{eq:23}, constitutes a small-signal stability-constrained OPF for an inverter-dominant MG. In combination with the utilization of the designed observer \eqref{observer}, with $L(\hat{\boldsymbol{x}}) \triangleq  L' + L''(\hat{\boldsymbol{x}})$, and any $L'$, $L''(\hat{\boldsymbol{x}})$ chosen as described in the proof of Theorem \ref{theorem_observer} and able to globally asymptotically reconstruct the state $\boldsymbol{x}$, ensures that the operator can counter stealthy attacks while ensuring a sufficient stability margin during optimal generation. It should be noted that the constraints in \eqref{nodal-balance}-\eqref{voltage-constr} are quadratic functions of the bus voltage $\boldsymbol{v}$, while the state matrix $\hat{\boldsymbol{A}}$ in \eqref{eq:proposition-stability} and the constraints \eqref{eq:21} and \eqref{eq:23} are nonlinear functions of $i_{o}$, $q$, and $\delta$. These bilinear matrix inequalities (BMIs) and nonlinearities make the problem nonconvex which can be solved using computationally tractable semidefinite programming (SDP) or parabolic relaxation techniques \cite{kheirandishfard2018convex1}, or even sequentially through objective penalization \cite{pullaguram2021small}.

\vspace{-2mm}
\section{Experimental and Simulation Results}\label{s:result}

\begin{figure*}[htbp!]
\centering
    \subfloat[]{
        \includegraphics[width=0.23\textwidth]{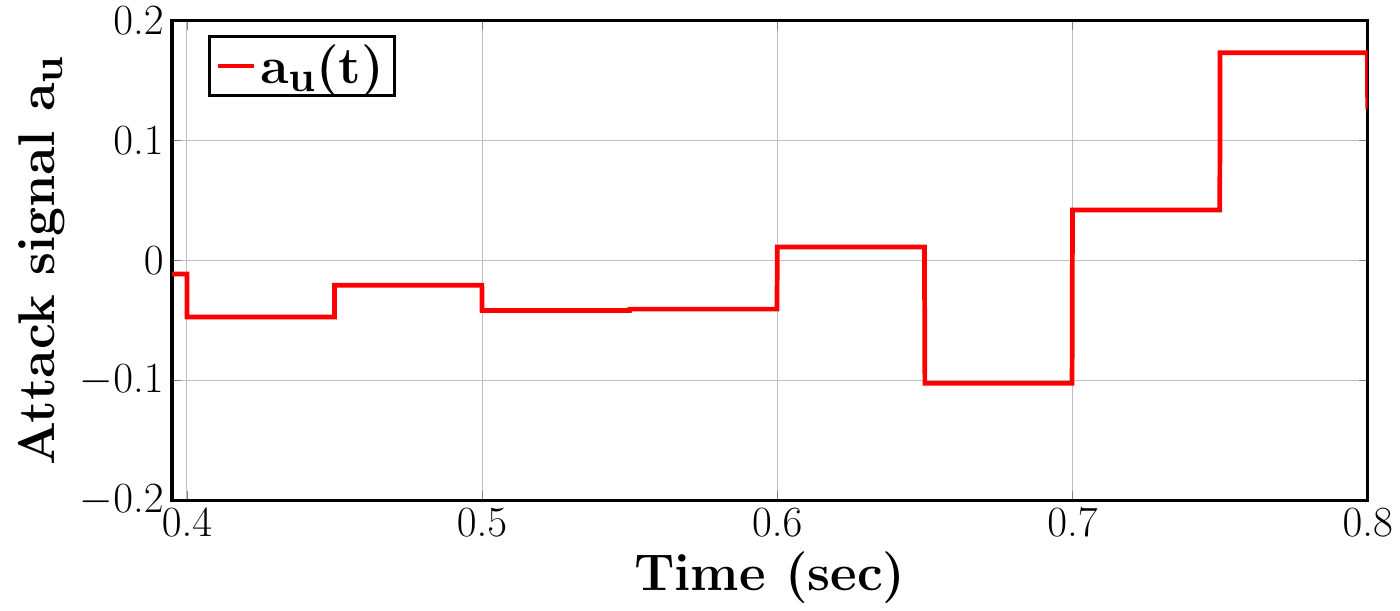}
        \label{fig:stealthy_attack_au_b_0.2}
    } 
    \subfloat[]{
        \includegraphics[width=0.23\textwidth]{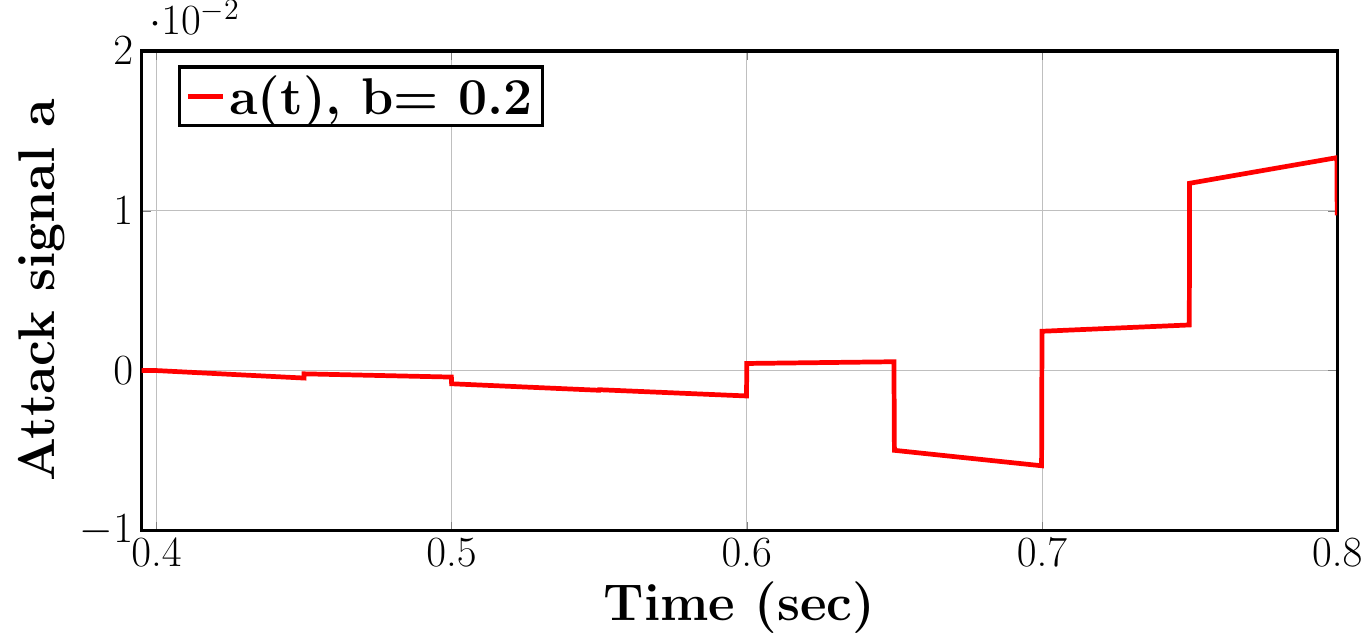}
        \label{fig:stealthy_attack_a_b_0.2}
    } 
    \subfloat[]{
        \includegraphics[width=0.23\linewidth]{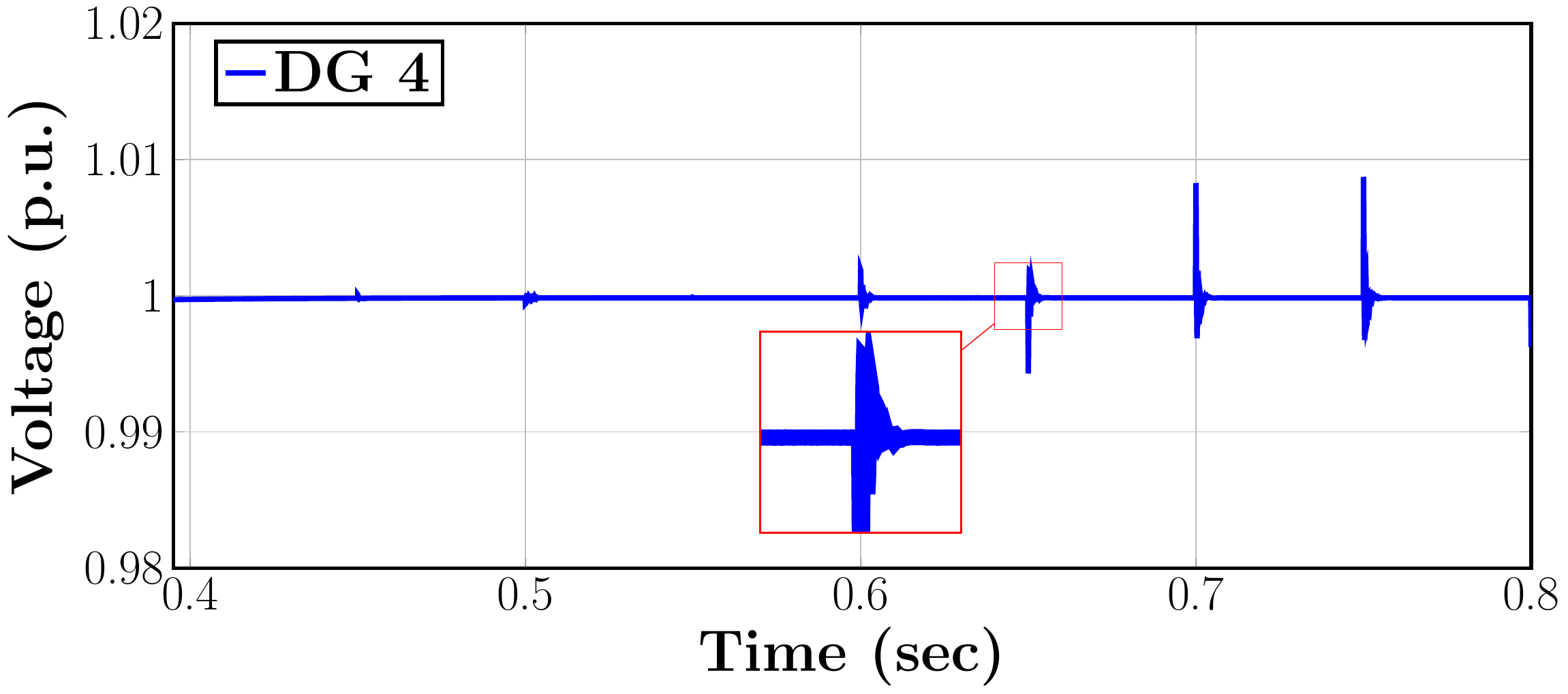}
        \label{fig:stealthy_attack_voltage_b_0.2}
    } 
   \subfloat[]{
        \includegraphics[width=0.23\linewidth]{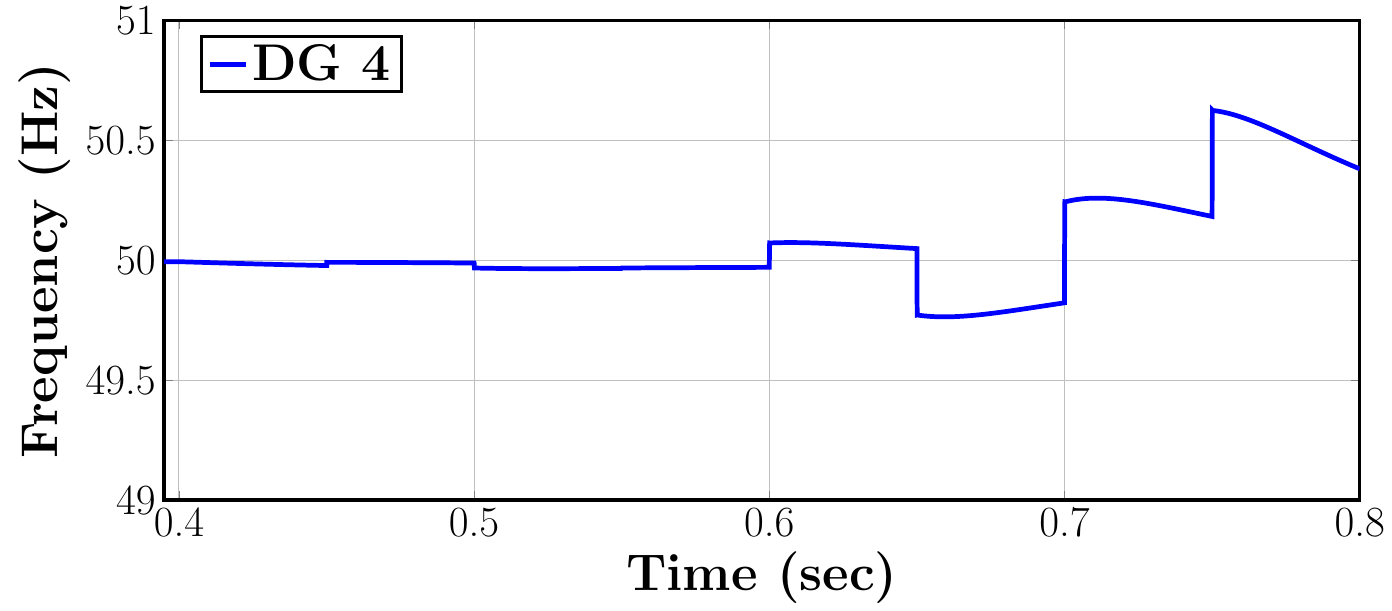}
        \label{fig:stealthy_attack_frequency_b_0.2}
    } \\
\vspace{-2mm}
\caption[CR]{Demonstration and effects of a stealthy attack at the outputs of DG 4 with $b_{i} = 0.2$ as the attack evolution rate.} 
\vspace{-3mm}
\label{fig:stealthy_attack_input}
\end{figure*}

\begin{figure*}[hbtp!]
\centering
    \subfloat[]{
        \includegraphics[width=0.23\textwidth]{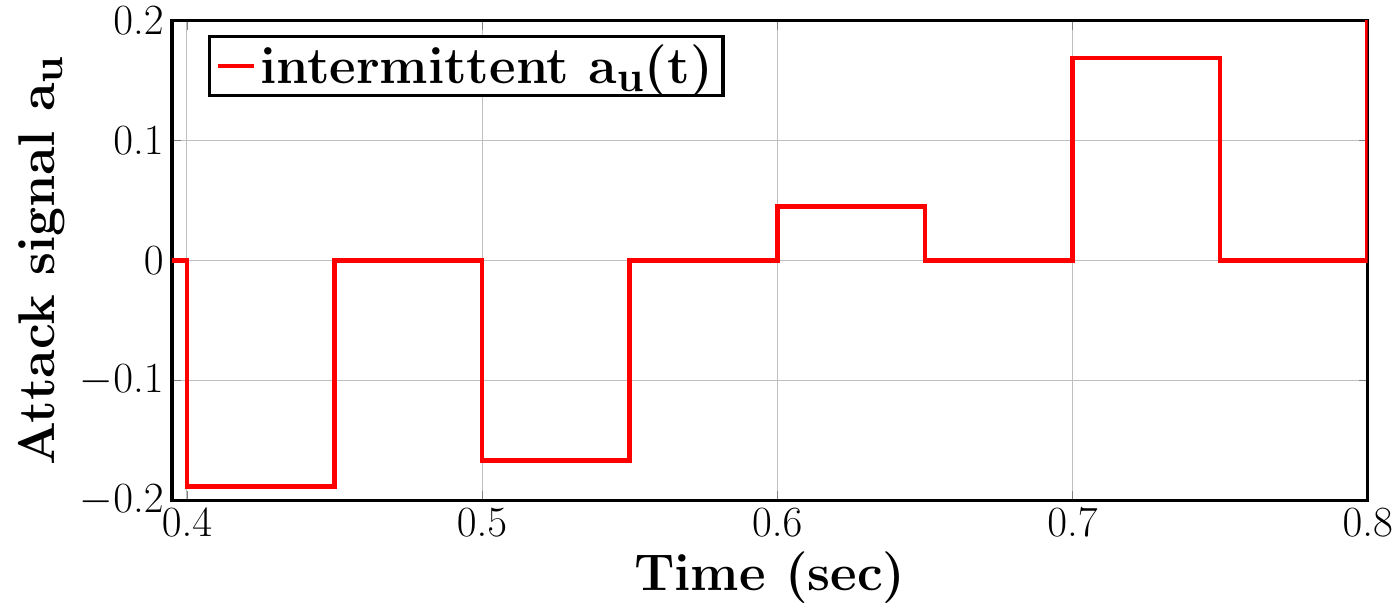}
        \label{fig:stealthy_intermittent_attack_au_b_0.2}
    } 
    \subfloat[]{
        \includegraphics[width=0.23\textwidth]{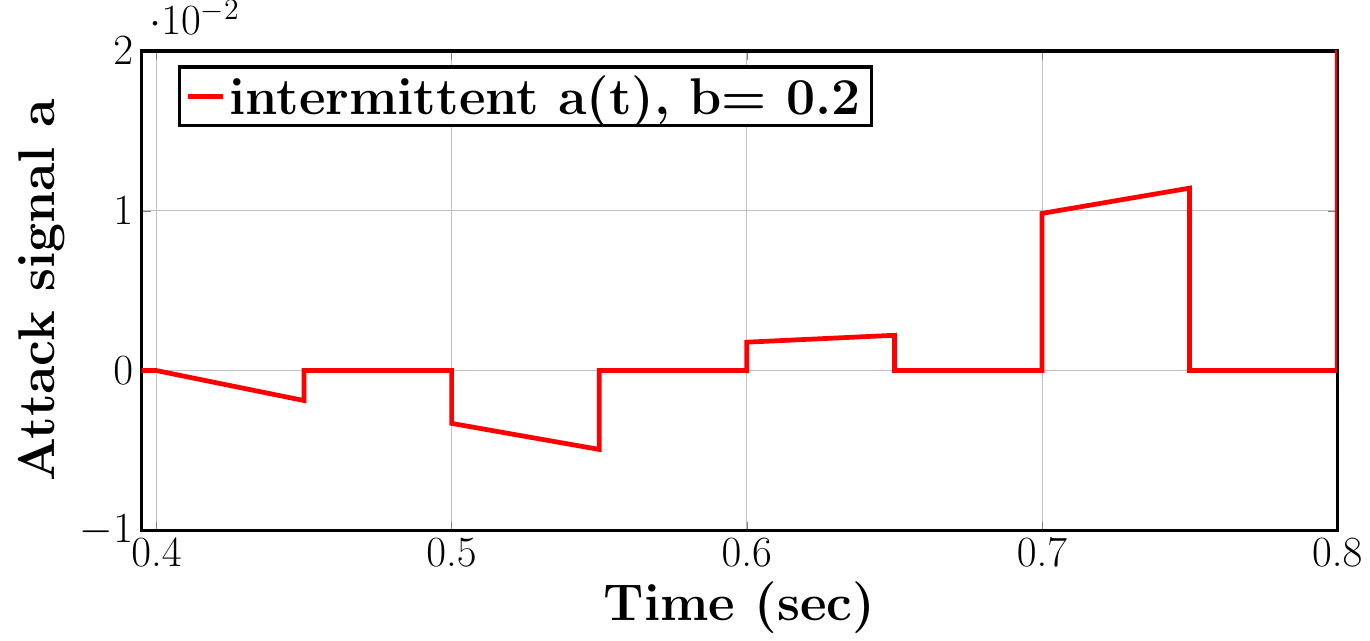}
        \label{fig:stealthy_intermittent_attack_a_b_0.2}
    } 
    \subfloat[]{
        \includegraphics[width=0.23\linewidth]{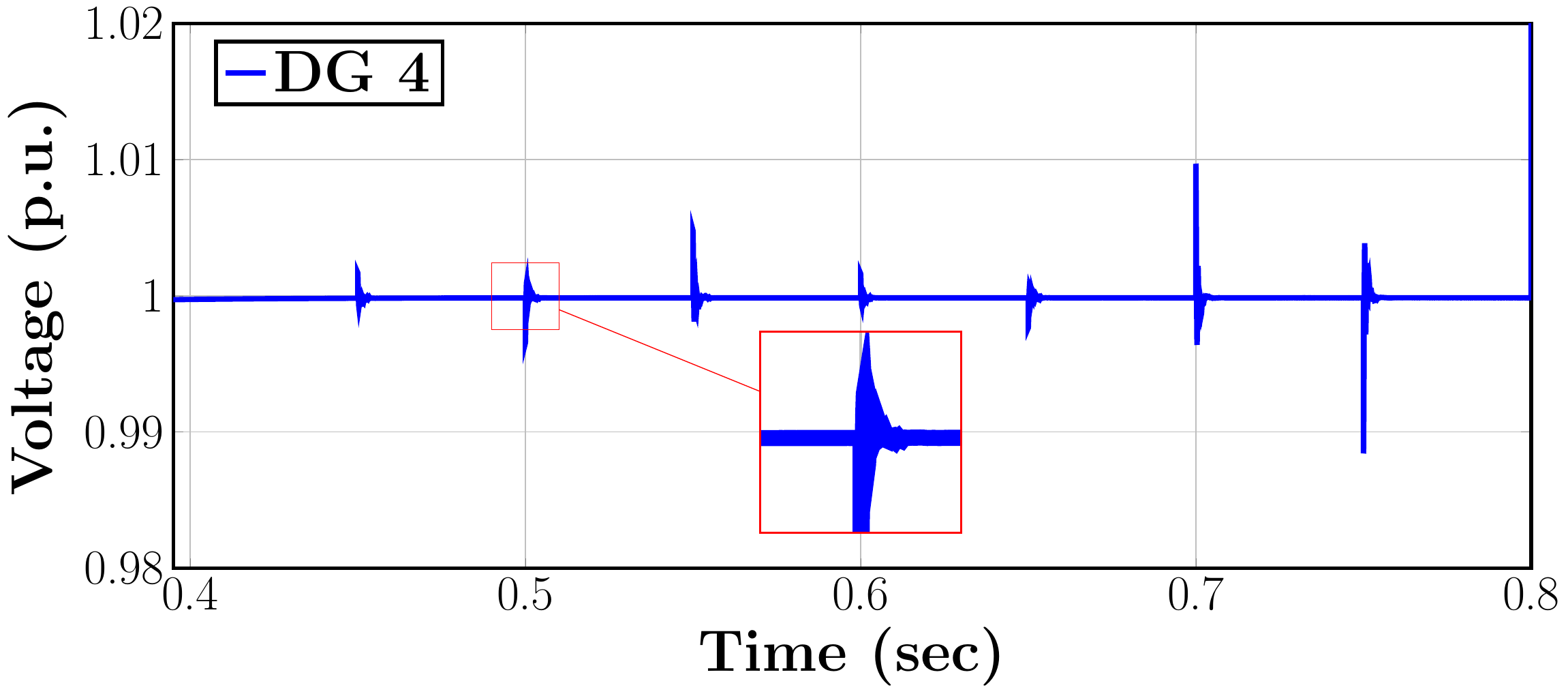}
        \label{fig:stealthy_intermittent_attack_voltage_b_0.2}
    } 
   \subfloat[]{
        \includegraphics[width=0.23\linewidth]{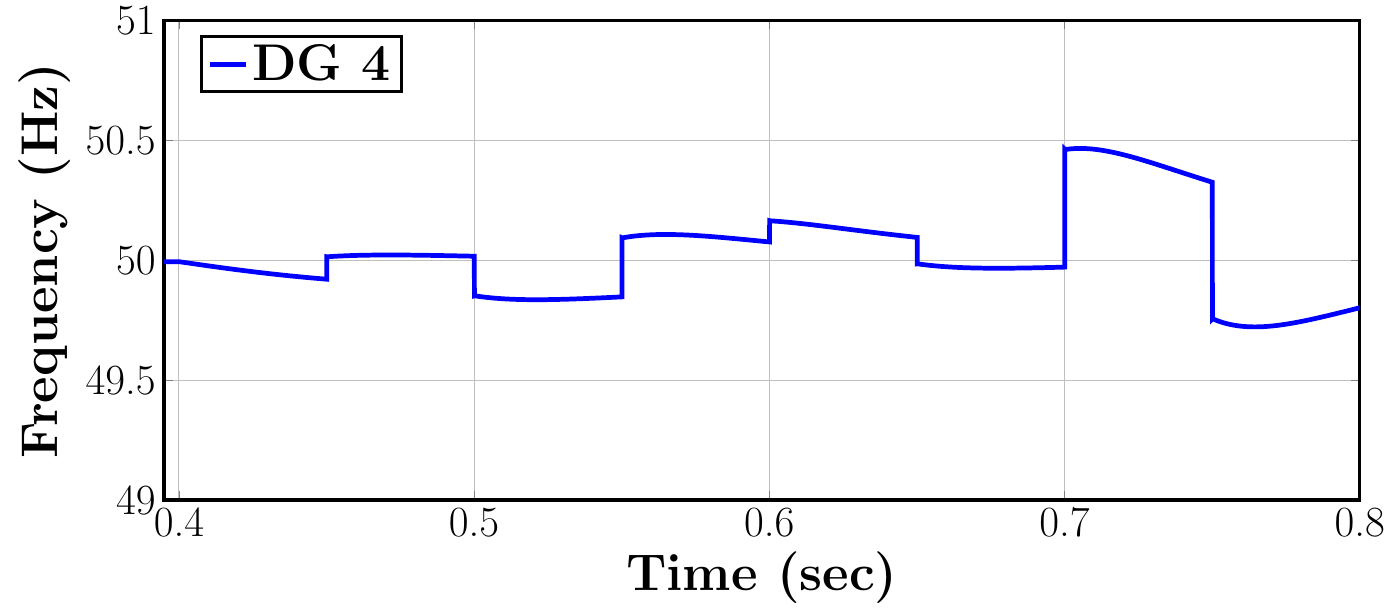}
        \label{fig:stealthy_intermittent_attack_frequency_b_0.2}
    } \\
\vspace{-2mm}
\caption[CR]{Demonstration and effects of a stealthy and intermittent attack at the outputs of DG 4 with $b_{i} = 0.2$ as the attack evolution rate.} 
\vspace{-5mm}
\label{fig:stealthy_intermittent_integrity_attack_input}
\end{figure*}

In this section, the simulations results are presented demonstrating the impact of attacks discussed in Section \ref{s:attackformulation} and the effectiveness of the proposed detection and mitigation strategy of Section \ref{s:methodology}. The MG adopted for testing is shown in Fig.  \ref{fig1a}, and the relevant MG modeling parameters are detailed in Table \ref{table_I} and Table \ref{table_II}. Simulations have been performed with MATLAB 2020.b, on a Macbook Pro 2021, with 16 GB of RAM and a 512 GB hard disk drive.

\vspace{-2mm}

\subsection{Effectiveness of Attack Models}


Fig. \ref{no_attack} displays the voltage and frequency values in the attack-free scenario. The reference values are set to $1$ per unit (p.u.) and $50$ Hz, respectively. The figure shows that the secondary controller operates properly, i.e., voltage and frequency values reach the reference values. {The magnitudes of voltage and frequency in the DGs present instabilities after attacks are introduced into the system. This effect is produced by tampering the input references $\omega_j$ and $v_{odj}$ provided to the secondary controller. The attack duration is assumed to be bounded and last from $0.4s$ to $0.8s$. This action leads to distorting the active and reactive power values of the DGs inside the MG. The plots presented next are only for DG 4 for brevity. Similar results can be obtained for other DGs.}


\begin{figure}[t]
\centering
    \subfloat[]{
        \includegraphics[width=0.2315\textwidth]{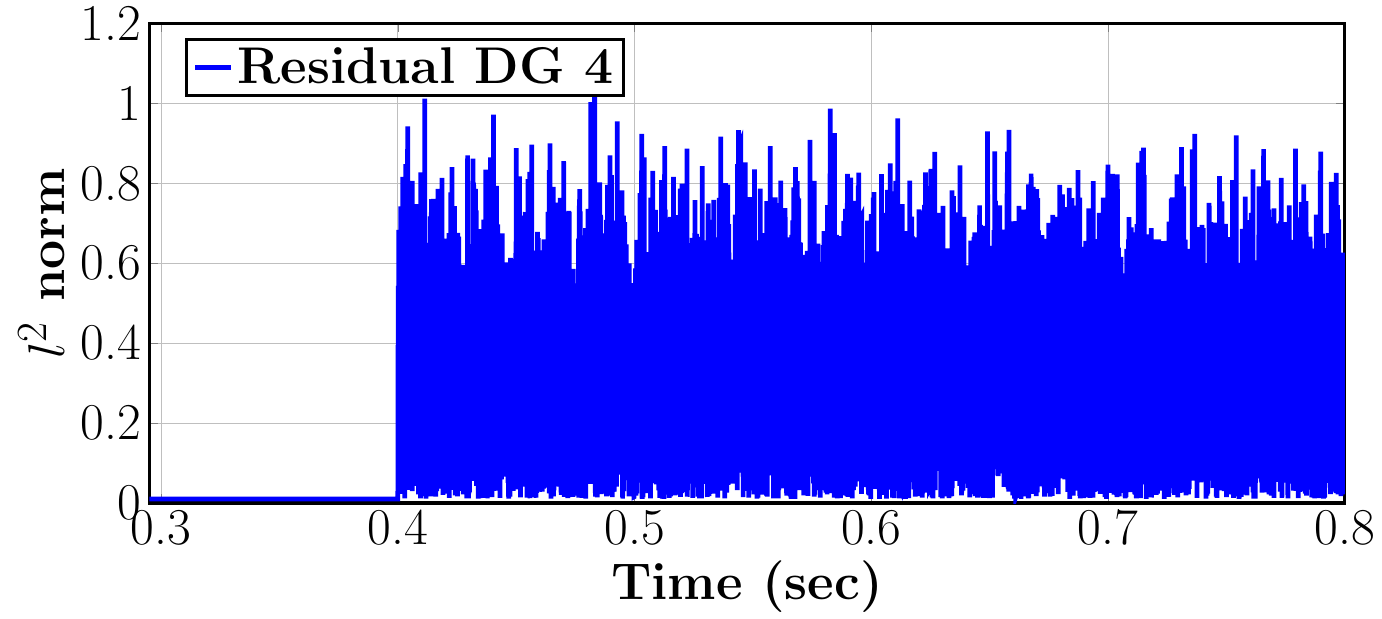}
        \label{fig:residual_FDI_inputs_DG4}
    } 
    \subfloat[]{
        \includegraphics[width=0.2315\textwidth]{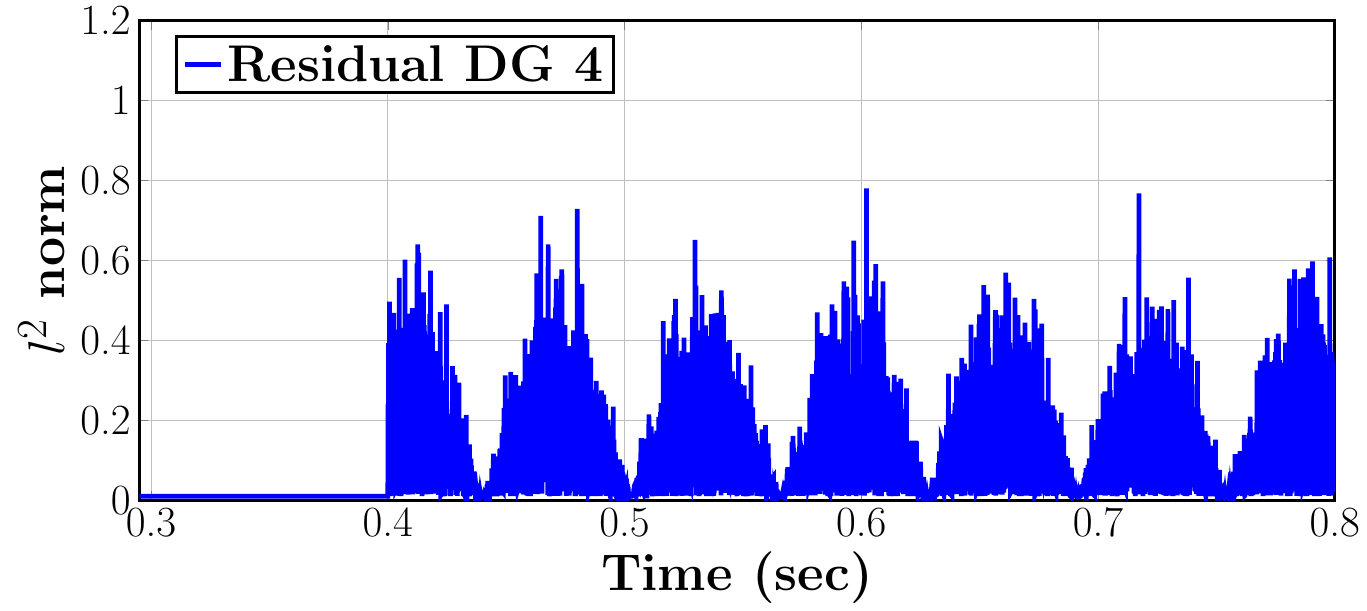}
        \label{fig:residual_sinusoidal_normal_inputs_DG4}
    } \\
   
\vspace{-1mm}  
\caption[CR]{${l}^2$ norm of the proposed residual under an arbitrary attack vector  
following for both voltage and frequency the attack configuration demonstrated in Fig. \ref{fig:attacks_inputs}.} 
\vspace{-2mm}
\label{fig:residual_random_attack_vector}
\end{figure}

\begin{figure}[t]
\centering
    \subfloat[]{
        \includegraphics[width=0.2315\textwidth]{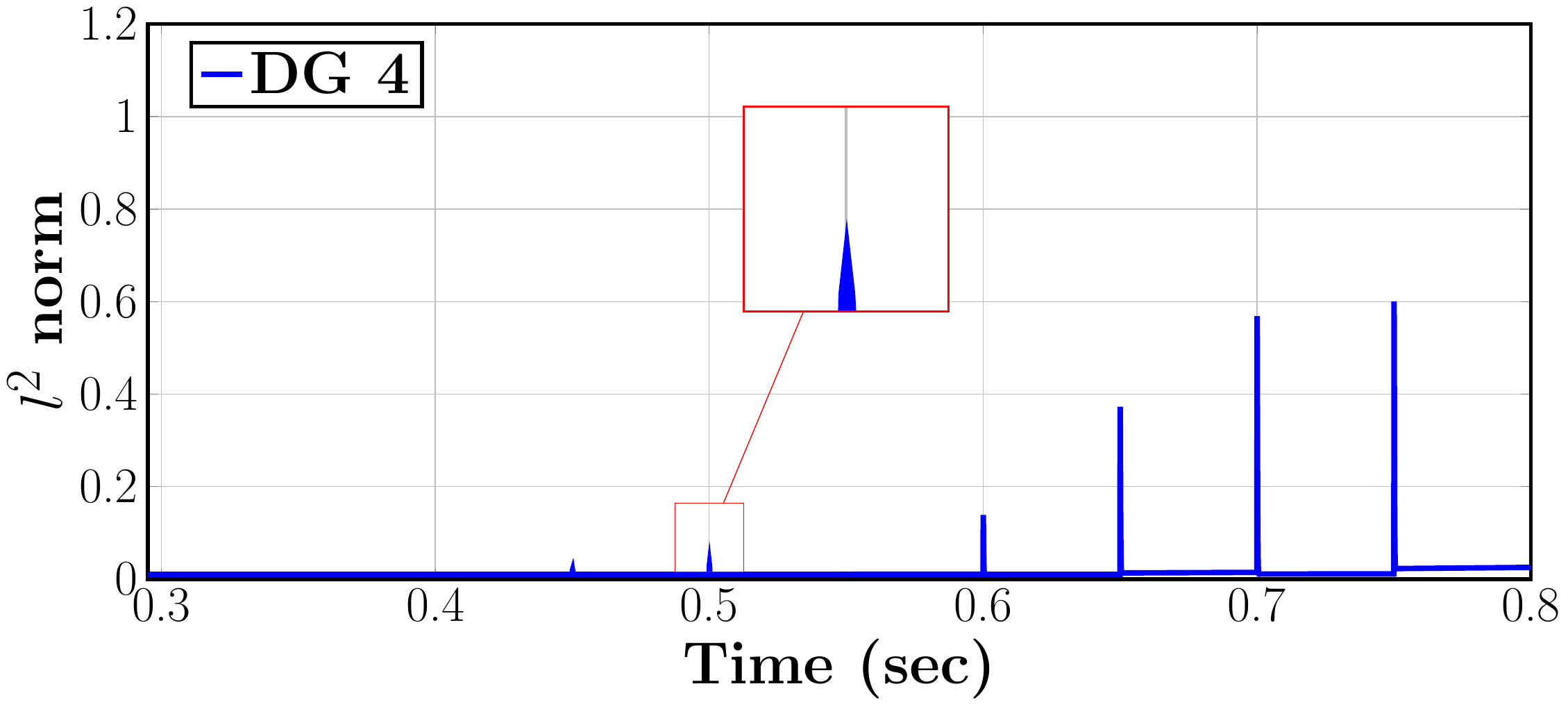}
        \label{fig:residual_stealthy_0.2}
    } 
    \subfloat[]{
        \includegraphics[width=0.2315\textwidth]{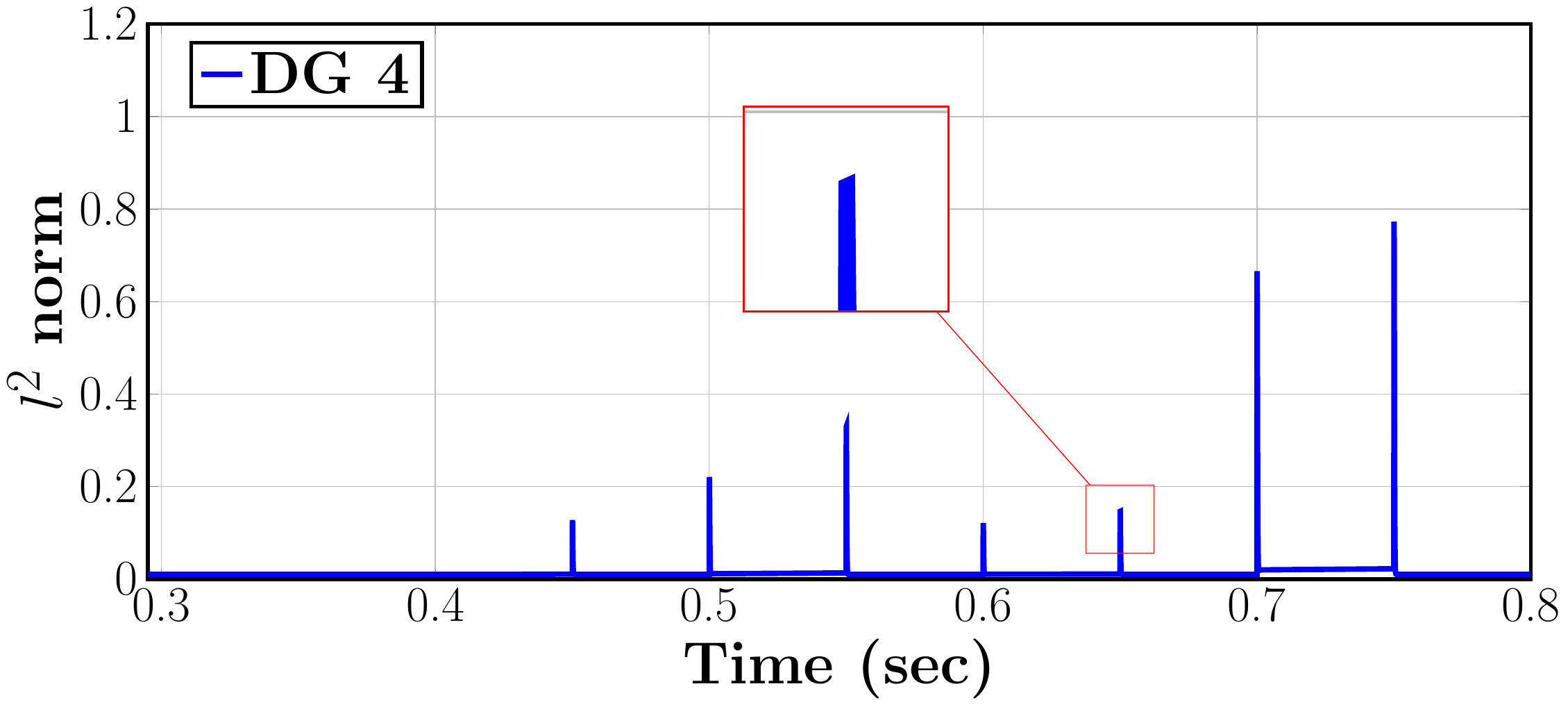}
        \label{fig:residual_stealthy_intermittent_0.2}
    } 
\\
\vspace{-2mm}    
\caption[CR]{${l}^2$ norm of the proposed residual under: (\subref{fig:residual_stealthy_0.2}) a stealthy only attack, and ((\subref{fig:residual_stealthy_intermittent_0.2})) a stealthy and intermittent attack, with $b_{i} = 0.2$. } 
\vspace{-3mm}
\label{fig:residuals_inputs_stealthy_attack_vector}
\end{figure}

Figs. \ref{fig:attacks_inputs}, \ref{fig:stealthy_attack_input}, \ref{fig:stealthy_intermittent_integrity_attack_input} present the simulation results under three different attack models: \textit{(a)} an arbitrary attack vector following a uniform and normal-sinusoidal distribution (Fig. \ref{fig:attacks_inputs}), \textit{(b)} an attack following the stealthy but not the intermittent character of the attack modeling of Section \ref{s:attackformulation} (Fig. \ref{fig:stealthy_attack_input}), and \textit{(c)} an attack following the complete proposed model of Section \ref{s:attackformulation} (Fig. \ref{fig:stealthy_intermittent_integrity_attack_input}). In Figs. \ref{fig:stealthy_attack_input} and \ref{fig:stealthy_intermittent_integrity_attack_input}, we selected empirically  within the attack function dynamics, $\beta_{i}(t- T_{o})$ in \eqref{attack interval 2}, as an attack evolution rate $b_{i}=0.2$. Increasing substantially the evolution rate of the attack will cause the attack to lead to much higher residual errors. On the other hand, lower values of $b_i$ will have minimal effect on the output voltage and frequency of the secondary controller, i.e., will minimize the attack model objectives. Overall, the attack vectors  intend to affect the system's voltage and frequency stability while tampering the pre-defined setpoints of the MG. It is clear that the constraints of the scheduling interval as well as the stealthiness of the attack slightly reduce  the effect on the secondary control output (especially on the frequency range). 

\vspace{-2mm}
\subsection{Residual-based Observer Comparison}


\begin{figure}[t]
\centering
    \subfloat[]{
        \includegraphics[width=0.2315\textwidth]{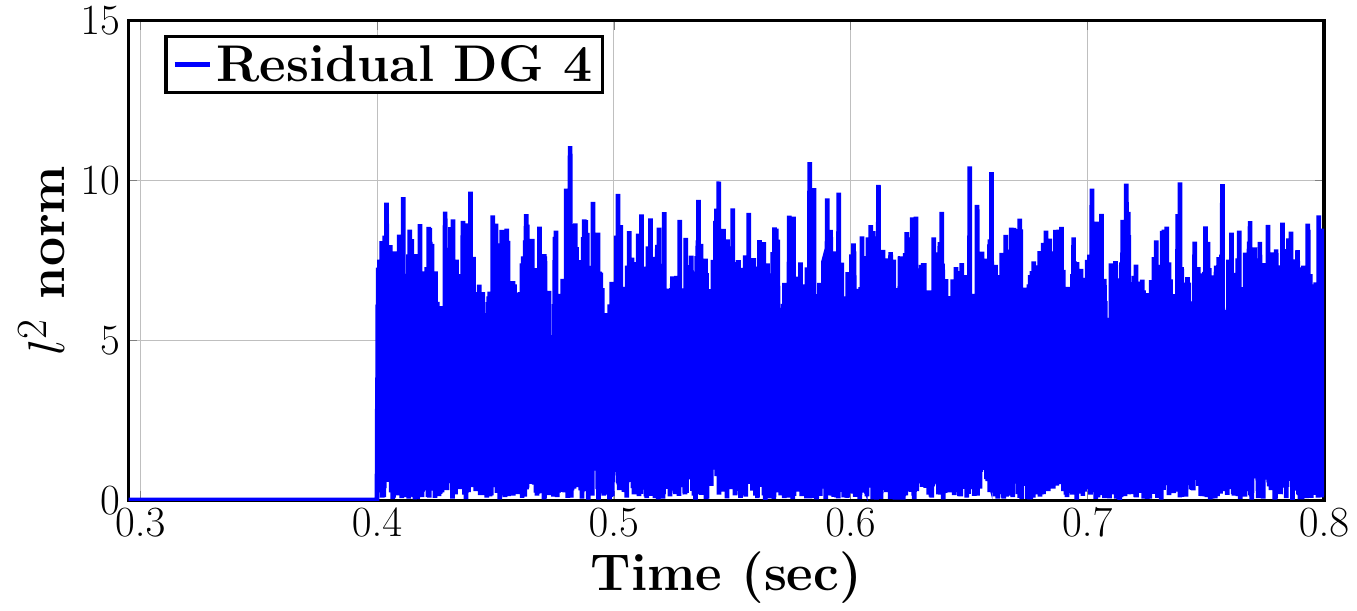}
        \label{fig:FDI_L2norm_simple_obs}
    } 
    \subfloat[]{
        \includegraphics[width=0.2315\textwidth]{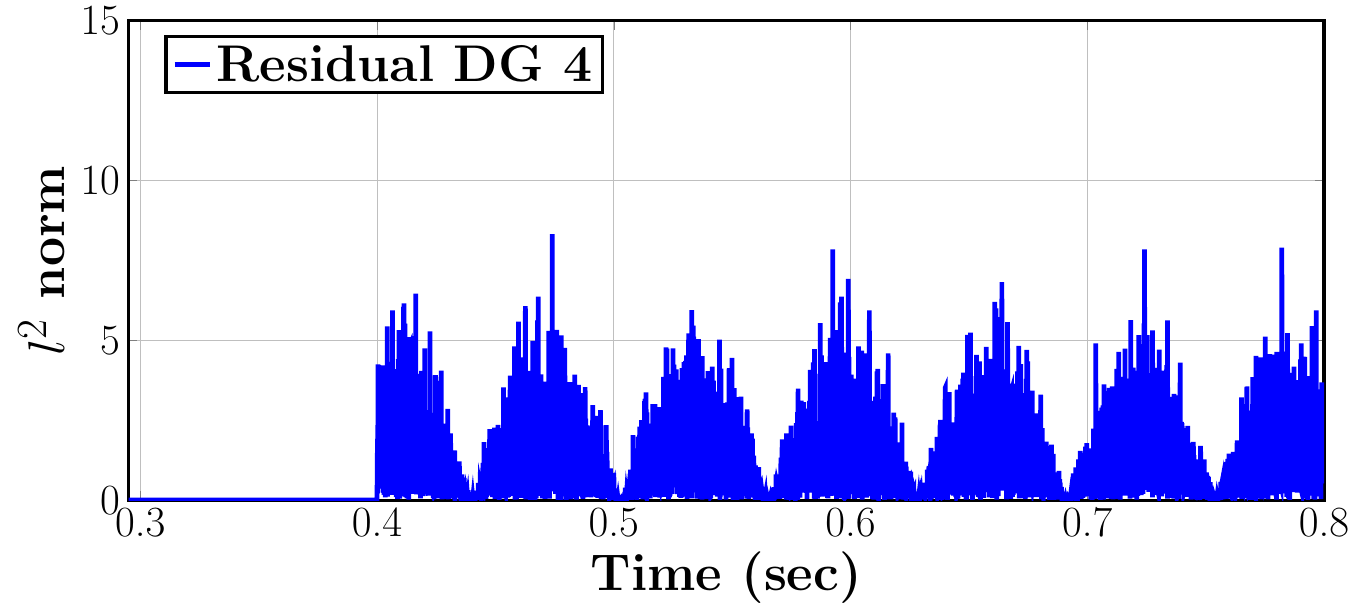}
        \label{fig:sinusoidal_L2norm_simple_obs}
    } \\
\vspace{-2mm}    
\caption[CR]{${l}^2$ norm of the residual acquired via the output injection observer (Remark \ref{remark_obs1}) under an arbitrary attack vector following for both voltage and frequency the attack configuration demonstrated in Fig. \ref{fig:attacks_inputs}.} 
\label{fig:l2norm_random_attack_simple_obs}
\end{figure}

\begin{figure}[t]
\vspace{-2mm}
\centering
    \subfloat[]{
        \includegraphics[width=0.2315\textwidth]{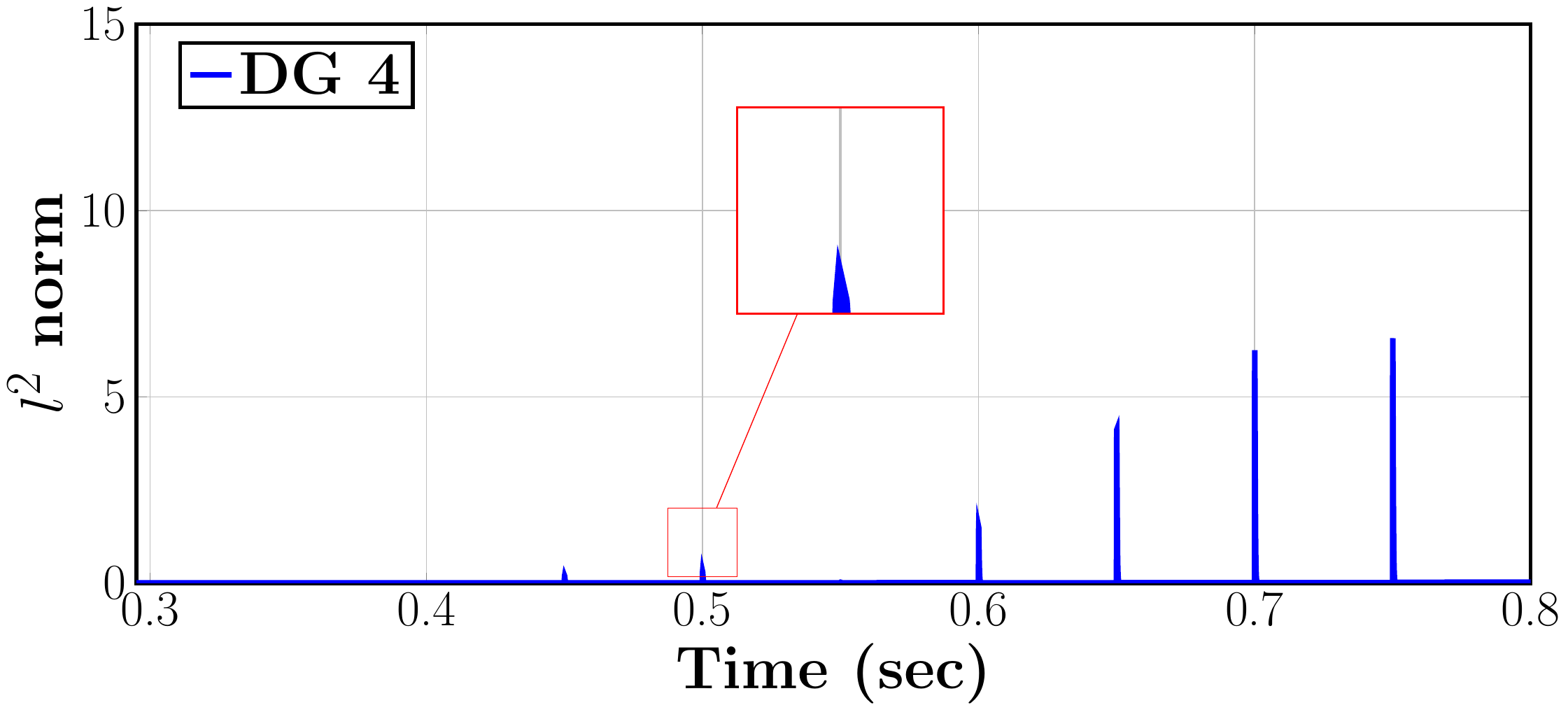}
        \label{fig:residual_stealthy_0.2_simple_observer}
    }
    \subfloat[]{
        \includegraphics[width=0.2315\textwidth]{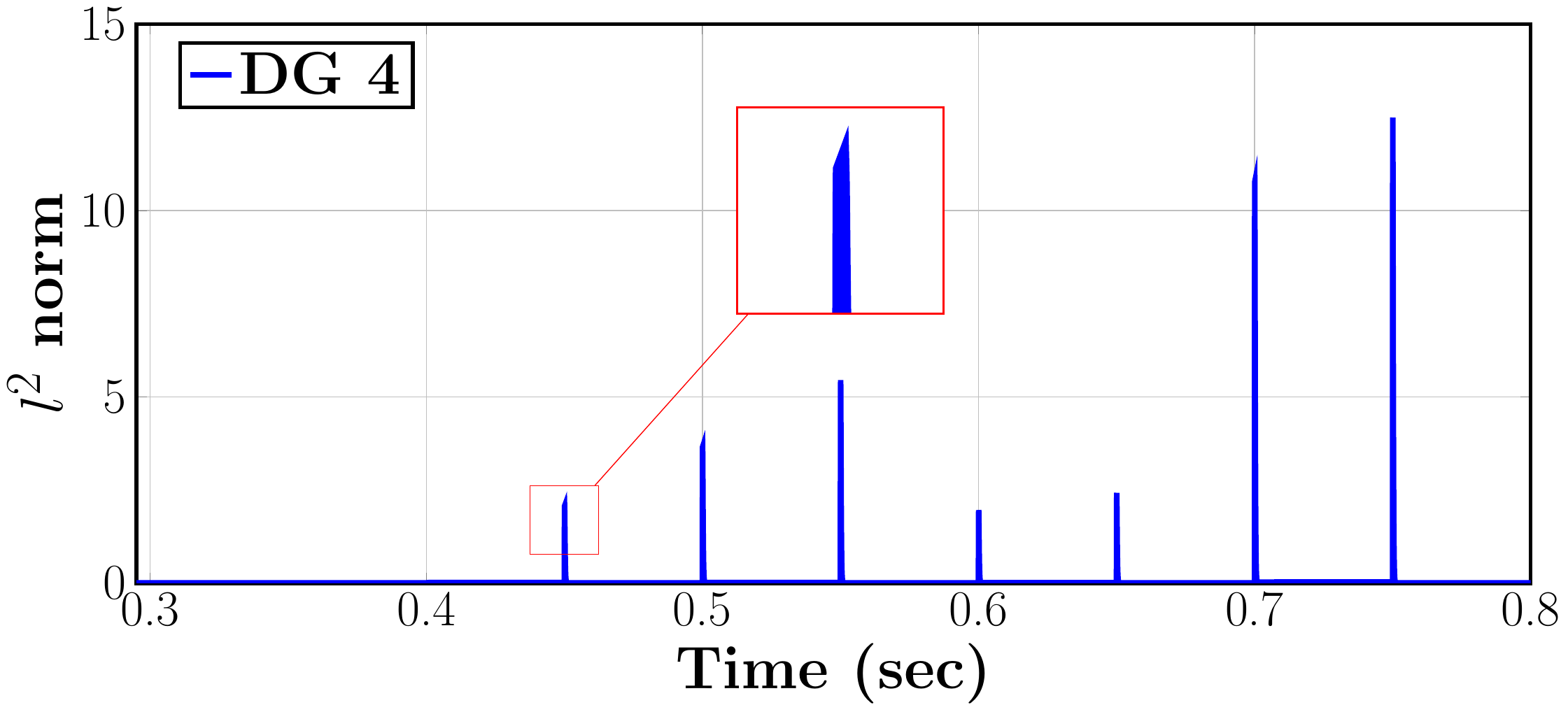}
        \label{fig:residual_stealthy_intermittent_0.2_simple_observer}
    } 
\\
\vspace{-2mm}    
\caption[CR]{${l}^2$ norm of the residual acquired via the output injection observer (Remark \ref{remark_obs1}) under: (\subref{fig:residual_stealthy_0.2_simple_observer}) a stealthy only attack, and ((\subref{fig:residual_stealthy_intermittent_0.2_simple_observer})) a stealthy and intermittent attack, with $b_{i} = 0.2$ as the attack evolution rate. }
\vspace{-3mm}
\label{fig:residuals_inputs_stealthy_attack_vector_simple_obs}
\end{figure}

The importance of observers relies on computing an error, described as the residual variable $\boldsymbol{r}$, presented in Theorem \ref{theorem_observer}. This error indicates that the MG is experimenting an anomalous behavior. The norm of the error vector, e.g., the Euclidean ${l}^2$ norm (i.e., $\|\boldsymbol{r}\|$), can be computed to measure the error. The error acts as an alarm for this work's residual-based detection approach.  This paper focuses on effectively detecting attacks, and leverages a residual-threshold based approach. Similar to presented literature, where corrupted system states could be recovered with a selected detection threshold, in our experiments we are able to correctly identify corrupted values with a similar empirically-selected maximum error threshold. The ${l}^2$ response of the residual is presented, without any mitigation strategy, for three set of results: \textit{(a)} the proposed residual-based observer (Theorem \ref{theorem_observer}) under an arbitrary attack vector of effects presented in Fig. \ref{fig:attacks_inputs} (Fig. \ref{fig:residual_random_attack_vector}), and \textit{(b)} the proposed residual-based observer (Theorem \ref{theorem_observer}) under the proposed stealthy and intermittent attack model of effects presented in Fig. \ref{fig:stealthy_intermittent_integrity_attack_input} (Fig. \ref{fig:residuals_inputs_stealthy_attack_vector}), \textit{(c)} an alternative residual-based observer from literature (Remark \ref{remark_obs1}) under an arbitrary attack vector of effects presented in Fig. \ref{fig:attacks_inputs} (Fig. \ref{fig:l2norm_random_attack_simple_obs}), and \textit{(d)} an alternative residual-based observer from literature (Remark \ref{remark_obs1}) under the proposed stealthy and intermittent attack model of effects presented in Fig. \ref{fig:stealthy_intermittent_integrity_attack_input} (Fig. \ref{fig:residuals_inputs_stealthy_attack_vector_simple_obs}). In all cases, the initial state of the observer is set equal to the measured state at the initial time (for the measurable state variables, while it is set to zero for all the other ones). 
From Figs. \ref{fig:residual_random_attack_vector}--\ref{fig:residuals_inputs_stealthy_attack_vector_simple_obs}, it can be observed that the DGs behavior captured by the residuals is sensitive to attack perturbations, especially when the error diverges more critically. In addition, the ${l}^2$ norm of the residual error of the proposed observer compared with the output injection observer (
Remark \ref{remark_obs1} - see Section \ref{ss:detectability-subsection}) indicates that the proposed one is more accurately estimating the system's state.

\vspace{-3mm}
\subsection{Mitigation Strategy}


\begin{figure}[t]
\centering
    \subfloat[]{
        \includegraphics[width=0.2315\textwidth]{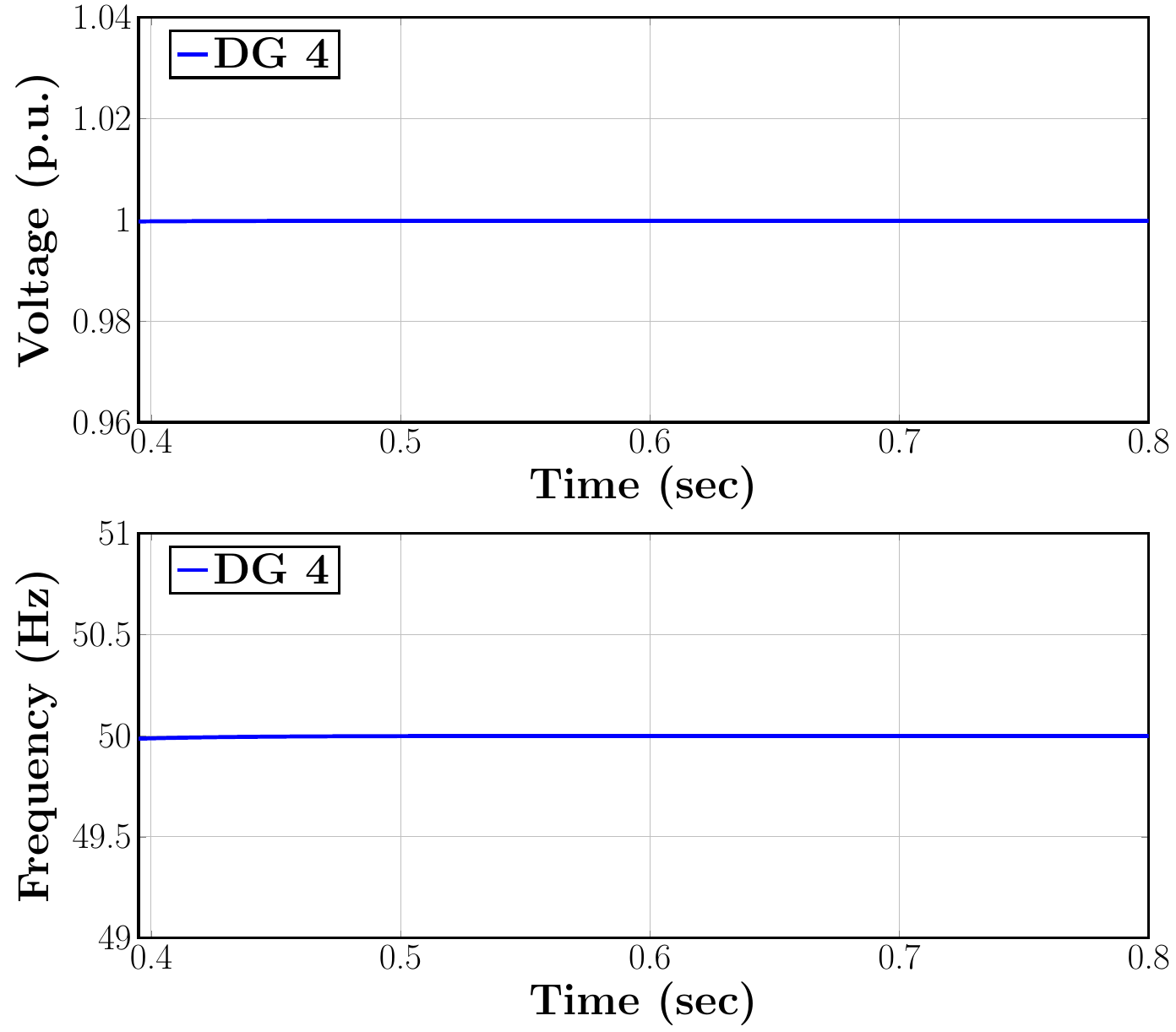}
        \label{fig:FDI_mitigation_input}
    }
    \subfloat[]{
        \includegraphics[width=0.2315\textwidth]{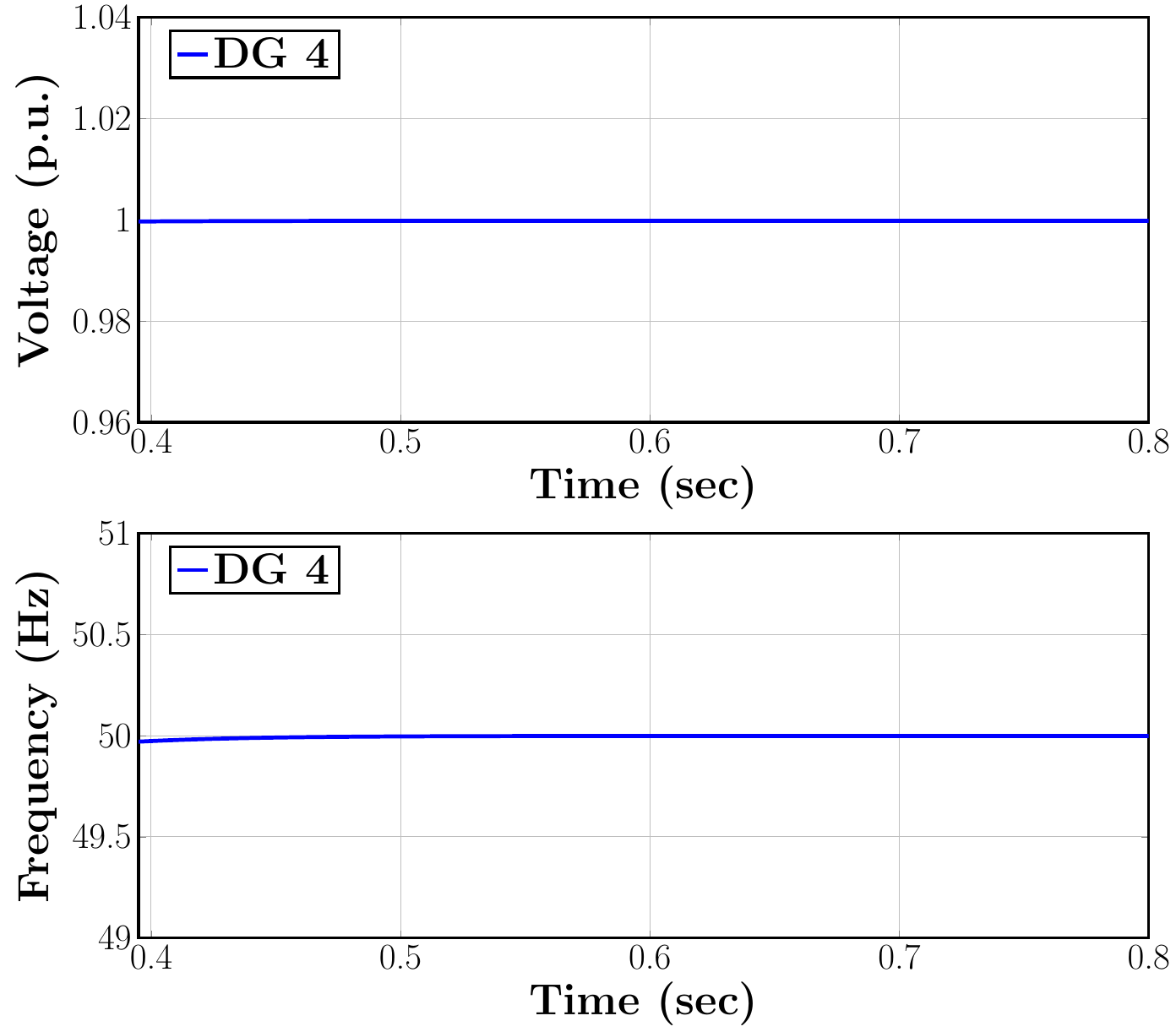}
        \label{fig:sinusoidal_mitigation_input}
    } \\ 
\vspace{-2mm}    
\caption[CR]{Detection and mitigation strategy applied at DG 4 under an arbitrary attack vector following for both voltage and frequency the attack configuration demonstrated in Fig. \ref{fig:attacks_inputs}.} 
\vspace{-3mm}
\label{fig:mitigation_inputs_random_attack_vector}
\end{figure}

\begin{figure}[t]
\centering
    \subfloat[]{
        \includegraphics[width=0.2315\textwidth]{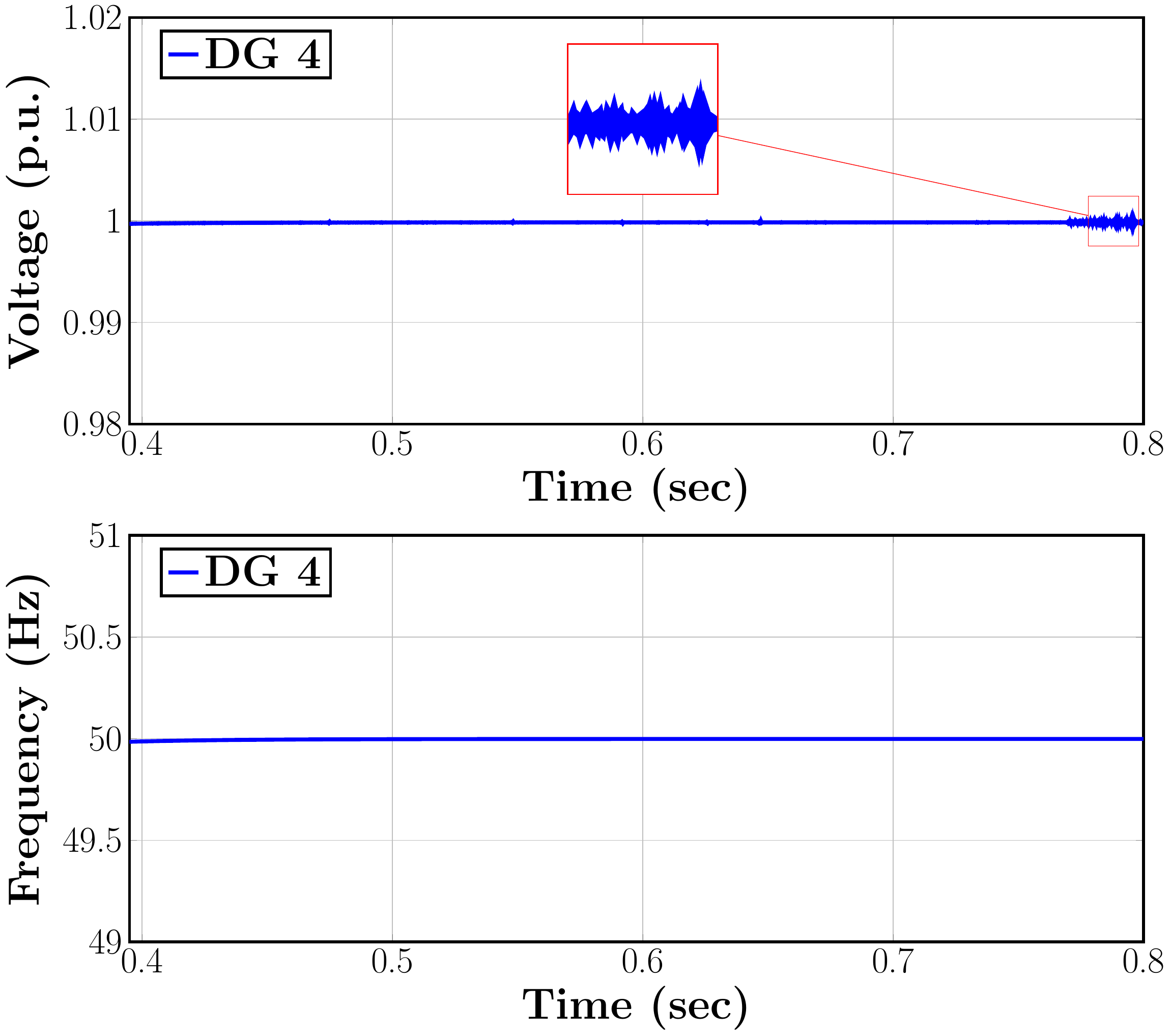}
        \label{fig:mitigation_stealthy_volt_0.2}
    } 
    \subfloat[]{
        \includegraphics[width=0.2315\textwidth]{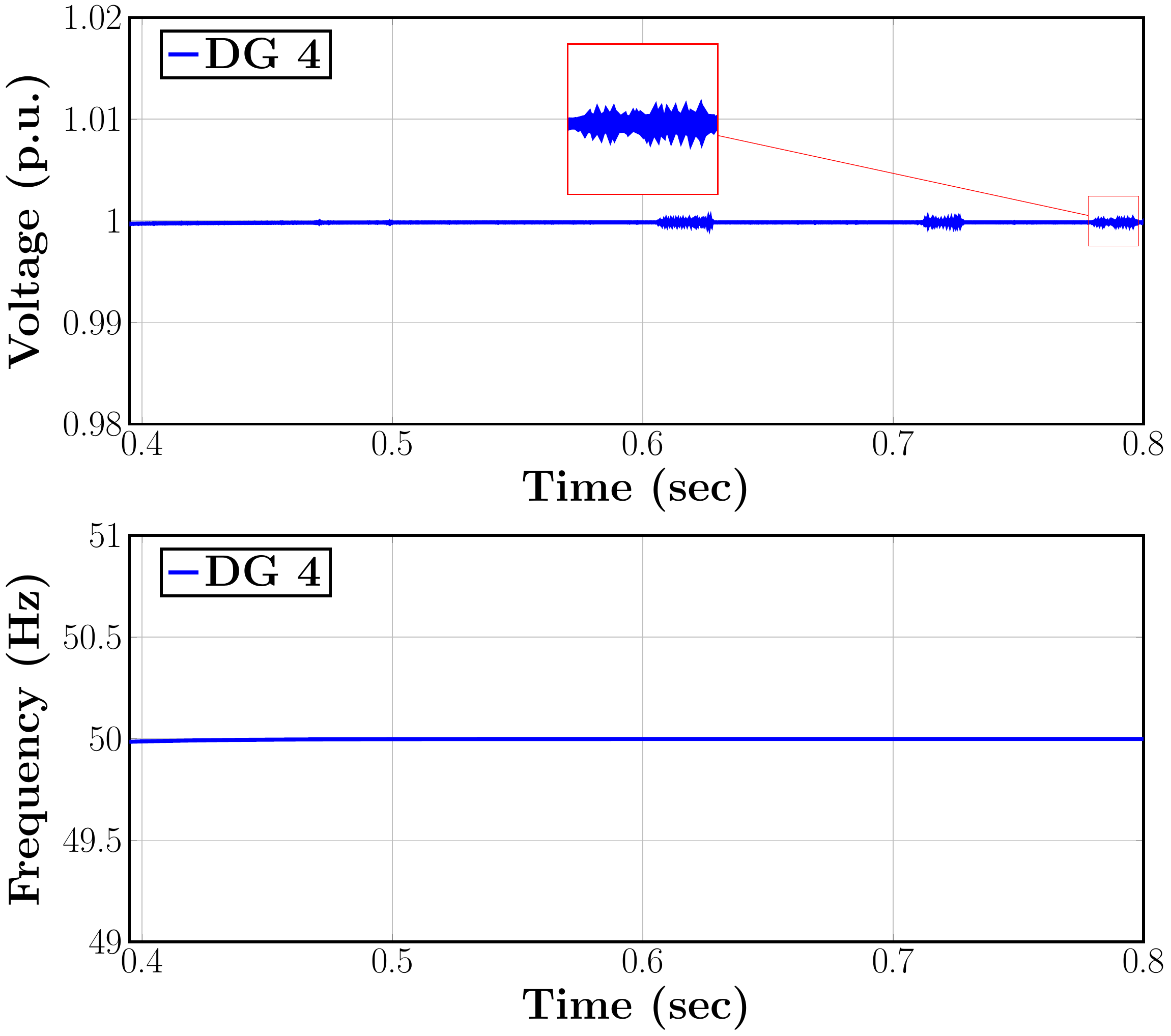}
        \label{fig:mitigation_stealthy_interm_volt_0.2}
    } 
\\
\vspace{-2mm}    
\caption[CR]{Detection and mitigation strategy applied at DG 4 under: (\subref{fig:mitigation_stealthy_volt_0.2}) a stealthy only attack, and (\subref{fig:mitigation_stealthy_interm_volt_0.2}) a stealthy and intermittent attack, with $b_{i} = 0.2$ as the attack evolution rate.} 
\vspace{-5mm}
\label{fig:mitigation_inputs_stealthy_attack_vector}
\end{figure}

\begin{figure*}[t]
\centering
    \subfloat[]{
        \includegraphics[width=0.235\textwidth]{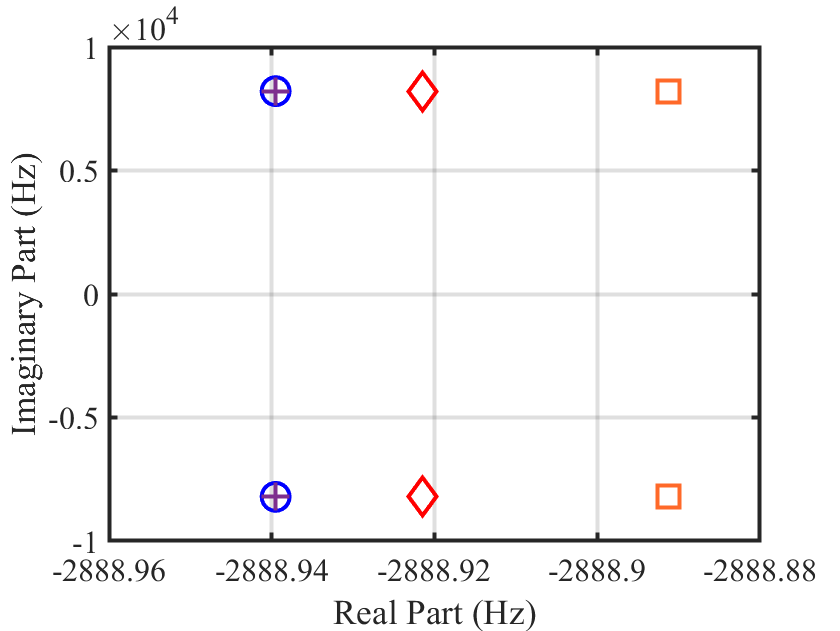}
        \label{fig:eigen1}
    }
    \subfloat[]{
        \includegraphics[width=0.235\textwidth]{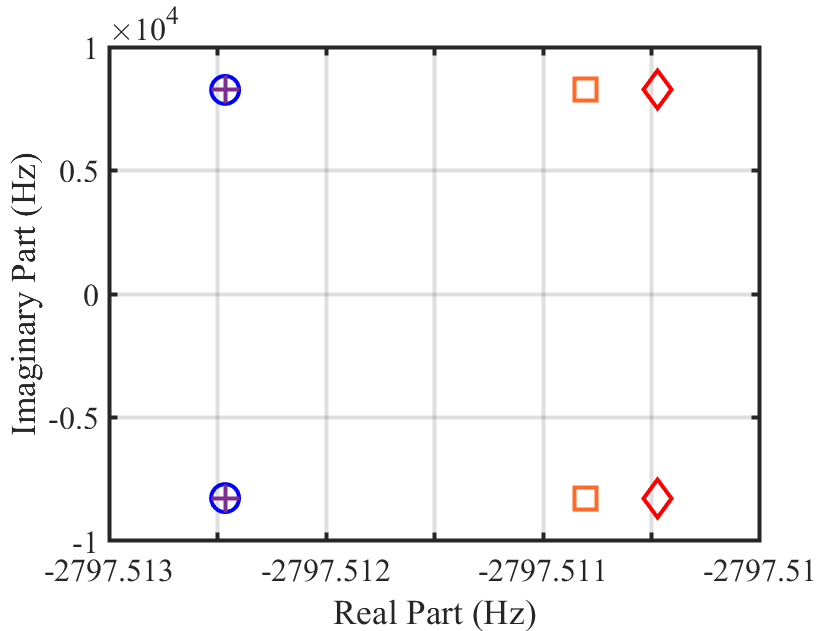}
        \label{fig:eigen2}
    } 
    \subfloat[]{
        \includegraphics[width=0.235\textwidth]{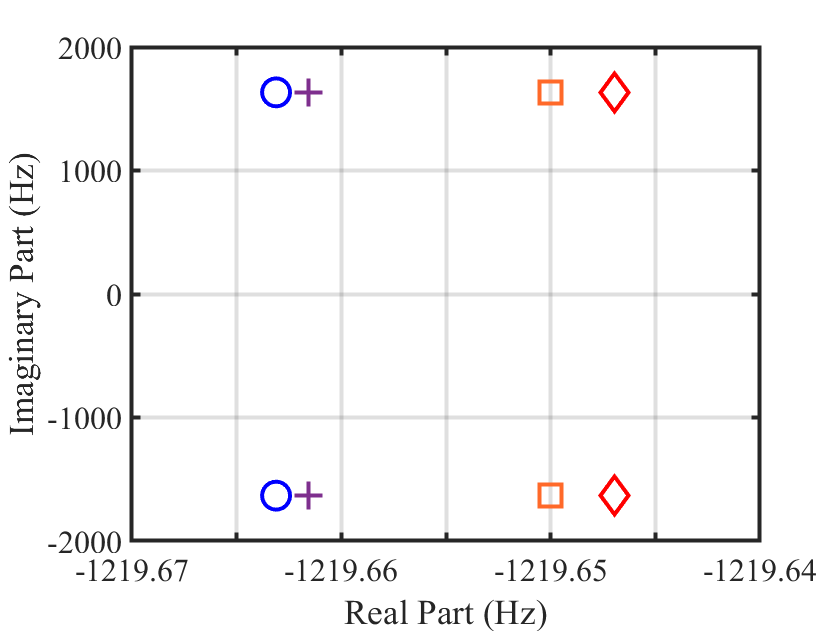}
        \label{fig:eigen3}
    } 
    \subfloat[]{
        \includegraphics[width=0.235\textwidth]{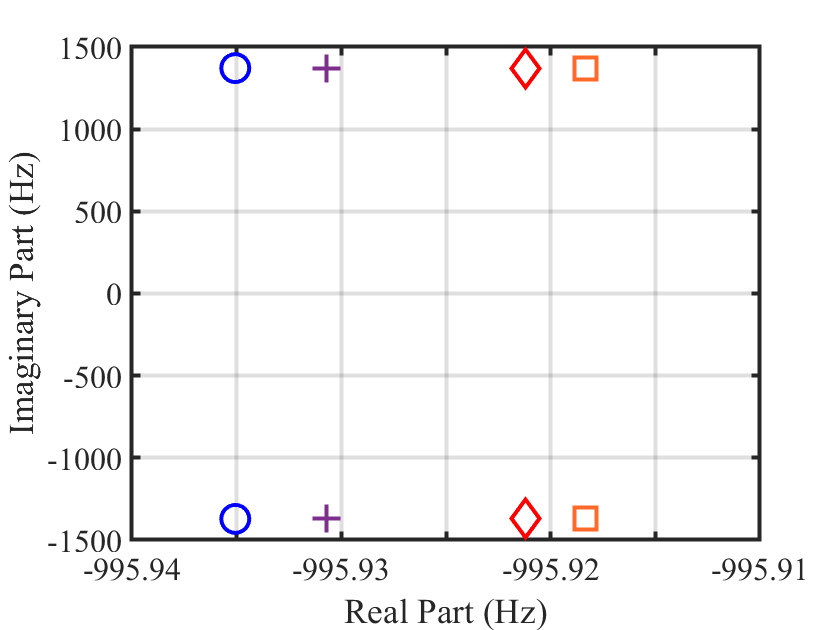}
        \label{fig:eigen4}
    }
    \\
    \subfloat[]{
        \includegraphics[width=0.235\textwidth]{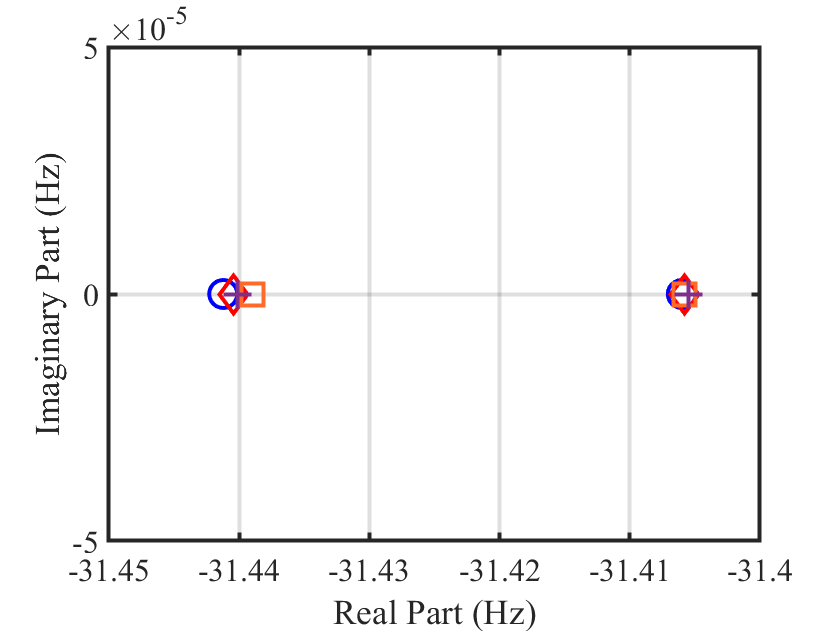}
        \label{fig:eigen5}
    }
    \subfloat[]{
        \includegraphics[width=0.235\textwidth]{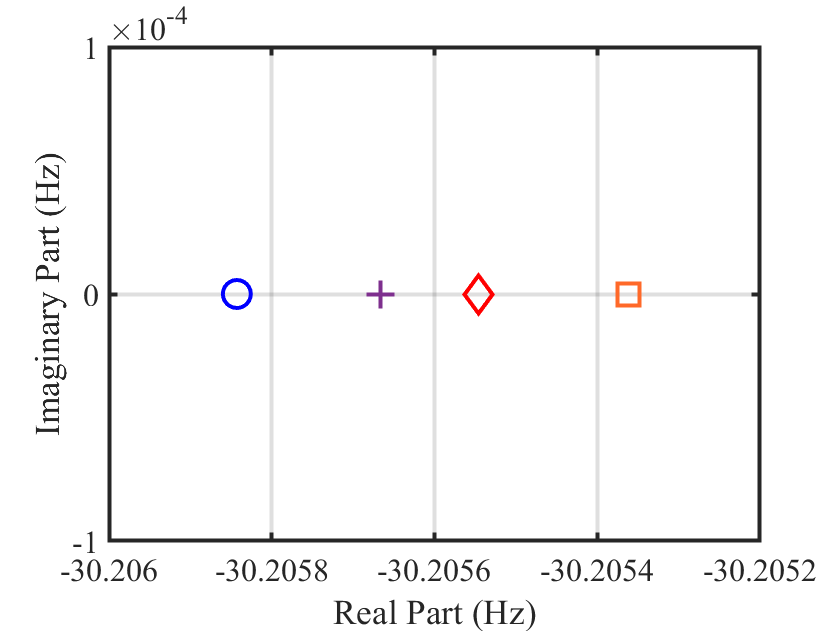}
        \label{fig:eigen6}
    }
    \subfloat[]{
        \includegraphics[width=0.235\textwidth]{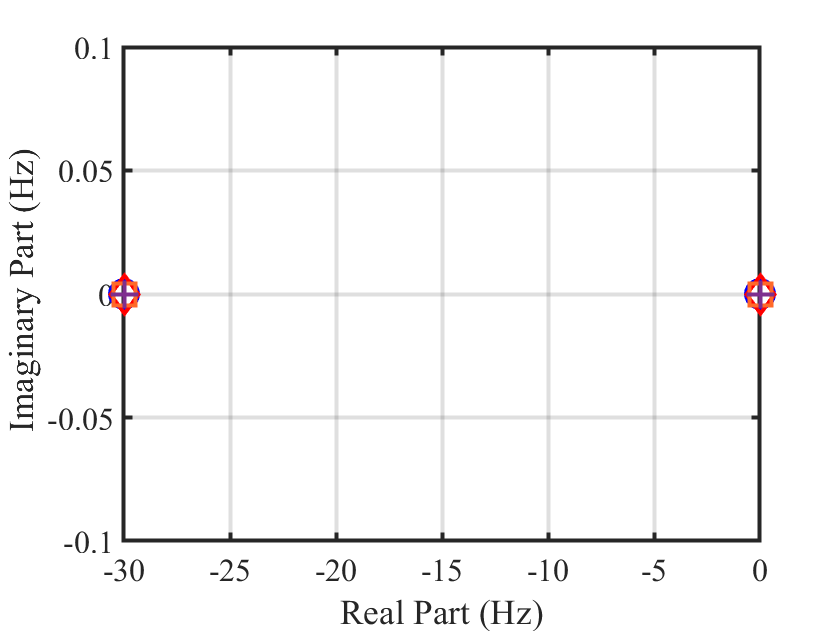}
        \label{fig:eigen7}
    }
    \subfloat[]{
        \includegraphics[width=0.235\textwidth]{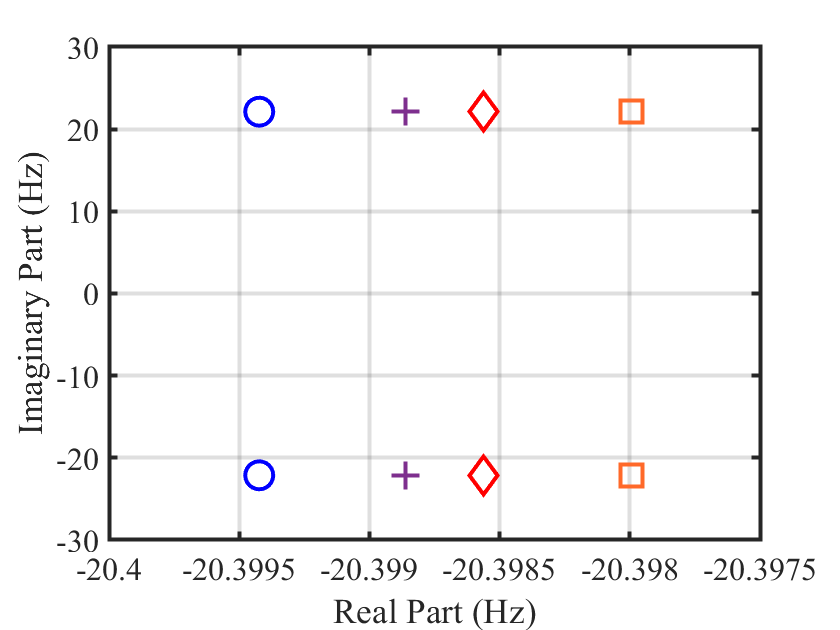}
        \label{fig:eigen8}
    }
    \\
\vspace{-2mm}    
\caption[CR]{MG eigenvalues under the:  \textcolor{blue}{$\circ$} attack-free case, \textcolor{red}{$\Diamond$} stealthy attack scenario, \textcolor{orange}{$\square$} stealthy-intermittent attack scenario, \textcolor{violet}{$+$} stability-constrained mitigation formulation is employed while the stealthy and intermittent attack is active. } 
\vspace{-2mm}
\label{fig:eigen}
\end{figure*}

In this subsection, the mitigation strategy is applied to deal with the attack scenarios. A threshold value is used as an alarm in case the DG inverter is experiencing abnormal operating values for frequency and voltage; in the time instants when one or more components of the residual go beyond the respective threshold $\tau_i$, the DG controllers will use the estimated state values from the observer, instead of the measured values that triggered the residual-based detection condition. Consequently, protection of the MG is effectively guaranteed.

Here, we present three sets of mitigation results following the proposed observer of our work: \textit{(a)} under the effect of the arbitrary attacks (Fig. \ref{fig:mitigation_inputs_random_attack_vector}), \textit{(b)}  under the proposed stealthy and intermittent attack (Fig. \ref{fig:mitigation_inputs_stealthy_attack_vector}), and \textit{(c)} under the proposed stealthy and intermittent attack while complying to the stability-constrained network modeling and overall formulation structure of Section \ref{s:bilevel} ({Fig. \ref{fig:eigen}}). 
According to IEEE 1547.4-2011 guide \cite{5960751}, the MG control strategy necessitates that the MG's voltage and frequency deviations meet the standards acceptable to all parties involved. The MG's ability to maintain a narrow frequency range will determine its effectiveness in load following. This is clearly demonstrated by the results in Figs. \ref{fig:mitigation_inputs_random_attack_vector}--\ref{fig:eigen}. Specifically, after the residual-based observer mitigation is implemented in Figs. \ref{fig:mitigation_inputs_random_attack_vector}--\ref{fig:mitigation_inputs_stealthy_attack_vector}, the effect on both the voltage and frequency for all of the presented scenarios can be considered negligible. The results confirm that the mitigation strategy effectively counters the impact of the attacks, even in the stealthy and intermittent attack case. Furthermore, Fig. \ref{fig:eigen} illustrates the eigenvalues of the MG obtained under four scenarios: the attack-free case, the stealthy attack scenario, the stealthy and intermittent attack scenario, and  after the stability-constrained mitigation formulation is employed while the stealthy and intermittent attack is active. We observe that four of the eigenvalues remain stationary, however, the remaining eleven, in the attack cases, they are  located closer to the imaginary axis which can potentially cause instabilities in the MG in response to load perturbations or other disruptive events. In contrast, when stability and network constraints are employed during the stealthy and intermittent attack, the eigenvalues move  further away from the imaginary axis, providing a better stability margin.
\vspace{-3mm}
\section{Conclusions}\label{s:conclusion}
\vspace{-1mm}
This paper presents a detection and mitigation strategy for the operation of islanded MGs under disruptive and stealthy attacks. It depicts the consequences of such attacks at the output of the secondary controller. The work further develops a theoretical formulation of a detection strategy based on the analysis of the residual error between the actual state of the DGs, and the estimated one, as provided by a nonlinear observer. {Finally, the paper considers both network and stability constraints to ensure secure MG operation, while it} provides simulation-based experiments to demonstrate the effectiveness of the proposed mitigation strategy. The results demonstrate that unexpected disturbances can affect the reference  of each DG provided by the secondary controller. However, an accurate mitigation strategy can significantly increase the effort required by the adversary to affect MG operation. 

\vspace{-3mm}
\section*{Appendix}

The matrix $\boldsymbol{M}^{\mathrm{p}}$ is defined as:
\begin{equation}\label{eq:Mmatrix}
\small
  \boldsymbol{M}^{\mathrm{p}}=\left[\begin{array}{ccccc}
-m_{P_1} & m_{P_2} & 0 & \cdots & 0 \\
-m_{P_1} & 0 & m_{P_3} & \cdots & 0 \\
\vdots & \vdots & \vdots & \ddots & 0 \\
-m_{P_1} & 0 & 0 & \cdots & m_{P_{n^g}} 
\end{array}\right]  
\end{equation}where each $m_{P_i}$ indicates the droop constant of the $i_{th}$ inverter, $\forall i \in \mathcal{G}$.

\begin{definition}[Convexity]
A set $\mathrm{P} \subseteq \mathbb{R}^{\mathrm{d}}$ is convex iff $\sum_{i=1}^n \lambda_i p_i \in \mathrm{P}$, for all $n \in \mathbb{N}$, $p_1, \ldots, p_n \in P$, and $\lambda_1, \ldots, \lambda_n \geqslant 0$ with $\sum_{i=1}^n \lambda_i=1$  
\end{definition}

\begin{definition}[Convex Hull]
The convex hull $\operatorname{convh}(\mathrm{P})$ of a set $\mathrm{P} \subseteq \mathbb{R}^{\mathrm{d}}$ is the intersection of all convex supersets of $\mathrm{P}$.
\end{definition}

\begin{figure*}[t]
 \begin{equation} \label{eq:df-dx}
     \small
    \frac{\partial \mathbf{\tilde{f}}}{\partial \boldsymbol{x}}=\left[\begin{array}{ccc}-\left[\boldsymbol{\omega}_{b}\right] & -\left[\boldsymbol{\omega}_{b} \cdot \boldsymbol{n}_Q \cdot\left(\boldsymbol{i}_{\mathrm{oq}} \cdot \sin \boldsymbol{\delta}+\boldsymbol{i}_{\mathrm{od}} \cdot \cos \boldsymbol{\delta}\right)\right] & -\left[\boldsymbol{\omega}_{b} \cdot\left(\boldsymbol{v}_{\mathrm{ref}}-\boldsymbol{n}_Q \cdot \boldsymbol{q}\right) \cdot\left(\boldsymbol{i}_{\mathrm{od}} \cdot \sin \boldsymbol{\delta}-\boldsymbol{i}_{\mathrm{oq}} \cdot \cos \boldsymbol{\delta}\right)\right] \\ 0^{n^{\mathrm{g}}} & {\left[\boldsymbol{\omega}_{b} \cdot \boldsymbol{n}_{Q} \cdot\left(\boldsymbol{i}_{\mathrm{oq}} \cdot \cos \boldsymbol{\delta}-\boldsymbol{i}_{\mathrm{od}} \cdot \sin \boldsymbol{\delta}\right)-\boldsymbol{\omega}_{b}\right]} & {\left[\boldsymbol{\omega}_{b} \cdot\left(\boldsymbol{v}_{\mathrm{ref}}-\boldsymbol{n}_{Q} \cdot \boldsymbol{q}\right) \cdot\left(\boldsymbol{i}_{\mathrm{od}} \cdot \cos \boldsymbol{\delta}+\boldsymbol{i}_{\mathrm{oq}} \cdot \sin \boldsymbol{\delta}\right)\right]} \\ \boldsymbol{M}^{\mathrm{p}} & 0^{n^{\mathrm{g}}-1} & 0^{n^{\mathrm{g}}-1}\end{array}\right]
 \end{equation}
    \begin{equation}\label{eq:df-dz} 
        \small
\frac{\partial \mathbf{\tilde{f}}}{\partial \boldsymbol{z}}=\left[\begin{array}{cc}{\left[\boldsymbol{\omega}_{b} \cdot\left(\boldsymbol{v}_{\mathrm{ref}}-\boldsymbol{n}_{Q} \cdot \boldsymbol{q}\right) \cdot \cos \delta\right]} & {\left[\boldsymbol{\omega}_{b} \cdot\left(\boldsymbol{v}_{\mathrm{ref}}-\boldsymbol{n}_{Q} \cdot \boldsymbol{q}\right) \cdot \sin \delta\right]} \\ {\left[\boldsymbol{\omega}_{b} \cdot\left(\boldsymbol{v}_{\mathrm{ref}}-\boldsymbol{n}_{Q} \cdot \boldsymbol{q}\right) \cdot \sin \delta\right]} & -\left[\boldsymbol{\omega}_{b} \cdot\left(\boldsymbol{v}_{\mathrm{ref}}-\boldsymbol{n}_{Q} \cdot \boldsymbol{q}\right) \cdot \cos \delta\right] \\ 0^{n \mathrm{~g}-1} & 0^{n \mathrm{~g}-1}\end{array}\right]
    \end{equation}
\begin{equation}\label{eq:dg-dx}
        {\small
 \frac{\partial \mathbf{\tilde{g}}}{\partial \boldsymbol{x}}=\left[\begin{array}{llll}0^{n_{\mathrm{g}}} & \left(\check{\boldsymbol{G}} \cdot\left[\boldsymbol{n}_{Q} \cdot \cos \boldsymbol{\delta}\right]^{n^{\mathrm{g}}}-\check{\boldsymbol{B}} \cdot\left[\boldsymbol{n}_{Q} \cdot \sin \boldsymbol{\delta}\right]^{n^{\mathrm{g}}}\right) & \check{\boldsymbol{G}} \cdot\left[\left(\boldsymbol{v}_{\mathrm{ref}}-\boldsymbol{n}_{Q} \cdot \boldsymbol{q}\right) \cdot \sin \boldsymbol{\delta}\right]^{n^{\mathrm{g}}}+\check{\boldsymbol{B}}\left[\left(\boldsymbol{v}_{\mathrm{ref}}-\boldsymbol{n}_{Q} \cdot \boldsymbol{q}\right) \cdot \cos \boldsymbol{\delta}\right]^{n^{\mathrm{g}}} \\ 0^{n_{\mathrm{g}}} & \left(\check{\boldsymbol{G}} \cdot\left[\boldsymbol{n}_{Q} \cdot \sin \boldsymbol{\delta}\right]^{n^{\mathrm{g}}}+\check{\boldsymbol{B}} \cdot\left[\boldsymbol{n}_{Q} \cdot \cos \boldsymbol{\delta}\right]^{n^{\mathrm{g}}}\right) & \check{\boldsymbol{B}} \cdot\left[\left(\boldsymbol{v}_{\mathrm{ref}}-\boldsymbol{n}_{Q} \cdot \boldsymbol{q}\right) \cdot \sin \boldsymbol{\delta}\right]^{n^{\mathrm{g}}}-\check{\boldsymbol{G}}\left[\left(\boldsymbol{v}_{\mathrm{ref}}-\boldsymbol{n}_{Q} \cdot \boldsymbol{q}\right) \cdot \cos \boldsymbol{\delta}\right]^{n^{\mathrm{g}}}\end{array}\right]   }
    \end{equation}

\begin{equation}\label{eq:dg-dz} {\small
 \frac{\partial \mathbf{\tilde{g}}}{\partial \boldsymbol{z}}=\left[\begin{array}{cc}I^{n^{\mathrm{g}}}+\check{\boldsymbol{G}} \cdot\left[\boldsymbol{r}_{\boldsymbol{c}}\right]^{n^{\mathrm{g}}}-\check{\boldsymbol{B}} \cdot\left[\boldsymbol{x}_{\boldsymbol{c}}\right]^{\mathrm{g}} & -\check{\boldsymbol{G}} \cdot\left[\boldsymbol{x}_{\boldsymbol{c}}\right]^{n^{\mathrm{g}}}-\check{\boldsymbol{B}} \cdot\left[\boldsymbol{r}_{\boldsymbol{c}}\right]^{n^{\mathrm{g}}} \\ \check{\boldsymbol{B}} \cdot\left[\boldsymbol{r}_{\boldsymbol{c}}\right]^{n^{\mathrm{g}}}+\check{\boldsymbol{G}} \cdot\left[\boldsymbol{x}_{\boldsymbol{c}}\right]^{n^{\mathrm{g}}} & I^{n^{\mathrm{g}}}+\check{\boldsymbol{G}} \cdot\left[\boldsymbol{r}_{\boldsymbol{c}}\right]^{n^{\mathrm{g}}}-\check{\boldsymbol{B}} \cdot\left[\boldsymbol{x}_{\boldsymbol{c}}\right]^{n^{\mathrm{g}}}\end{array}\right] }
\end{equation}
\end{figure*}

\vspace{-5mm}
\bibliographystyle{IEEEtran} 
\bibliography{biblio}
\end{document}